\documentclass{article}
\usepackage[utf8]{inputenc}
\usepackage{amsmath, amsthm, amssymb}
\usepackage{enumerate}
\usepackage{authblk}
\usepackage{thmtools}
\usepackage{mathrsfs}
\usepackage{thm-restate}
\usepackage{soul}
\usepackage{xcolor}
\usepackage{tikz}
\usetikzlibrary{matrix,decorations.pathreplacing, calc, positioning,fit,shapes.geometric}
\usepackage{fullpage}
\usepackage{thm-restate}
\usepackage{hyperref}

\theoremstyle{definition}
\newtheorem{theorem}{Theorem}[section]

\newtheorem{corollary}[theorem]{Corollary}

\newtheorem{lemma}[theorem]{Lemma}
\newtheorem{remark}[theorem]{Remark}

\newtheorem{observation}[theorem]{Observation}

\newtheorem{conjecture}[theorem]{Conjecture}
\newtheorem{claim}[theorem]{Claim}
\newtheorem{question}[theorem]{Question}
\theoremstyle{remark}

\usepackage{comment}

\def\A{\mathcal{A}} 
 %
 %
 %
 %

\newcommand{\NN}{\mathbb{N}} 
\def\S{\mathcal{S}} 

\usetikzlibrary{intersections}

\usetikzlibrary{patterns}
\usetikzlibrary{patterns.meta}
\usetikzlibrary{calc}
\def\centerarc[#1](#2)(#3:#4:#5)
    { \draw[#1] ($(#2)+({#5*cos(#3)},{#5*sin(#3)})$) arc (#3:#4:#5) }
\def\nodearc(#1)(#2)(#3:#4)[#5](#6)
    { \node (#1) at ($(#2)+({#4*cos(#3)},{#4*sin(#3)})$) [#5] {#6}}
\def\nodeellipse(#1)(#2)(#3:#4:#5)[#6](#7)
    { \node (#1) at ($(#2)+({#4*cos(#3)},{#5*sin(#3)})$) [#6] {#7}}

\pgfdeclarepattern{
  name=hatch,
  parameters={\hatchsize,\hatchangle,\hatchlinewidth},
  bottom left={\pgfpoint{-.1pt}{-.1pt}},
  top right={\pgfpoint{\hatchsize+.1pt}{\hatchsize+.1pt}},
  tile size={\pgfpoint{\hatchsize}{\hatchsize}},
  tile transformation={\pgftransformrotate{\hatchangle}},
  code={
    \pgfsetlinewidth{\hatchlinewidth}
    \pgfpathmoveto{\pgfpoint{-.1pt}{-.1pt}}
    \pgfpathlineto{\pgfpoint{\hatchsize+.1pt}{\hatchsize+.1pt}}
    \pgfusepath{stroke}
  }
}

\tikzset{
  hatch size/.store in=\hatchsize,
  hatch angle/.store in=\hatchangle,
  hatch line width/.store in=\hatchlinewidth,
  hatch size=5pt,
  hatch angle=0pt,
  hatch line width=.5pt,
}

\usepackage{graphicx} 

\title{Reconfiguration of plane trees in convex geometric graphs\footnote{The authors are supported by ANR project GrR (ANR-18-CE40-0032).}}
\author{Nicolas Bousquet, Lucas De Meyer, Théo Pierron, Alexandra Wesolek}
\affil{Université de Lyon, LIRIS, CNRS, Université Claude Bernard Lyon 1}
\date{\today}

\begin{document}

\maketitle

\begin{abstract}
A non-crossing spanning tree of a set of points in the plane is a spanning tree whose edges pairwise do not cross. Avis and Fukuda in 1996 proved that there always exists a flip sequence of length at most $2n-4$ between any pair of non-crossing spanning trees (where $n$ denotes the number of points). Hernando et al. proved that the length of a minimal flip sequence can be of length at least $\frac 32 n$.
Two recent results of Aichholzer et al. and Bousquet et al. improved the Avis and Fukuda upper bound by proving that there always exists a flip sequence of length respectively at most $2n - \log n$ and $2n - \sqrt{n}$.

We improve the upper bound by a linear factor for the first time in 25 years by proving that there always exists a flip sequence between any pair of non-crossing spanning trees $T_1,T_2$ of length at most $c n$ where $c \approx 1.95$.  Our result is actually stronger since  we prove that, for any two trees $T_1,T_2$, there exists a flip sequence from $T_1$ to $T_2$ of length at most $c |T_1 \setminus T_2|$.

We also improve the best lower bound in terms of the symmetric difference by proving that there exists a pair of trees $T_1,T_2$ such that a minimal flip sequence has length $\frac 53 |T_1 \setminus T_2|$, improving the lower bound of Hernando et al. by considering the symmetric difference instead of the number of vertices.

We generalize this lower bound construction to non-crossing flips (where we close the gap between upper and lower bounds) and rotations.
\end{abstract}

\section{Introduction}
Let $C$ be a set of $n$ points in the plane in convex position. A \emph{spanning tree} $T$ on the set of points $C$ is a subset of edges that forms a connected acyclic graph on $C$. A spanning tree $T$ on $C$ is \emph{non-crossing} if every pair of edges of $T$ (represented by the straight line interval between their endpoints) are pairwise non-crossing.

Let us denote by $\mathcal{S}(C)$ the set of all non-crossing spanning trees on the point set $C$. Let $T\in \mathcal{S}(C)$.
A \emph{flip} on $T$ consists of removing an edge $e$ from $T$ and adding another edge $f$ so that the resulting graph $(T \cup f) \setminus e$ is also a spanning tree. A \emph{flip sequence} is a sequence of non-crossing spanning trees such that consecutive spanning trees in the sequence differ by exactly one flip. 
Equivalently, one can define the \emph{configuration graph} on the vertex set $\mathcal{S}(C)$ where two trees $T,T'$ are adjacent if they differ in exactly one edge (that is $|T \setminus T'|=|T'\setminus T|=1$). A (minimal) flip sequence is a (shortest) path in the configuration graph.

\subsection{Flips between non-crossing spanning trees.}

Avis and Fukuda~\cite{avis1996} proved that there always exists a flip sequence between any pair of non-crossing spanning trees of length at most $2n-4$ by showing that there is a star\footnote{A \emph{star} is a spanning tree with at most one vertex of degree at least $2$.} $S$ on $C$ such that $T_1$ and $T_2$ can be turned into $S$ with at most $n-2$ flips. In fact, they showed that this flip sequence exists even if the point set $C$ is in general  position.

Given two spanning trees $T_1,T_2$, the \emph{symmetric difference} between $T_1$ and $T_2$ is denoted by $\Delta(T_1,T_2)= (T_1 \setminus T_2) \cup (T_2 \setminus T_1)$. We denote by $\delta(T_1,T_2)=|\Delta(T_1,T_2)|/2$ the number of edges in $T_1$ and not in $T_2$, which is a trivial lower bound on the length of a flip sequence from $T_1$ to $T_2$.

It is well-known that the set of spanning trees of a graph $G$ forms a matroid. In particular, for any possible pair of spanning trees $T_1,T_2$, there is a (non geometric) flip sequence that transforms $T_1$ into $T_2$ in exactly $\delta(T_1,T_2)$ flips. So if we do not care about geometric properties of the representation of the spanning trees, it is always possible to transform a spanning tree $T_1$ into $T_2$ using at most $n-1$ flips. One can wonder if the same holds if we want to keep non-crossing spanning trees all along the flip sequence. Hernando et al.~\cite{hernando1999geotree} answered this question in the negative by providing, for every $n$, two non-crossing spanning trees $T_1,T_2$ on a convex set of $n$ points whose minimal flip sequence needs $\frac{3}{2}n -5$ flips (we give their example in Figure~\ref{fig:hernando}).

\begin{figure}[hbtp]
        \begin{center}
        \tikzstyle{vertex}=[circle,draw, minimum size=7pt, scale=0.5, inner sep=1pt, fill = black]
        \tikzstyle{fleche}=[->,>=latex]
        \tikzstyle{labell}=[text opacity=1, scale =0.9]
        \begin{tikzpicture}[scale=1.2]

         \begin{scope}
            \clip (0,0) circle (1.5cm);
        \end{scope}
        
        \nodearc(a1)(0,0)(0:1.5)[vertex]();
        \nodearc(a2)(0,0)(36:1.5)[vertex]();
        \nodearc(a4)(0,0)(72:1.5)[vertex]();
        \nodearc(a5)(0,0)(108:1.5)[vertex]();
        \nodearc(a6)(0,0)(144:1.5)[vertex]();
        \nodearc(b1)(0,0)(180:1.5)[vertex]();
        \nodearc(b2)(0,0)(-36:1.5)[vertex]();
        \nodearc(b4)(0,0)(-72:1.5)[vertex]();
        \nodearc(b5)(0,0)(-108:1.5)[vertex]();
        \nodearc(b6)(0,0)(-144:1.5)[vertex]();

        \centerarc[thick](0,0)(-144:-180:1.5);
        \centerarc[thick](0,0)(0:36:1.5);

        \draw[thick] (a1) to (a4);
        \draw[thick] (a1) to (a5);
        \draw[thick] (a1) to (a6);
        \draw[thick] (a1) to (b1);
        \draw[thick] (b1) to (b2);
        \draw[thick] (b1) to (b4);
        \draw[thick] (b1) to (b5);

        \draw[thick,red] (b6) to[bend right=5] (b1);
        \draw[thick,red] (b6) to (a4);
        \draw[thick,red] (b6) to (a5);
        \draw[thick,red] (b6) to (a6);
        \draw[thick,red] (a2) to[bend right=5] (a1);
        \draw[thick,red] (a2) to (b2);
        \draw[thick,red] (a2) to (b4);
        \draw[thick,red] (a2) to (b5);
        \draw[thick,red] (a2) to (b6);

        \end{tikzpicture}
        \end{center}
        \caption{A minimal flip sequence between $T_1$ (in black) and $T_2$ (in red) has length exactly $\lfloor \frac{3}{2}n \rfloor - 5 = 10$.}
        \label{fig:hernando}
\end{figure}
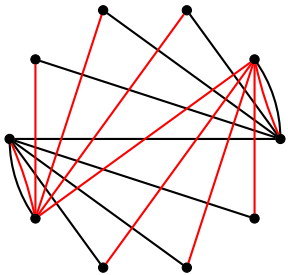

During 25 years, no improvement of the lower or upper bound has been obtained until a recent result of Aichholzer et al.~\cite{aichholzer2022reconfiguration}. They showed that the upper bound of Avis and Fukuda can be improved when points are in convex position by proving that there exists a flip sequence between any pair of non-crossing spanning trees of length at most $2n-\Omega(\log n)$. Their result has been further improved by Bousquet et al.~\cite{bousquet2023note} who proved that $2n-\Omega(\sqrt{n})$ flips are enough. However, until now, there does not exist any general proof that there always exists a flip sequence of length at most $(2-\epsilon)n$ for some $\epsilon >0$.

In both papers, the authors prove as well the existence of shorter flip sequences when one (or both) of the trees has a special shape. Aichholzer et al~\cite{aichholzer2022reconfiguration} proved that when the points are in convex position and $T_1$ is a path then there exists a flip sequence of length at most $\frac32n - 2 - |T_1 \cap T_2| = \frac{n + |\Delta(T_1,T_2)|}{2} - 1$. Bousquet et al.~\cite{bousquet2023note} proved that there exists a flip sequence of length at most $\frac 32 n$ when the points are in convex position and $T_1$ is a path or a nice caterpillar\footnote{A \emph{caterpillar} is a tree such that the set of nodes that are not leaves induces a path. Without giving the exact definition, a nice caterpillar is a caterpillar such that every chord cuts in a nice way the geometric representation.}.

Bousquet et al.~\cite{bousquet2023note} conjectured that the lower bound of Hernando et al.~\cite{hernando1999geotree} is essentially tight: 
\begin{conjecture}
\label{conj:upperbound}
    Let $C$ be a set of $n$ points in convex position. There exists a flip sequence between any pair of non-crossing spanning trees of length at most $\frac{3}{2} n$. 
\end{conjecture}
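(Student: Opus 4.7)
The plan is to route the flip sequence through a carefully chosen intermediate tree $S$, combining amortized and structural arguments. Classically, Avis and Fukuda take $S$ to be a star, paying $n-2$ flips per side. To reach the $\tfrac{3}{2}n$ bound, I would pick $S$ adaptively depending on $(T_1,T_2)$: the aim is to find $S$ agreeing with each of $T_1$ and $T_2$ on at least half of its edges, so that reaching $S$ from each side costs at most $\tfrac{3}{4}n$ flips. A natural candidate is a non-crossing path or nice caterpillar chosen to maximise $|S\cap T_1|+|S\cap T_2|$; the Bousquet et al.\ result that any tree can be transformed into a path or nice caterpillar in at most $\tfrac{3}{2}n$ flips suggests that such well-aligned targets should exist.

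A second ingredient I would develop is a local exchange lemma tailored to the non-crossing setting. For the spanning tree matroid, any $e\in T_2\setminus T_1$ closes a fundamental cycle in $T_1$ containing some $f\in T_1\setminus T_2$ with $T_1-f+e$ again a spanning tree; the geometric obstacle is that $e$ may cross edges of $T_1\setminus\{f\}$. I would try to prove a dichotomy: either some such exchange $(e,f)$ is directly non-crossing (a \emph{productive} flip that lowers $\delta$ by one), or there exists $g\in T_1\setminus T_2$ whose removal splits $T_1$ into two subtrees lying on opposite sides of some edge of $T_2$, enabling a recursive call on two smaller instances. The amortized accounting would charge at most one wasteful flip for every two productive ones, giving a total of $\tfrac{3}{2}\,\delta(T_1,T_2)\le\tfrac{3}{2}n$.

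The main obstacle, and presumably the reason this conjecture has stood since 1999, is quantifying wasteful flips \emph{globally}. Locally, the Hernando et al.\ construction already shows that certain crossing patterns genuinely force one extra flip per two useful ones, exactly matching the $\tfrac{3}{2}n$ target. Turning these local exchanges into a coherent global strategy seems to require a potential function that simultaneously controls the current symmetric difference and the ``flip distance to addable'' of every remaining $T_2$-edge; all the arguments I can anticipate lose a non-trivial constant in this trade-off, which is likely what forces the present paper to $c\approx 1.95$. Closing the gap to $3/2$ probably needs a new structural insight into how the crossing pattern between $T_1\setminus T_2$ and $T_2\setminus T_1$ is organised along the convex hull, rather than just a sharper execution of the existing amortization schemes.
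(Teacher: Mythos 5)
You should first note that the statement you are attempting is Conjecture~\ref{conj:upperbound}, which this paper does \emph{not} prove: its main result (Theorem~\ref{thm:mainthm}) only achieves a flip sequence of length $c\cdot\delta(T_1,T_2)$ with $c\approx 1.95$, and the $\tfrac32 n$ bound is explicitly left open. So there is no proof in the paper to compare against, and your text is likewise not a proof but a research plan --- you yourself concede at the end that the decisive global argument is missing. That concession alone means the attempt has a genuine gap.

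Beyond that, both of your proposed routes run into a concrete obstruction established in this very paper. Your second route aims for an amortized bound of $\tfrac32\,\delta(T_1,T_2)$ (``one wasteful flip for every two productive ones''), but Theorem~\ref{thm:lb_flips} exhibits pairs $T_k,T_k'$ for which every flip sequence has length at least $\tfrac53\,\delta(T_k,T_k')$; hence no exchange/amortization scheme can deliver $\tfrac32\delta$ in general, and any proof of Conjecture~\ref{conj:upperbound} must genuinely exploit the parameter $n$ (e.g.\ via the common edges present in such examples), not the symmetric difference. Your first route has the same hidden dependence: choosing an intermediate tree $S$ with $\delta(T_i,S)\le (n-1)/2$ only yields $\tfrac34 n$ flips per side if one already has a $\tfrac32\delta$-type upper bound for reaching $S$, which is exactly what the lower bound rules out (and the existence of such a well-aligned non-crossing $S$ is itself unproven; the cited result of Bousquet et al.\ says that a sequence of length $\tfrac32 n$ exists when one of the two trees \emph{is} a path or nice caterpillar, not that every tree is within $\tfrac34 n$ flips of one agreeing with it on half its edges). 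In short, the proposal neither closes the conjecture nor matches the paper's actual contribution, whose method is quite different: the paper works by induction on a minimal counterexample, splitting along common chords, locating a $\tau$-extremal side, converting it into a very good side via Lemmas~\ref{lem:bad} and~\ref{lem:good}, and matching chords there via Lemma~\ref{lem:very_good}, which yields $c\approx 1.95$ but deliberately stops short of $\tfrac32$.
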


One can easily prove that there exists a flip sequence of length at most $2 \delta(T_1,T_2)$ between any pair of non-crossing spanning trees in convex position. The improvement of Aichholzer et al.~\cite{aichholzer2022reconfiguration} also holds in that setting. Since in the example of Hernando et al. the intersection is reduced to two edges, one can wonder if Conjecture~\ref{conj:upperbound} can be extended to the symmetric difference, namely:

\begin{conjecture}
\label{conj:upperbound_symdif}
    Let $C$ be a set of $n$ points in convex position. There exists a flip sequence between any pair of non-crossing spanning trees $T_1,T_2$ of length at most $\frac{3}{2} \delta(T_1,T_2)$. 
\end{conjecture}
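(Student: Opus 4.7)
The plan is to prove the bound by induction on $\delta(T_1,T_2)$, using an amortized argument in which each flip ``pays'' at most $\frac{3}{2}$ units toward closing the symmetric difference. The base case $\delta(T_1,T_2)=0$ is trivial. For the inductive step, the key lemma I would try to establish is: from any non-crossing spanning tree $T\neq T_2$, there is a flip sequence of length $k\in\{1,2,3\}$ ending at some $T'$ with $\delta(T',T_2)\leq \delta(T,T_2)-\lceil \tfrac{2k}{3}\rceil$. Equivalently, one must never need more than one ``wasteful'' flip (not reducing $\delta$) between two ``good'' ones.

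The first step is to classify a flip $T\to T'=(T\cup f)\setminus e$ as \emph{good} if $e\in T\setminus T_2$ and $f\in T_2\setminus T$ (so $\delta$ drops by $1$), \emph{neutral} if exactly one of $e,f$ lies in the symmetric difference, and \emph{bad} otherwise. The conjecture is easy if a good flip is always available, so I would first attempt to show that one can almost always find, for some $f\in T_2\setminus T$, a fundamental cycle in $T+f$ containing an edge of $T\setminus T_2$ that does not cross the rest of $T$. A natural candidate is to pick the chord $f\in T_2\setminus T$ separating the fewest points on its shorter side, since short chords tend to give small fundamental cycles, which are less likely to be obstructed by common edges.

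When no good flip is available, every edge of every such fundamental cycle lies in $T\cap T_2$, meaning every $f\in T_2\setminus T$ is ``blocked'' by common edges. I would then analyse such stuck configurations structurally, looking for a \emph{pivot vertex} $v$: a common endpoint of several edges of $T\setminus T_2$ aligned along one side of a chord of $T_2$, around which one preparatory neutral flip unlocks two subsequent good flips, yielding three flips that reduce $\delta$ by $2$. To unify the case analysis I would introduce a potential function $\Phi(T)\geq 0$ counting ``obstructions'' (for instance, the number of $f\in T_2\setminus T$ whose fundamental cycle in $T$ is disjoint from $T\setminus T_2$), and prove the amortized invariant that a single flip decreases $\delta(T,T_2)+\Phi(T)$ by at least $\tfrac{2}{3}$ on average.

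The main obstacle is precisely this classification of stuck configurations: the construction of Hernando et al.\ witnesses that $\frac{3}{2}$ is tight, so the argument cannot afford any slack. One must show that every stuck configuration is essentially of Hernando type (two fans attached to a common path), so that the preparatory flip is always available and genuinely unlocks two good flips rather than merely one. Proving that no subtler obstruction exists seems to require a delicate case analysis driven by the cyclic ordering of the edges of $T_1\setminus T_2$ and $T_2\setminus T_1$ along the convex hull, and is most likely where Conjecture~\ref{conj:upperbound_symdif} demands a genuinely new idea beyond the $\approx 1.95\,\delta$ bound established in the main theorem of this paper.
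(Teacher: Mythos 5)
The statement you are trying to prove is a conjecture that this paper \emph{disproves}: Theorem~\ref{thm:lb_flips} exhibits, for every $k\equiv 0 \pmod 3$, pairs $T_k,T_k'$ with $\delta(T_k,T_k')=k$ whose minimal flip sequence has length at least $\frac53 k > \frac32 k$. So no proof strategy can succeed, and the concrete failure point of yours is easy to locate. Your key lemma claims that from any $T\neq T_2$ there is a flip sequence of length $k\in\{1,2,3\}$ decreasing $\delta$ by $\lceil \tfrac{2k}{3}\rceil$. Take the $8$-point pair $T_1,T_1'$ of Figure~\ref{fig:lower}: the two trees share all four border edges, $\delta(T_1,T_1')=3$, and every chord of $T_1$ crosses every chord of $T_1'$. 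Since $\delta$ can only drop when a flip adds an edge of $T_1'\setminus T_1$, and every such edge is a chord that crosses all three chords of $T_1$, at least three removals must precede the first such addition; hence the first two flips cannot decrease $\delta$ at all (this is exactly the argument of Lemma~\ref{cl:T1flip}), and every flip sequence needs at least $5 > \tfrac32\cdot 3$ flips. This kills all three cases $k=1,2,3$ of your lemma simultaneously, and no amortized/potential-function bookkeeping can recover a $\tfrac32$ ratio, because the ratio itself is wrong.

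Two further remarks. First, your closing claim that ``the construction of Hernando et al.\ witnesses that $\tfrac32$ is tight'' conflates the two parameters: their bound is $\tfrac32 n$ in terms of the number of points, but in their example the trees share almost no edges, so it says nothing tight about $\delta$; in terms of the symmetric difference the paper shows the right constant is at least $\tfrac53$ (by gluing copies of the $8$-point gadget and proving a Happy-Edge-type statement for the common chords), while the best known upper bound is the $c\approx 1.95$ of Theorem~\ref{thm:mainthm}. Second, your intuition that stuck configurations force ``one preparatory flip unlocking two good flips'' is precisely what fails: in the gadget above, making the two trees agree on a block of $3$ differing chords genuinely costs $5$ flips, i.e.\ two wasted flips per three good ones, which is the source of the $\tfrac53$ ratio and of the open question stated after Theorem~\ref{thm:lb_flips}.
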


Our main results, discussed in more details in the next paragraphs, first consist in (i) improving the best known upper bound by breaking the linear factor $2$ of the threshold on the length of a minimal flip sequence (even in terms of the symmetric difference) and (ii) disproving Conjecture~\ref{conj:upperbound_symdif} by proving that the best upper bound factor we can hope for is $\frac 53$. We complete these results by providing improved upper and lower bounds on the length of transformations in the non-crossing and rotation models defined later. In particular, we close the gap between upper and lower bounds in the case of non-crossing flips.

\paragraph{Improved upper bound.}

The first main result of this paper is to improve the best upper bound of~\cite{bousquet2023note} by a linear factor by proving that the following holds:

\begin{restatable}{theorem}{mainthm}
\label{thm:mainthm}
Let $C$ be a set of $n$ points in convex position.
There exists a flip sequence between any pair of non-crossing spanning trees $T_1$ and $T_2$ of length at most $c \cdot \delta(T_1,T_2)$ with $c = \frac{1}{12}(22 + \sqrt{2})\approx 1.95$.  \\
In particular, there exists a flip sequence of length at most $cn \approx 1.95 n$ between any pair of non-crossing spanning trees.
\end{restatable}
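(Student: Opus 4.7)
The plan is to design an iterative procedure that at each step identifies an edge $e \in T_1 \setminus T_2$ and either (i) performs a single \emph{direct} flip replacing $e$ by some $f \in T_2 \setminus T_1$, reducing $\delta(T_1,T_2)$ by $1$ at a cost of $1$ flip, or (ii) performs a short sequence of flips (typically of length $2$) that reduces $\delta$ by $1$ via an intermediate edge not in $T_1 \cup T_2$. The standard argument à la Avis--Fukuda shows that option (ii) is always available, yielding the trivial $2\delta$ bound; to break the factor $2$, the strategy must show that direct flips are common enough to amortize the cost of indirect ones, or that certain flip sequences of length $k$ can reduce $\delta$ by strictly more than $k/2$ on average.

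The geometric structure I would exploit is the following: for every edge $e = uv \in T_1 \setminus T_2$, the unique $uv$-path $P_e$ in $T_2$ consists entirely of edges crossing $e$ in the convex geometry, in the sense that each connects one arc defined by $e$ to the other. To obtain a direct flip for $e$, I would look for an edge $f$ of $P_e$ with $f \notin T_1$ such that $(T_1 \cup f) \setminus e$ is non-crossing, which means $f$ must not cross any other edge of $T_1$. I would focus on extremal configurations, such as the edge of $T_1 \setminus T_2$ whose shorter arc contains the fewest points, and perform a case analysis on the local structure of $T_2$ inside this arc. This should yield either a direct flip, or a recognizable substructure (a ``fan'', a ``stacked'' nesting, or a short alternating chain between $T_1$ and $T_2$) in which a bounded block of flips can be executed with favorable amortized cost.

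The main obstacle is pinning down the precise constant. The value $c = \frac{1}{12}(22 + \sqrt{2})$ strongly suggests a potential-function argument with a tunable parameter $\alpha$ chosen to equalize two competing amortized bounds, leading to a quadratic equation in $\alpha$ whose solution involves $\sqrt{2}$. Concretely, I would introduce a potential of the form $\Phi = \delta(T_1,T_2) + \alpha \cdot X$, where $X$ is a structural counter tailored to the convex geometry (for instance, counting certain nested or stacked conflict patterns among edges of $T_1 \setminus T_2$), and then aim to show that each performed move decreases $\Phi$ by at least $1/c$. Designing the right counter $X$, producing a clean case analysis that exhibits exactly two tight regimes, and verifying that the amortized inequality holds uniformly across all configurations are where I anticipate essentially all the technical difficulty. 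A crude implementation of this plan would likely only yield a linear improvement over the factor $2$ (say $c = 2 - 1/k$ for some small explicit $k$) rather than reach the sharp $\approx 1.95$, so substantial care in selecting the invariant and in handling degenerate local configurations should be required.
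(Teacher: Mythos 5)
There is a genuine gap: what you give is a strategy outline rather than a proof, and the key ingredients that make the bound $c\cdot\delta(T_1,T_2)$ with $c=\frac{1}{12}(22+\sqrt 2)$ attainable are missing. Your scheme works edge-by-edge, hoping that ``direct'' flips (one flip killing one edge of the symmetric difference) occur often enough to amortize against indirect ones via a potential $\Phi=\delta+\alpha X$; but you never define the counter $X$, never carry out the case analysis, and you concede the crude version would only give $2-1/k$. The difficulty is structural, not just technical: in the hard instances (Figure~\ref{fig:hernando}, and the trees of Figure~\ref{fig:lower}) every chord of $T_1\setminus T_2$ crosses every chord of $T_2\setminus T_1$ inside a region, so no direct flip is available until several preparatory flips have been made, and the saving can only be realized over an entire region at once. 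An edge-local potential on ``nested or stacked conflict patterns'' has no mechanism to capture this, which is precisely why the factor $2$ resisted improvement. Also, the geometric fact you lean on is false as stated: for $e=uv\in T_1\setminus T_2$, the $u$--$v$ path in $T_2$ need not consist of edges crossing $e$; it may contain edges with both endpoints strictly on one side of $e$ (they merely share endpoints with $e$ or stay inside one arc).

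The paper's route is different and supplies exactly what your sketch lacks. It argues by induction on a minimal counterexample: Lemmas~\ref{lem:common_chord} and~\ref{lem:common_border} show such a pair is \emph{nice} (same border edges, no common chord), so creating a common chord lets one split the instance and recurse, with progress measured in the symmetric difference. The analysis is then side-local rather than edge-local: one selects a $\tau$-extremal side $A$ (Lemma~\ref{cl:existtau}), whose degree in the other tree is at most $\tau k_A$ while every sub-side of the other tree has comparatively large degree; this bounds the number of bad holes by $\frac{2}{\tau}k_A$ (Lemma~\ref{lem:boundbadholes}), hence the number of expensive two-flip repairs (Lemmas~\ref{lem:bad} and~\ref{lem:good}) needed to turn $A$ into a \emph{very good} side. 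Inside a very good side, $k_A$ chords are matched at amortized cost $\frac53$ per chord (Lemma~\ref{lem:very_good}, via the case analysis $d\le 3$ of Claim~\ref{cl:very_good}). The constant $c$ then comes from optimizing $\tau=2+\sqrt2$ in the final inequality balancing the repair cost against the matching cost --- this is the true source of the $\sqrt2$, not a quadratic equation for a potential-function parameter. Your proposal would need to reinvent the reduction to nice pairs, the notion of a $\tau$-extremal side (or some substitute guaranteeing a region where few repairs are needed), and the $\frac53$-per-chord matching procedure; without these, the claimed bound is not established.
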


One can note that our result is expressed in terms of the symmetric difference, which is also the case for the upper bound of~\cite{aichholzer2022reconfiguration}\footnote{Some of the partial results obtained in~\cite{aichholzer2022reconfiguration} depend on both $n$ and $\Delta$.}

Our proof technique is completely different from the previous approaches of~\cite{aichholzer2022reconfiguration,bousquet2023note}. Even if these two proofs are very different, they share a common point: their goal is to transform at least one spanning tree into a very rigid structure that does not really take into account the specific structure of both trees. 
On the contrary, our approach depends on the local structure of both trees, which works along the following lines: If two non-crossing trees $T,T'$ contain a common chord, we divide our problem into two sub-problems\footnote{This approach is safe since our upper bound depends on the size of the symmetric difference and not of $n$.}: the ``left" and the ``right" problem where the common chord becomes an edge of the border in both cases. In particular, if we can create a common chord with few modifications, we can apply induction. Unfortunately, this cannot work in general since we may have to modify a lot of edges until we can create a common edge (see e.g. the example of Figure~\ref{fig:hernando}). We prove that we can find a chord $e$ in $T$ and one side of that chord (say ``left") such that $T'$ has not too many endpoints in that side. The difference with the argument above is that the ``not too many" here is not a universal constant but depends on the size of the side. We then prove that, by only modifying a small linear fraction of these edges, we can transform ``left" into what we call a very good side. Informally speaking, ``left" is a very good side if (i) no edge of $T'$ has both endpoint in ``left" (in other words, all the edges with one endpoint in left have the other endpoint in the right part) and, (ii) the number of such edges in $T'$ is equal to the size of ``left". We then prove that, in that case, we can perform flips in order to be sure that both trees agree on the left of $e$ in at most $\frac 53$ times the number of vertices at the left of $e$\footnote{Actually, the size of a side will be defined as the number of non-edges of the border and not simply of vertices in the side which explains why we obtain a bound in terms of the symmetric difference.}.

Our proof is self-contained and is algorithmic. So a flip sequence of length at most $cn$ can be obtained in polynomial time. Moreover it is robust since it can be adapted to improve the best upper bounds for rotations for instance.
Note that we did not try to optimize the constant $c$ to keep the proof as simple as possible. 

\paragraph{Lower bound in terms of the symmetric difference.}

Our second set of results consists in proving stronger lower bounds in terms of the symmetric difference. In particular, we disprove Conjecture~\ref{conj:upperbound_symdif}:

\begin{restatable}{theorem}{lbflips} \label{thm:lb_flips} 
For every $k=0$ mod $3$, there exist two trees $T_k$ and $T_k'$ such that $\delta(T_k,T_k')=k$ and every flip sequence between $T_k$ and $T_k'$ has length at least $\frac 53 k$.
\end{restatable}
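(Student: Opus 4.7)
I would construct $T_k$ and $T_k'$ as the concatenation of $k/3$ independent copies of a small ``gadget'' pair $(\tau, \tau')$, placed on disjoint arcs of a single convex polygon and separated by pairs of boundary edges common to both $T_k$ and $T_k'$. Because every chord crossing between two gadget regions must cross such a common separator, every flip is confined to a single gadget, and any flip sequence from $T_k$ to $T_k'$ decomposes into $k/3$ independent per-gadget subsequences. It therefore suffices to design a single gadget with $|\tau \setminus \tau'| = 3$ for which every flip sequence from $\tau$ to $\tau'$ has length at least $5$; summing over gadgets then yields the desired $\tfrac{5}{3}k$ lower bound.

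The gadget is engineered to satisfy a \emph{mutual crossing property}: every edge of $\tau' \setminus \tau$ crosses at least two edges of $\tau \setminus \tau'$, and symmetrically. A natural candidate, inspired by the Hernando et al.\ example but shrunk to constant size, is a miniature two-hub configuration: in $\tau$, two adjacent hub vertices are attached to three petals on one side, while in $\tau'$ two different hubs are attached to the same petals from the opposite side, so that the diagonals of $\tau'$ mutually cross the diagonals of $\tau$ in a rigid pattern. The gadget is connected to the rest of the construction only through common boundary edges that play the role of separators.

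The per-gadget lower bound of $5$ follows from a weighted counting argument. Classify each flip by its effect on $|\Delta(T_i, \tau')|$: it can decrease this quantity by $2$ (\emph{clean}), leave it unchanged, or increase it by $2$, with respective counts $p, z, q$. The total drop $2p - 2q = 6$ forces $p = q + 3$, so the length of the sequence is $p + z + q = 3 + (2q + z)$, and it suffices to prove $2q + z \geq 2$. The mutual crossing property immediately rules out any clean first flip from $\tau$: any candidate added edge $f \in \tau' \setminus \tau$ still crosses some edge of $\tau \setminus \tau'$ after a single removal, so the first flip contributes at least $1$ to $z + q$. The main obstacle is to force a second wasted flip. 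This would be handled by a local case analysis: either the first flip is already of ``$+2$'' type (so $q \geq 1$ and we are done), or it is of type ``$0$'' and inserts a junk edge outside $\tau \cup \tau'$ to reconnect the tree, in which case one argues via the residual crossings that this junk edge must later be removed by yet another non-clean flip. Since the gadget has constant size, the case analysis on the first few flips is finite and tractable, and it transfers straightforwardly to the non-crossing flip and rotation variants announced at the end of the excerpt.
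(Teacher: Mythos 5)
Your gadget idea is essentially the paper's (its pair $(T_1,T_1')$ on $8$ points with $\delta(T_1,T_1')=3$ and pairwise-crossing chords is exactly such a gadget), but the step you treat as automatic is precisely the crux, and as stated it fails. You claim that because the copies are separated by common edges, ``every flip is confined to a single gadget'' and the sequence decomposes into $k/3$ independent subsequences. If the separators are common \emph{border} edges, the justification ``every chord between two gadget regions must cross such a common separator'' is false: a chord joining two different arcs of the hull crosses no hull edge, so intermediate trees may contain edges spanning several copies, and even the known happy-edge property for common border edges (which lets you fix the separators) does not rule such edges out. If instead the separators are common \emph{chords} (as in the paper), crossing does confine flips, but only for sequences that never remove those chords; the Happy Edge conjecture is open for chords, so you must also handle sequences that flip every common separator. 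That case is where almost all of the paper's effort goes: Lemma~\ref{lem:LB_flips_chordschange} disposes of sequences preserving some common chord, and the complementary case needs the weight-distribution argument of Lemma~\ref{lem:flipcounting} (Claims~\ref{cl:flip3chords}--\ref{cl:fliplast}), which charges every addition and deletion --- including of intermediate edges that may join different copies --- fractionally to the copies containing its endpoints. Without an argument of this kind, your reduction to a single gadget does not go through.

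A secondary gap is the per-gadget bound itself. Your accounting gives length $3+2q+z$ and you must show $2q+z\geq 2$, but your sketch only forces the first flip to be non-clean. Note that a ``clean'' flip in your sense can re-add a common border edge removed by an earlier $0$-flip, and the junk edge created by a $0$-flip can perfectly well be removed by a \emph{clean} flip (deleting it while adding an edge of $\tau'\setminus\tau$), so the assertion that it ``must later be removed by yet another non-clean flip'' does not hold as stated. The paper's base case (Lemma~\ref{cl:T1flip}) sidesteps this: since every chord of $T_1$ crosses every chord of $T_1'$, neither of the first two flips can create any of the three missing chords of $T_1'$, so at least $2+3=5$ flips are needed. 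Because your gadget has constant size you could in principle close this by a finite case check (or by adopting the paper's crossing argument), but as written the bound of $5$ per gadget is not established.
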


 The proof of Theorem~\ref{thm:lb_flips} consists in first providing two spanning trees $T_1,T_1'$ on $8$ vertices for which $|T_1 \setminus T_1'|=3$ and such that the minimal flip sequence between $T_1$ and $T_2$ needs $5$ flips (see Figure~\ref{fig:lower}). 
 One of the reasons of the hardness comes from the fact that, for every common edge $e$ of the border, the endpoint of $e$ that is used to connect this edge to the rest of the tree is different in both trees. This allows us to increase the number of crossings between the trees, and then the length of the flip sequence.  Note that our example is not a counter-example to Conjecture~\ref{conj:upperbound} since the pair of trees contains a lot of common edges.
 
We then prove that if we glue many instances of $(T_1,T_1')$ appropriately, we can obtain a similar example with arbitrarily large value of $k$. 
The idea is as follows. If we assume that there always exists a minimal flip sequence that does not break common edges, the conclusion immediately follows. Unfortunately, this statement, known as the Happy Edge conjecture~\cite{aichholzer2022reconfiguration}, is only known to be true for common edges of the border but not for chords. So we have to prove that it is never interesting to break a common edge which we succeed to do in this particular case (in other words, the Happy Edge Conjecture holds in this case).

We have not found any example for trees $T_1,T_2$ for which a flip sequence of length more than $\frac 53 \delta(T_1,T_2)$ is necessary. We therefore leave the following as an open problem:

\begin{question}
    Let $C$ be a set of points in convex position and $T_1,T_2$ two non-crossing spanning trees on $C$. Does there always exist a flip sequence between $T_1$ and $T_2$ of length at most $\frac 53 \delta(T_1,T_2)$?
\end{question}

\paragraph{Improved lower bounds for the other models.}

Several other types of flips have been introduced in the literature (see e.g.~\cite{nichols2020types} for an overview of the results in the different models). We proved that we can strengthen the best lower bounds in terms of the symmetric difference for two other types of flips: non-crossing flips and rotations. 

Let $T$ be a spanning tree, $e$ be an edge of $T$ and $f$ be an edge such that $(T \cup f) \setminus e$ is non-crossing. We say that the flip is \emph{non-crossing} if $T \cup f$ does not contain any crossings. In other words, we restrict to flips where the new edge does not cross the edge that is deleted. We say that the flip is a \emph{rotation} if $e$ and $f$ share an endpoint. In other words, every flip must rotate an edge around a point in that case. 

Upper and lower bounds in terms of $n$ for the longest minimal possible transformation have already been studied (see~\cite{nichols2020types}). Note that the best known lower bounds for all the models are the same and are given by the construction of Hernando et al. which gives a lower bound (in terms of $n$ of size $\frac 32 n$). We improved the lower bounds in terms of the symmetric difference for both non-crossing flips and rotations.

For non-crossing flips, one can easily remark that, by projecting edges on the border, we can always find a non-crossing flip sequence between any pair of trees of length at most $2\delta(T_1,T_2)$ (see Lemma~\ref{lem:flipborder} for a formal proof). We prove that this bound is tight by giving a pair of trees that reach this bound, which completely closes the gap between lower and upper bounds for non-crossing spanning trees in terms of symmetric difference. Namely we prove that the following holds:

\begin{restatable}{theorem}{lbncflips}\label{thm:lb_ncflips}
For every $k$, there exist two trees $T_k$ and $T_k'$ such that $\delta(T_k,T_k')=k$ and the length of a minimal non-crossing flip sequence between $T_k$ and $T_k'$ has length at least $2k$.
\end{restatable}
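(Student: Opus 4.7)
My plan is to build $T_k, T_k'$ by chaining $k$ copies of a small gadget. The basic gadget $G$ consists of four vertices $a, b, c, d$ in convex position: $T_G$ contains the diagonal $(a, c)$ together with the two boundary edges $(a, b), (c, d)$, while $T_G'$ contains the opposite diagonal $(b, d)$ together with the same two boundary edges. Because the two diagonals $(a, c)$ and $(b, d)$ cross, the direct single-step flip is forbidden under the non-crossing condition: adding $(b, d)$ to $T_G$ would create a crossing with the still-present $(a, c)$. Hence any non-crossing flip sequence from $T_G$ to $T_G'$ has length at least $2$, matching $2 \cdot \delta(T_G, T_G') = 2$.

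To obtain the bound for arbitrary $k$, I would arrange $k$ such gadgets along a convex polygon, separated by common boundary edges (``spacers'') that both $T_k$ and $T_k'$ include. In each gadget $i$, one tree contains a diagonal $e_i$ and the other contains the crossing diagonal $f_i$, while all remaining edges---including the spacers---are common, so $\delta(T_k, T_k') = k$. Since each pair $(e_i, f_i)$ is confined to the quadrilateral of its own gadget, the diagonals of different gadgets do not interact and both $T_k$ and $T_k'$ remain non-crossing.

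For the lower bound, I would fix any non-crossing flip sequence and, for each $i$, denote by $\alpha_i$ the step at which $e_i$ is removed and by $\beta_i$ the step at which $f_i$ is added. The single-gadget argument yields $\alpha_i \neq \beta_i$, so every gadget contributes two distinct steps. The main technical hurdle is to show that the $2k$ steps $\{\alpha_i, \beta_i\}_{i=1}^k$ are globally distinct---in particular, that no single flip removes $e_i$ while simultaneously adding some $f_j$ for $j \neq i$. Such a combined ``cross-gadget'' flip would require $e_i$ to lie on the fundamental cycle of $f_j$ in the current tree, which in turn forces a previously added temporary chord bridging two gadgets. I would rule this out via a Happy Edge-type lemma tailored to the non-crossing flip model: because reinserting a boundary edge never creates any crossing, any minimal non-crossing flip sequence can be assumed to preserve the common boundary spacers throughout, so that any temporary break of a spacer can be absorbed locally without lengthening the sequence. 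Once the spacers are preserved, every flip is confined to a single gadget, the $2k$ flips $\{\alpha_i, \beta_i\}$ are pairwise distinct, and the lower bound $\geq 2k$ follows.
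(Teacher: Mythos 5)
Your gadget and the single-gadget bound are fine (the gadget is essentially the paper's $T_1,T_1'$), but the way you chain the gadgets breaks the argument, and in fact breaks the statement itself. In the paper's construction consecutive copies are glued by \emph{identifying} two vertices, so that the edge shared by consecutive copies is a common \emph{chord}; since every intermediate tree is non-crossing, as long as such a chord is present every edge of the current tree lies entirely on one of its two sides, which is what makes ``untouched common chord $\Rightarrow$ split into two independent subinstances'' (Lemma~\ref{lem:LB_ncflips_chordschange}) work, and what makes the complementary counting argument (Lemma~\ref{lem:LB_ncflips_nochordschange}) possible. Your spacers are common \emph{border} edges, and a border edge separates nothing: preserving the spacers in no way confines flips to a single gadget, because temporary chords joining different gadgets can appear without ever touching a spacer. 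So even granting your Happy-Edge-type lemma for border edges (which you only sketch), the conclusion ``every flip is confined to a single gadget'' does not follow, and the claim that the $2k$ steps $\{\alpha_i,\beta_i\}$ are distinct is unsupported.

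Worse, your construction really does admit shorter sequences, so no proof can rescue it. Take $k=2$ with hull order $a_1,b_1,c_1,d_1,a_2,b_2,c_2,d_2$, spacer $d_1a_2$, diagonals $e_i=a_ic_i$ in $T_2$ and $f_i=b_id_i$ in $T_2'$. Then the three non-crossing flips $a_1c_1\rightsquigarrow a_1d_2$, $a_2c_2\rightsquigarrow b_1d_1$, $a_1d_2\rightsquigarrow b_2d_2$ transform $T_2$ into $T_2'$: the hull edge $a_1d_2$ crosses nothing and its fundamental cycle $a_1c_1d_1a_2c_2d_2$ contains $a_1c_1$; then $b_1d_1$ cuts off only $c_1$ (whose sole remaining edge is $c_1d_1$), crosses nothing, and its fundamental cycle $b_1a_1d_2c_2a_2d_1$ contains $a_2c_2$; finally $b_2d_2$ cuts off only $c_2$, crosses nothing, and its fundamental cycle $b_2a_2d_1b_1a_1d_2$ contains $a_1d_2$. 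This uses $3<2k=4$ flips while never touching the spacer or any common edge: a single temporary edge ($a_1d_2$) is shared between the two gadgets, exactly the phenomenon your ``confinement'' step was supposed to exclude. The missing idea is thus in the construction itself: consecutive copies must be made to share a common chord (as in the paper, where the copies are glued across the convex hull), so that each copy is geometrically sealed off unless that expensive common chord is itself flipped; with border-edge spacers the intended lower bound of $2k$ is simply false.
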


We finally consider the rotation model. One can easily remark that there is always a rotation sequence between $T_1$ and $T_2$ of length at most $4\delta(T_1,T_2)$ using projections on the border. Actually one can prove that this $4$ can be improved into a $3$ with a simple clever analysis.
A careful reading of the proof of Theorem~\ref{thm:mainthm} with a slight adaptation actually permits to improve the factor into $(1+c) \approx 2.95$. 

Our last result consists in improving the best lower bound for rotation by showing that the following holds:

\begin{restatable}{theorem}{lbrotations} \label{thm:lb_rotations} 
For every $k=0$ mod $3$, there exist two trees $T_k$ and $T_k'$ such that $\delta(T_k,T_k')=k$ and every rotation sequence between $T_k$ and $T_k'$ has length at least $\frac 73 k$.
\end{restatable}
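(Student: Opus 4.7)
The plan is to mirror the strategy used for Theorem~\ref{thm:lb_flips}: construct a small base gadget $(T_*, T_*')$ on a constant number of vertices realizing the ratio $7/3$, and then glue $k/3$ copies of it together along common border arcs so that any rotation sequence decomposes into $k/3$ independent pieces, each of length at least $7$. Concretely, I would start from a close variant of the $8$-vertex gadget used in Theorem~\ref{thm:lb_flips}, for which $\delta(T_*, T_*') = 3$, and show that in the rotation model the length of a shortest reconfiguring sequence grows from $5$ to $7$. Intuitively, each of the three common border edges is attached to the rest of the tree through different endpoints in $T_*$ and in $T_*'$, and a rotation that deletes an incident edge must replace it by another edge sharing one of those endpoints; this extra rigidity is what should boost $5$ to $7$.

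To prove the base case, I would first set up a potential function suited to rotations, for instance counting the number of crossings between the current tree and $T_*'$ combined with the number of common border edges whose ``connection endpoint'' still disagrees with the one used in $T_*'$. Because a rotation acts locally at a single vertex and therefore moves crossings and endpoint-assignments in a controlled way, a case analysis of its effect on this potential should show that each rotation decreases it by at most $1$, while the initial value is at least $7$. A short computer-free enumeration of all rotation sequences of length at most $6$ starting from $T_*$ (pruned aggressively by symmetry and by the potential bound) would then confirm that $T_*'$ is unreachable in fewer than $7$ rotations.

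For the gluing step, I would place $k/3$ copies of the gadget consecutively on a convex point set, so that consecutive copies share a single vertex and the boundary edges between copies are common to both $T_k$ and $T_k'$. By construction $\delta(T_k, T_k') = 3 \cdot (k/3) = k$. I would then prove a Happy-Edge-style statement tailored to the rotation model and to this specific construction: there is a minimal rotation sequence in which no separator edge between two consecutive gadgets is ever removed. This lets one partition the moves of any minimal rotation sequence into $k/3$ sub-sequences, each reconfiguring an isolated copy of the gadget, hence each of length at least $7$; summing gives $\frac{7}{3} k$.

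The main obstacle I expect is precisely this Happy-Edge argument in the rotation setting: unlike a flip, a rotation that deletes a separator edge must attach a new edge at one of its endpoints, possibly linking different copies of the gadget and mixing their reconfigurations. To rule this out, I plan an exchange argument: given a minimal rotation sequence in which a separator is broken, I would show that one can reroute the moves to avoid breaking the separator without increasing the total length, using the rigidity established in the base case analysis together with the fact that separator endpoints act as articulation points in both $T_k$ and $T_k'$. Once this exchange lemma is in place, the lower bound $\frac{7}{3} k$ follows directly from the base case.
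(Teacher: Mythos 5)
Your overall architecture (a constant-size gadget of rotation distance $7$ with $\delta=3$, glued along common chords, plus a decomposition of any minimal sequence) matches the paper's construction, but the step you yourself flag as the main obstacle is a genuine gap, and it is exactly where essentially all of the paper's work lies. You propose to prove a Happy-Edge-style exchange lemma: that some minimal rotation sequence never removes a separator edge, obtained by ``rerouting'' any sequence that does. No such lemma is proved in the paper, and there is no indication that a length-preserving rerouting exists: a rotation that deletes the separator $v_7^iv_8^i$ must attach the new edge at one of its endpoints, and that new edge can simultaneously make progress inside a copy (e.g.\ become a chord of $T_k'$, or serve as an intermediate edge that several later rotations rely on), so a local exchange can force extra moves elsewhere. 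The paper instead argues by induction with a dichotomy: if some common chord is never modified, it splits the instance as in Lemma~\ref{lem:LB_flips_chordschange}; if \emph{every} common chord is modified, it does not reroute anything but directly proves the sequence is long, via the set $\S_0$ of $6k$ first-removals/last-additions of symmetric-difference edges (Claim~\ref{cl:rotate3chords}), then $k-1$ additional ``special'' rotations attached to the separators (Lemmas~\ref{cl:rotspecial} and~\ref{cl:rotk-1}), and finally one more rotation extracted through a delicate analysis (Lemmas~\ref{lem:rotlastfarcopies} and~\ref{lem:rotlastnodouble}). Your proposal contains no substitute for this case; asserting that an exchange argument ``should'' work is not a proof, and the intricacy of the paper's treatment of sequences that do break common chords strongly suggests the exchange lemma is not easy (if true at all).

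Two smaller issues. For the base case, you claim a potential function (crossings plus endpoint disagreements) with initial value at least $7$ that drops by at most $1$ per rotation; no such function is exhibited, and the naive crossing count is $9$ while a single rotation can remove a chord crossing three others, so the refinement you need is nontrivial. The paper found the hand analysis tedious enough to verify Lemma~\ref{cl:T1rotation} by exhaustive computer search, which is a legitimate fallback you could also adopt. Finally, note that the gadget for rotations is not literally the one from Theorem~\ref{thm:lb_flips}: the edge $v_1v_8$ is replaced by $v_3v_6$ in $T_1$, and the copies are glued by identifying two points so that the separator is a common \emph{chord}; your description of copies sharing ``a single vertex'' with common boundary edges would change the decomposition and should be fixed, though this is a repairable detail compared with the missing treatment of sequences that modify the common chords.
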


While the family of trees reaching that bound is similar to the family constructed for flips, the analysis that this family works is much more involved.
We end this part with a last open problem:

\begin{question}
    Let $C$ be a set of points in convex position and $T_1,T_2$ two non-crossing spanning trees on $C$. Does there always exist a rotation sequence from $T_1$ to $T_2$ of length at most $\frac 73 \delta(T_1,T_2)$?
\end{question}


\subsection{Related work}

\paragraph{Flip distance between geometric structures.}

Flips between combinatorial structures have been widely studied in computational geometry and combinatorics. One of the most studied problem, known as the \textsc{Flip Distance} problem, aims at computing the minimum number of flips needed to transform one triangulation into another\footnote{A flip in that case consists in replacing one diagonal of a quadrilateral into the other.} The problem has been proven to be NP-complete when considering $n$ points in non-convex position~\cite{Pilz14,Lubiw15}, and in that case, the flip graph of triangulations of a point set may have diameter $\Theta(n^2)$~\cite{hurtado1996flipping}. When the $n$ points are in convex position, the maximum flip distance between triangulations is linear and equal to $2n-10$ when $n \ge 9$. A first proof for $n$ large was found using hyperbolic geometry~\cite{sleator1986}, while a combinatorial proof for all $n\geq 9$ was only given decades later~\cite{Pournin14}. However, the complexity of the \textsc{Flip Distance} problem is, as far as we know, still an open problem in that case.

 Flip graphs and their diameter for other geometric objects have  been studied, such as non-crossing perfect matchings or rectangulations. For both of these objects there are several natural notions of flips, yielding various flip
graphs. A natural way of defining a flip for perfect matchings is by allowing two edges to be removed and two other edges to be added such that the resulting matching is still non-crossing. 
 When the $n$ points are in convex position and $n$ is even, Hernando, Hurtado and Noy~\cite{hernando2002graphs} showed that the flip graph of non-crossing perfect matchings has diameter $\frac{n}{2}-1$. 
 Houle et al.~\cite{houle2005graphs} gave a result on general point sets when using the notion of flip where $M_1$ is connected to $M_2$ in the flip graph where the symmetric difference of $M_1$ and $M_2$ contains a single non-crossing cycle. They showed that there is a transformation of linear length between any pair of non-crossing matchings, whereas Aichholzer et al.~\cite{aichholzer2009compatible} showed that, if multiple non-crossing cycles are allowed in a flip, then any minimal transformation has length at most $O(\log n)$.

 Ackerman et al.~\cite{ackerman2016flip} considered flips of rectangulations with two elementary flip operations, where one flip changes a horizontal line to a vertical line and vice versa and the other flip is a rotation around a point (by splitting the line segment into two parts). They showed that the maximum flip sequence over all $n$ points is of the order $\Theta(n \log n)$. A natural point set for rectangulations to consider is a diagonal point set, for which Ackerman et al. showed that the flip graph has diameter at most $11n$.
 
\paragraph{Combinatorial Reconfiguration.}
In the last decade, an important line of work has consisted in finding transformations between solutions of a problem such as graph colorings or independent sets (see e.g.~\cite{Nishimura18} for a recent survey). 
Amongst all these works, some of them studied transformations between restricted spanning trees. While we focus in this work on a restriction to the geometric representation of the spanning trees (non-crossing), these works focus on combinatorial properties of the spanning trees such as their maximum degree~\cite{BousquetIKMOSW23} or their number of leaves~\cite{BousquetI0MOSW20}. In these cases, the existence of a transformation is not guaranteed and the goal is to design efficient algorithms determining, given a pair of spanning trees, whether one can transform one into the other. These works focus on the token jumping model which essentially corresponds to flips and very few is known on the token sliding model (which is an analogue of rotations).

As a final remark, spanning trees are, as we already mentioned, a particular case of matroids (called graphic matroids). Other reconfiguration results related to generalizations of matroids have also been studied in the litterature, see e.g.~\cite{KobayashiMS23+}.

\paragraph{Organization of the paper.} After giving some definitions and simple observations in Section~\ref{sec:prelim}, we prove Theorem~\ref{thm:mainthm} in Section~\ref{sec:upper}. In Section~\ref{sec:lower} we prove Theorems~\ref{thm:lb_flips},~\ref{thm:lb_ncflips} and~\ref{thm:lb_rotations}.

\section{Basic definitions and observations}\label{sec:prelim}

    Let $C$ be a set of points in convex position and $T$ be a non-crossing spanning tree on $C$. We say two points of a convex set $C$ are \emph{consecutive} if they appear consecutively on the convex hull of $C$. We say we \emph{perform $e \rightsquigarrow e'$ in $T$} if we perform the flip consisting in removing $e$ and adding $e'$ in $T$.

    Let $A \subseteq C$. We denote by $T[A]$ the induced subgraph of $T$ on $A$, that is the subforest of $T$ with vertex set $A$ where $uv$ is an edge if and only if $uv$ is an edge of $T$. Note that $T[A]$ is a non-crossing forest.
     A \emph{border edge} (for $T$) is an edge between consecutive points. An edge of $T$ which is not a border edge is called a \emph{chord}. A \emph{hole} of $T$ is a pair of consecutive points that is not a border edge. We will say that we \emph{fill a hole} when we apply a flip where the created edge joins the  pair of points of the hole. 
   
    One can remark that, for each chord $e$ of $T$, the line containing $e$ splits the convex hull of $C$ in two non-trivial parts. A \emph{side of a chord $e$} is the subset of points of $C$ contained in one of the two closed half-planes defined by the line containing the two endpoints of $e$ (see Figure~\ref{fig:side} for an illustration).  A \emph{side of $T$} is a side of a chord $e$ for some $e\in T$. We say an edge (or a hole) is \emph{in a side} $A$ if both its endpoints are in $A$. 
    
    In the following, for every side $A$ of a chord, we will denote by $k_A$ the number of holes in $A$, which is also the number of chords of $T$ in $A$. Since $T$ is acyclic, we also have $k_A>0$. Note that each chord $e$ of $T$ defines two non-trivial sides\footnote{A side is trivial if it only contains either two points or all the points.} $A$ and $B$ whose intersection is exactly the endpoints of $e$. Moreover, $T$ has exactly $k_A + k_B$ holes. 
    
    Let $e$ be a chord of $T$ and $A$ be a side of $e$. For every chord $e'$ in $A$, the side of $e'$ (w.r.t. $A$) is the side of $e'$ that is contained in $A$. Note that for every pair of chords $e_1,e_2$ in $A$, the sides of $e_1$ and $e_2$ (w.r.t. $A$) are either disjoint or contained in each other. The chord $e_1$ is \emph{inclusion-wise minimal} if no side of a chord $e'$ in $A$ is included in the side of $e_1$ w.r.t. $A$. By connectivity, we can easily note the following.
    \begin{remark}\label{rmk:min_incl}
        Let $e$ be a chord of $T$ and $A$ be a side of $e$. Let $A'$ be the side of an inclusion-wise minimal chord $e'$ in $A$. Then $k_{A'}=1$. 
    \end{remark}
    
\begin{figure}[hbtp]
        \begin{center}
        \tikzstyle{vertex}=[circle,draw, minimum size=7pt, scale=0.5, inner sep=1pt, fill = black]
        \tikzstyle{fleche}=[->,>=latex]
        \tikzstyle{labell}=[text opacity=1, scale =1.2]
        \begin{tikzpicture}[scale=1.2]

         \begin{scope}
            \clip (0,0) circle (1.5cm);
            \fill[black!12] (90:1.5) to (-90:1.5) to[bend left = 90] (180:1.5) to[bend left = 90] (90:1.5);
            \fill[red!15] (90:1.5) to (-90:1.5) to[bend right = 90] (0:1.5) to[bend right = 90] (90:1.5);
            
        \end{scope}
        \nodearc(a1)(0,0)(90:1.5)[vertex]();
        \nodearc(a2)(0,0)(-90:1.5)[vertex]();
        \nodearc(a3)(0,0)(-30:1.5)[vertex]();
        \nodearc(a4)(0,0)(30:1.5)[vertex]();
        \nodearc(a5)(0,0)(150:1.5)[vertex]();
        \nodearc(a6)(0,0)(-150:1.5)[vertex]();

        \centerarc[thick](0,0)(-90:-30:1.5);
        \centerarc[thick](0,0)(30:90:1.5);
        \centerarc[thick](0,0)(150:210:1.5);

        \draw[thick] (a1) to (a2);
        \draw[thick] (a1) to (a6); 
        
        \nodearc(h4)(0,0)(120:1.8)[labell]($h_1$);
        \nodearc(h5)(0,0)(-120:1.8)[labell]($h_2$);
        \nodearc(h4)(0,0)(0:1.8)[labell]($h_3$);
        \nodearc(h4)(0,0)(0:-0.2)[labell]($e$);

        \nodearc(a1)(0,0)(90:1.8)[labell, scale = 0.8]($v_1$);
        \nodearc(a2)(0,0)(-90:1.8)[labell, scale = 0.8]($v_4$);
        \nodearc(a3)(0,0)(-30:1.8)[labell, scale = 0.8]($v_3$);
        \nodearc(a4)(0,0)(30:1.8)[labell, scale = 0.8]($v_2$);
        \nodearc(a5)(0,0)(150:1.8)[labell, scale = 0.8]($v_6$);
        \nodearc(a6)(0,0)(-150:1.8)[labell, scale = 0.8]($v_5$);

        \end{tikzpicture}
        \end{center}
        \caption{ The side $A$ (in grey) of the chord $e$ is is the subset of vertex $\{ v_1,v_4, v_5, v_6\}$ and the other side $B$ (in red) of $e$ is $\{ v_1,v_4, v_2, v_3\}$. The edges of $T$ in $A$ are the edges $v_5v_6,v_1v_5$ and $v_1v_4$. The holes $h_1$ and $h_2$ of $T$ are in $A$ and $h_3$ is in $B$. So we have $k_A=2$ and $k_B = 1$.}
        \label{fig:side}
\end{figure}
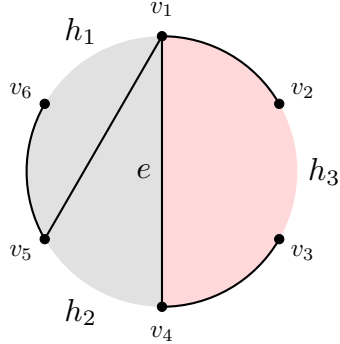

    Let $A$ be a side of a chord $e$ of $T$. 
    We say that a point $v$ is \emph{inside $A$ for} a pair $vw$ if $v \in A$ and either $v$ is not an endpoint of $e$ or both $v$ and $w$ are in $A$ (see Figure~\ref{fig:inside} for an illustration).
    Note that the fact that $v$ is inside $A$ depends on the pair $vw$. Note moreover that a point can be inside $A$ for several pairs of points, and a point can be inside $A$ for some pairs but not inside $A$ for other pairs. 
    We define the \emph{degree} of a side $A$ in a tree $T'$ as the number of endpoints (counted with multiplicity) that are inside $A$ for some chord of $T'$.

\usetikzlibrary{intersections}

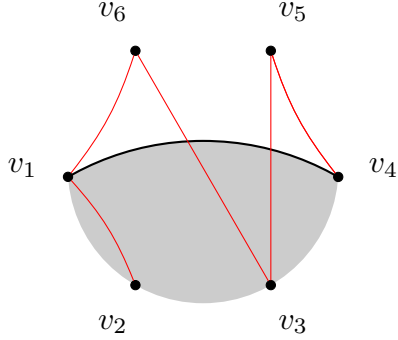
\begin{figure}[hbtp]
        \begin{center}
        \tikzstyle{vertex}=[circle,draw, minimum size=7pt, scale=0.5, inner sep=1pt, fill = black]
        \tikzstyle{fleche}=[->,>=latex]
        \tikzstyle{labell}=[text opacity=1, scale =1.2]
        \begin{tikzpicture}[scale=1.2]

        \def\firstcircle{(0,0) circle (1.5cm)}
        \def\secondcircle{(0,-2.7) circle (3cm)}
        \begin{scope}[shift={(3cm,-5cm)}]
            \draw [white, name path = convex] \firstcircle node[black] {};
            \begin{scope}
                \clip \firstcircle;
                \fill[fill opacity = 0.2] \secondcircle;
                \draw [name path = edge1, thick] \secondcircle;
            \fill[fill opacity = 0, name intersections={of=convex and edge1,total=\t}]
                \foreach \s in {1,...,\t}{(intersection-\s) node {}};
            \end{scope}
            \node[vertex] (i1) at (intersection-1) {} ;
            \node[vertex] (i2) at (intersection-2) {} ;
            \node[vertex] (a1) at (-120:1.5cm) {};
            \node[vertex] (a2) at (-60:1.5cm) {};
            \node[vertex] (a3) at (60:1.5cm) {};
            \node[vertex] (a4) at (120:1.5cm) {};
            \begin{scope}
                \clip \firstcircle;
                \draw (i1) to [bend left= 10] (a1) [red];
                \draw (i2) to [bend left = 10] (a3) [red];
                \draw (i2) to [bend left = 10] (a3) [red];
                \draw (a2) to (a4) [red];
                \draw (i1) to [bend right = 10] (a4) [red];
                \draw (a2) to (a3) [red];
            \end{scope}
            \node[labell] (l1) at (180:2cm) {$v_1$};
            \node[labell] (l2) at (-120:2cm) {$v_2$};
            \node[labell] (l3) at (-60:2cm) {$v_3$};
            \node[labell] (l4) at (0:2cm) {$v_4$};
            \node[labell] (l5) at (60:2cm) {$v_5$};
            \node[labell] (l6) at (120:2cm) {$v_6$};
        \end{scope}
        \end{tikzpicture}
        \end{center}
        \caption{The side $A$ of the edge $v_1v_4$ highlighted in grey contains $v_1,v_2,v_3$ and $v_4$. The degree of $A$ in the red tree is equal to $4$: $v_1$ and $v_2$ are inside $A$ for $v_1v_2$; and $v_3$ is twice inside $A$, once for $v_3v_5$ and once for $v_3v_6$.
        The vertex $v_1$ is outside $A$ for $v_1v_6$ (even if it is inside for $v_1v_2$), and so is $v_4$ for $v_4v_5$. }
        \label{fig:inside}
\end{figure}

The following lemma appeared in~\cite{bousquet2023note}. We give the proof for completeness.

\begin{lemma}\label{lem:flipborder}
    Let $T$ be a tree and $e$ be a border edge. Then there exists a non-crossing flip that adds $e$ in $T$ without removing any border edge of $T$ (except if $T$ only contains border edges).
\end{lemma}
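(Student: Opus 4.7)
The plan is to exploit the fact that $e$, being a border edge between two consecutive points of the convex hull, cannot be crossed by any edge of $T$: any edge crossing $e$ would need to have an endpoint in the open half-plane bounded by the line through $e$ that is disjoint from $C$, which is impossible. Hence $T \cup \{e\}$ is automatically non-crossing, and any flip $e' \rightsquigarrow e$ with $e' \in T$ chosen so that $(T \cup \{e\}) \setminus \{e'\}$ is a spanning tree will be a non-crossing flip. So the whole task reduces to exhibiting such an $e'$ that is not a border edge.

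Write $e = uv$. Since $T$ is a spanning tree and $e \notin T$, adding $e$ creates a unique cycle $C_e = \{e\} \cup P$, where $P$ is the unique $u$-$v$ path in $T$. Removing any edge of $P$ produces a spanning tree, so it suffices to show that, unless $T$ consists entirely of border edges, $P$ contains a chord. I would argue by contradiction: if every edge of $P$ were a border edge, then the only way to go from $u$ to $v$ using border edges other than $e$ itself is the border path going the long way around the convex hull $u = w_0, w_1, \dots, w_{n-1} = v$, which has exactly $n-1$ edges. Since $|E(T)| = n-1$, this would force $T = P$, meaning $T$ consists only of border edges, contradicting the assumption.

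Therefore $P$ contains at least one chord $e'$, and performing the flip $e' \rightsquigarrow e$ does the job: it is non-crossing by the first paragraph, it removes only a non-border edge $e'$, and it produces the spanning tree $(T \cup \{e\}) \setminus \{e'\}$ which contains $e$.

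There is no substantial obstacle here; the argument is essentially a cardinality count ($n-1$ tree edges versus the $n-1$ edges of the long border path) combined with the geometric observation that border edges cannot be crossed in a convex configuration. The only mild care concerns the degenerate case when $T$ already consists entirely of border edges, which is exactly the exception stated in the lemma.
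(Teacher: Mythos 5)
Your proof is correct and follows essentially the same route as the paper: add the border edge $e$ (which cannot create crossings), observe that the unique cycle must contain a chord unless $T$ consists only of border edges, and flip that chord into $e$. Your cardinality count on the long border path is just a slightly more explicit way of stating the paper's observation that a chordless cycle would force $T \cup \{e\}$ to be the whole convex hull.
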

\begin{proof}
    Adding $e$ to $T$ does not create any crossing, since $e$ is a border edge. 
    Moreover, the unique cycle in $T \cup \{ e \}$ must contain at least one chord $e'$, since otherwise $T \cup \{ e \}$ is precisely the convex hull of $C$. Adding $e$ and removing $e'$ is a non-crossing flip as claimed.
\end{proof}

\section{Upper bounds}\label{sec:upper}
This section aims at proving Theorem~\ref{thm:mainthm}.
\mainthm*

We say that two trees $T_I$ and $T_F$ on a convex point set $C$ form a \emph{minimal counterexample}  if the pair $(T_I,T_F)$ is a counterexample to Theorem~\ref{thm:mainthm}, and for every pair of trees $T'_I$ and $T'_F$, which are either defined on the same set of points and $\delta(T'_I, T'_F) < \delta(T_I, T_F)$ or on a smaller set of points, Theorem~\ref{thm:mainthm} holds.

Before giving all the details of the proof, let us first explain the main steps of the proof (see Figure~\ref{fig:upper} for an illustration).
First, we prove in Section~\ref{sec:bascounter} that a minimal counterexample $T_I$ and $T_F$ is non-trivially reducible, i.e. $T_I$ and $T_F$ have the same border edges and no common chord. 
In Section~\ref{sec:verygood}, we will observe that, in some sides (later called very good), we can match the $k_A$ chords in the side using at most $\frac{5}{3}k_A$ flips in total.
However, a very good side does not necessarily exist in a minimal counterexample. 
Thus, we will give tools in Section~\ref{sec:comb} to obtain a very good side from another special type of side (which we can also find in a minimal counterexample) without using too many flips.

\begin{figure}[hbtp]
        \begin{center}
        \tikzstyle{vertex}=[circle,draw, minimum size=7pt, scale=0.5, inner sep=1pt, fill = black]
        \tikzstyle{fleche}=[->,>=latex]
        \tikzstyle{labell}=[text opacity=1, scale =1.2]
        \begin{tikzpicture}[scale=1]

            \def\firstarc{($(0,1)+({3*cos(180)},{3*sin(180)})$) arc (-180:0:3)}
            
            \nodearc(a1)(0,1)(-140:3)[vertex]();
            \nodearc(a2)(0,1)(-40:3)[vertex]();
            \nodearc(a3)(0,1)(-60:3)[vertex]();
            \nodearc(a4)(0,1)(-80:3)[vertex]();
            \nodearc(a5)(0,1)(-100:3)[vertex]();
            \nodearc(a6)(0,1)(-120:3)[vertex]();

            \begin{scope}
            \clip \firstarc;
            \fill[black!10] (a1) to[bend left = 70] (a2) to[bend left = 90] (a1);
            \draw[thick]  (a1) to[bend left = 70] (a2) to[bend left = 90] (a1); 
            \end{scope}

            \centerarc[thick, dashed](0,1)(-140:-150:3);
            \centerarc[thick, dashed](0,1)(-40:-30:3);
            
            \centerarc[thick](0,1)(-140:-120:3);
            \centerarc[thick](0,1)(-80:-100:3);
            \draw[thick] (a5) to [bend left=60] (a3);
            \draw[thick] (a1) to [bend left=50] (a5);
            
            \draw[thick, red] (a1) to [bend left=20] (a6);
            \draw[thick, red] (a5) to [bend left=20] (a4);
            \draw[thick, red] (a6) to [bend left=60] (a4);

              \nodearc(a1)(0,1)(-140:3)[vertex]();
            \nodearc(a2)(0,1)(-40:3)[vertex]();
            \nodearc(a3)(0,1)(-60:3)[vertex]();
            \nodearc(a4)(0,1)(-80:3)[vertex]();
            \nodearc(a5)(0,1)(-100:3)[vertex]();
            \nodearc(a6)(0,1)(-120:3)[vertex]();

            \draw[thick, red] (a3) to  ({3*cos(-55)},0.5);
            \draw[thick, red] (a3) to  ({3*cos(-65)},0.5);
            \draw[thick, red] (a4) to  ({3*cos(-80)},0.5);
            \draw[thick, red] (a6) to  ({3*cos(-120)},0.5);
            \draw[thick, red, dashed] (a2) to  ({3*cos(-37)},-0.4);
            \draw[thick, red, dashed] (a1) to  ({3*cos(-143)},-0.4);

            \node (h0) at (0, 1) [draw, scale=1]{$\tau$-extremal side};

        \tikzset{xshift=3cm}

            \node (h0) at (1, -0.5) [labell, scale=0.8]{Lemma~\ref{lem:bad}};
            \draw[->, very thick] (0,-1) to (2,-1);
            \node (h0) at (1, -1.5) [labell, scale=0.8]{fill bad holes};
            
        \tikzset{xshift=5cm}

            \def\firstarc{($(0,1)+({3*cos(180)},{3*sin(180)})$) arc (-180:0:3)}
            
            \nodearc(a1)(0,1)(-140:3)[vertex]();
            \nodearc(a2)(0,1)(-40:3)[vertex]();
            \nodearc(a3)(0,1)(-60:3)[vertex]();
            \nodearc(a4)(0,1)(-80:3)[vertex]();
            \nodearc(a5)(0,1)(-100:3)[vertex]();
            \nodearc(a6)(0,1)(-120:3)[vertex]();

            \begin{scope}
            \clip \firstarc;
            \fill[black!10] (a1) to[bend left = 70] (a2) to[bend left = 90] (a1);
            \draw[thick]  (a1) to[bend left = 70] (a2) to[bend left = 90] (a1); 
            \end{scope}

            \centerarc[thick, dashed](0,1)(-140:-150:3);
            \centerarc[thick, dashed](0,1)(-40:-30:3);
            
            \centerarc[thick](0,1)(-140:-80:3);
            \centerarc[thick](0,1)(-80:-100:3);
            \draw[thick] (a5) to [bend left=60] (a3);
            \draw[thick, red] (a1) to [bend left=20] (a6);
            \draw[thick, red] (a5) to [bend left=20] (a4);
            \draw[thick, red] (a6) to [bend left=20] (a5);

              \nodearc(a1)(0,1)(-140:3)[vertex]();
            \nodearc(a2)(0,1)(-40:3)[vertex]();
            \nodearc(a3)(0,1)(-60:3)[vertex]();
            \nodearc(a4)(0,1)(-80:3)[vertex]();
            \nodearc(a5)(0,1)(-100:3)[vertex]();
            \nodearc(a6)(0,1)(-120:3)[vertex]();

            \draw[thick, red] (a3) to  ({3*cos(-55)},0.5);
            \draw[thick, red] (a3) to  ({3*cos(-65)},0.5);
            \draw[thick, red] (a4) to  ({3*cos(-80)},0.5);
            \draw[thick, red] (a6) to  ({3*cos(-120)},0.5);
            \draw[thick, red, dashed] (a2) to  ({3*cos(-37)},-0.4);
            \draw[thick, red, dashed] (a1) to  ({3*cos(-143)},-0.4);

            \node (h0) at (0, 1) [draw, scale = 1]{good side};

            \node (h0) at (1.1, -2.8) [labell, scale=0.8]{Lemma~\ref{lem:good}};
            \draw[->, very thick] (0,-2.2) to (0,-3.4);
            \node (h0) at (-1.3, -2.6) [labell, scale=0.8]{remove extra};
            \node (h0) at (-1.3, -3.0) [labell, scale=0.8]{crossing chords};

        \tikzset{yshift=-4cm}

            \def\firstarc{($(0,1)+({3*cos(180)},{3*sin(180)})$) arc (-180:0:3)}
            
            \nodearc(a1)(0,1)(-140:3)[vertex]();
            \nodearc(a2)(0,1)(-40:3)[vertex]();
            \nodearc(a3)(0,1)(-60:3)[vertex]();
            \nodearc(a4)(0,1)(-80:3)[vertex]();
            \nodearc(a5)(0,1)(-100:3)[vertex]();
            \nodearc(a6)(0,1)(-120:3)[vertex]();

            \begin{scope}
            \clip \firstarc;
            \fill[black!10] (a1) to[bend left = 70] (a2) to[bend left = 90] (a1);
            \draw[thick]  (a1) to[bend left = 70] (a2) to[bend left = 90] (a1); 
            \end{scope}

            \centerarc[thick, dashed](0,1)(-140:-150:3);
            \centerarc[thick, dashed](0,1)(-40:-30:3);
            
            \centerarc[thick](0,1)(-140:-80:3);
            \centerarc[thick](0,1)(-80:-100:3);
            \draw[thick] (a5) to [bend left=60] (a3);
            \draw[thick, red] (a1) to [bend left=20] (a6);
            \draw[thick, red] (a5) to [bend left=20] (a4);
            \draw[thick, red] (a6) to [bend left=20] (a5);

              \nodearc(a1)(0,1)(-140:3)[vertex]();
            \nodearc(a2)(0,1)(-40:3)[vertex]();
            \nodearc(a3)(0,1)(-60:3)[vertex]();
            \nodearc(a4)(0,1)(-80:3)[vertex]();
            \nodearc(a5)(0,1)(-100:3)[vertex]();
            \nodearc(a6)(0,1)(-120:3)[vertex]();

            \draw[thick, red] (a3) to  ({3*cos(-55)},0.5);
            \draw[thick, red] (a4) to  ({3*cos(-80)},0.5);
            \draw[thick, red, dashed] (a2) to  ({3*cos(-37)},-0.4);
            \draw[thick, red, dashed] (a1) to  ({3*cos(-143)},-0.4);   
            
            \node (h0) at (0,-2.5) [draw, scale=1]{very good side};

            \tikzset{xshift=-5cm}

            \node (h0) at (1, -0.5) [labell, scale=0.8]{Lemma~\ref{lem:very_good}};
            \draw[->, very thick] (2,-1) to (0,-1);
            \node (h0) at (1, -1.5) [labell, scale=0.8]{match remaining};
            \node (h0) at (1, -1.9) [labell, scale=0.8]{chords};
            
            \tikzset{xshift=-3cm}

            \def\firstarc{($(0,1)+({3*cos(180)},{3*sin(180)})$) arc (-180:0:3)}
            
            \nodearc(a1)(0,1)(-140:3)[vertex]();
            \nodearc(a2)(0,1)(-40:3)[vertex]();
            \nodearc(a3)(0,1)(-60:3)[vertex]();
            \nodearc(a4)(0,1)(-80:3)[vertex]();
            \nodearc(a5)(0,1)(-100:3)[vertex]();
            \nodearc(a6)(0,1)(-120:3)[vertex]();

            \begin{scope}
            \clip \firstarc;
            \fill[black!10] (a1) to[bend left = 70] (a2) to[bend left = 90] (a1);
            \draw[thick]  (a1) to[bend left = 70] (a2); 
            \draw[thick, red]  (a1) to[bend left = 50] (a2); 
            \end{scope}

            \centerarc[thick, dashed](0,1)(-140:-150:3);
            \centerarc[thick, dashed](0,1)(-40:-30:3);
            
            \centerarc[thick](0,1)(-140:-80:3);
            \centerarc[thick](0,1)(-80:-100:3);
            \draw[thick] (a5) to [bend left=60] (a3);
            \draw[thick, red] (a5) to [bend left=40] (a3);
            \draw[thick, red] (a1) to [bend left=20] (a6);
            \draw[thick, red] (a5) to [bend left=20] (a4);
            \draw[thick, red] (a6) to [bend left=20] (a5);

              \nodearc(a1)(0,1)(-140:3)[vertex]();
            \nodearc(a2)(0,1)(-40:3)[vertex]();
            \nodearc(a3)(0,1)(-60:3)[vertex]();
            \nodearc(a4)(0,1)(-80:3)[vertex]();
            \nodearc(a5)(0,1)(-100:3)[vertex]();
            \nodearc(a6)(0,1)(-120:3)[vertex]();

            \draw[thick, red, dashed] (a2) to  ({3*cos(-37)},-0.4);
            \draw[thick, red, dashed] (a1) to  ({3*cos(-143)},-0.4); 
            
            \node (h0) at (0,-2.5) [draw, scale=1]{matched side};

        \end{tikzpicture}
        \end{center}
        \caption{The main steps in the proof of Theorem~\ref{thm:mainthm}. The goal is to match the chords in a side using few steps. (We define $\tau$-extremal, bad, good and very good later).}
        \label{fig:upper}
\end{figure}
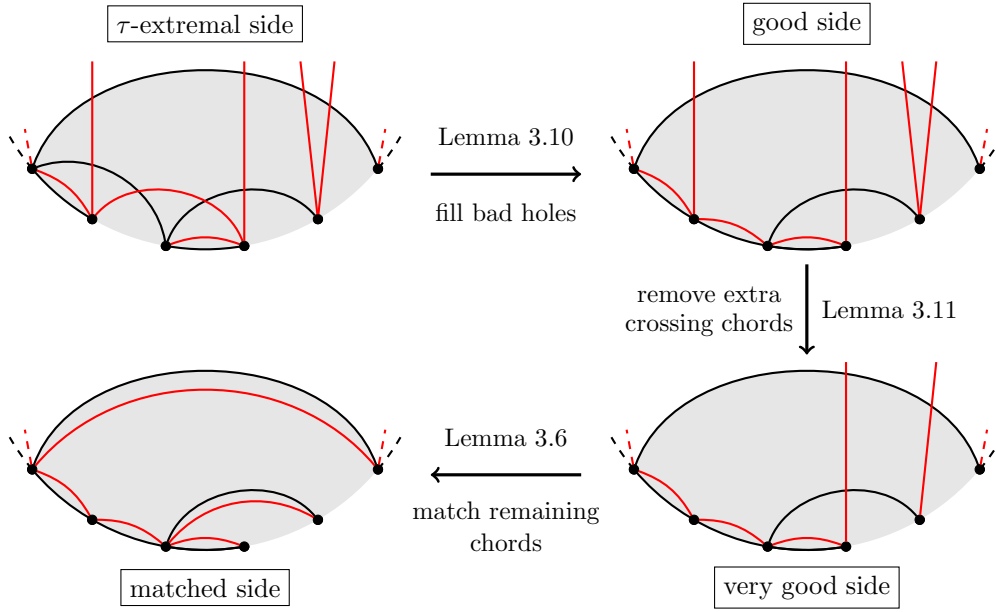

\subsection{Basic properties of a minimal counterexample}\label{sec:bascounter}

First, we prove the base case.

\begin{lemma}
Let $T_I$ and $T_F$ be a minimal counterexample.
Then both $T_I$ and $T_F$ have a chord.
\end{lemma}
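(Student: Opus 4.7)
My plan is to argue by contradiction using the minimality of $(T_I, T_F)$. By the symmetry of the counterexample condition (flip sequences are reversible and $\delta$ is symmetric), it suffices to show that $T_I$ has a chord. Suppose instead that $T_I$ has no chord. Then every edge of $T_I$ is a border edge, and since $T_I$ has $n-1$ edges while the convex hull of $C$ contributes $n$ border edges, $T_I$ must be the full border cycle with a single border edge $e_1$ removed — in particular, a Hamiltonian path along the boundary.

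I then split on whether $T_F$ has a chord. If $T_F$ also has no chord, the same reasoning makes it a Hamiltonian path along the boundary missing a single border edge $e_2$; since $\delta(T_I, T_F) \geq 1$ forces $e_1 \neq e_2$, we obtain $\delta(T_I, T_F) = 1$, and the single flip $e_2 \rightsquigarrow e_1$ — valid because both edges are border edges — transforms $T_I$ into $T_F$ in $1 \leq c$ flips, contradicting the counterexample assumption.

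Otherwise $T_F$ contains some chord $f$. The unique cycle in $T_I \cup \{f\}$ consists of $f$ together with the path $P$ of border edges in $T_I$ joining the two endpoints of $f$. My key step is to pick an edge $e' \in P \setminus T_F$: such an edge must exist, for otherwise $T_F$ would contain the entire cycle $P \cup \{f\}$, contradicting that $T_F$ is a tree. Since $T_I$ consists solely of border edges and no chord can cross a border edge, $T_I \cup \{f\}$ is non-crossing, so the flip $e' \rightsquigarrow f$ is a valid non-crossing flip and produces a tree $T_I'$ with $\delta(T_I', T_F) = \delta(T_I, T_F) - 1$. By the minimality of the counterexample $(T_I, T_F)$, there is a flip sequence from $T_I'$ to $T_F$ of length at most $c \cdot (\delta(T_I, T_F) - 1)$. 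Prepending the single initial flip yields a flip sequence from $T_I$ to $T_F$ of length at most $1 + c(\delta(T_I, T_F) - 1) = c \cdot \delta(T_I, T_F) - (c - 1) < c \cdot \delta(T_I, T_F)$ since $c > 1$, contradicting that $(T_I, T_F)$ is a counterexample.

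The only delicate step is the selection of $e'$: removing an arbitrary edge of $P$ risks taking an edge that belongs to $T_F$, in which case the symmetric difference would fail to strictly decrease and the induction would break. The acyclicity of $T_F$ resolves this immediately, guaranteeing that at least one edge of $P$ lies outside $T_F$.
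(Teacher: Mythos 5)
Your proof is correct, but it runs in the opposite direction from the paper's. The paper keeps the chordless tree $T_I$ fixed and repeatedly applies Lemma~\ref{lem:flipborder} to $T_F$: each flip installs a border edge of $T_I\setminus T_F$ while deleting a chord of $T_F$ (necessarily outside $T_I$, which has no chords), so the symmetric difference drops by one at every step and one gets an explicit flip sequence of length exactly $\delta(T_I,T_F)\le c\,\delta(T_I,T_F)$ — no appeal to minimality is needed beyond the contradiction itself. You instead modify $T_I$: you observe it must be the boundary Hamiltonian path, perform a single flip $e'\rightsquigarrow f$ that installs a chord $f$ of $T_F$ while removing a border edge $e'\notin T_F$ chosen on the cycle of $T_I\cup\{f\}$ (your acyclicity argument for the existence of $e'$ is the right one, and $T_I\cup\{f\}$ is indeed non-crossing since chords cannot cross border edges), and then invoke the minimality of the counterexample on the pair $(T_I',T_F)$ with $\delta$ reduced by one; the chordless--chordless case is handled separately by a single flip. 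Both routes are sound; the paper's is slightly more economical because it reuses Lemma~\ref{lem:flipborder} and needs no case distinction, while yours is self-contained at this point, at the price of one explicit geometric verification and the extra base case.
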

\begin{proof}
Assume by contradiction that $T_I$ contains only border edges. By Lemma~\ref{lem:flipborder}, we can iteratively add border edges to $T_F$ by decreasing the symmetric difference at each step until the two trees are the same. So there is a flip sequence between $T_I$ and $T_F$ of length at most $\delta(T_I,T_F)$, a contradiction. The other case holds by symmetry.
\end{proof}

\begin{lemma}\label{lem:common_chord}
Let $T_I$ and $T_F$ be a minimal counterexample.
Then $T_I$ and $T_F$ do not have common chords.
\end{lemma}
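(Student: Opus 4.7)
The plan is to derive a contradiction from the minimality of $(T_I, T_F)$ by decomposing along a common chord. Suppose $e = uv \in T_I \cap T_F$ is a common chord, and let $A, B$ be its two sides (so $A \cap B = \{u, v\}$ and $|A|, |B| < n$, since $e$ is a chord, each side contains at least one interior point). Because both trees are non-crossing and contain $e$, no other edge of either tree crosses $e$, so every such edge lies entirely in one of the two closed half-planes. Hence $T_I[A]$ and $T_F[A]$ are non-crossing spanning trees of $A$ in which $e$ appears as a border edge (and similarly for $B$). Counting edges side-by-side, with $e$ common to both trees, yields
\[
\delta(T_I, T_F) = \delta(T_I[A], T_F[A]) + \delta(T_I[B], T_F[B]).
\]

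By minimality of $(T_I, T_F)$, Theorem~\ref{thm:mainthm} applies to the two subproblems, which live on strictly smaller point sets. Let $\sigma_A$ be a flip sequence from $T_I[A]$ to $T_F[A]$ of length at most $c \cdot \delta(T_I[A], T_F[A])$ and $\sigma_B$ an analogous one for the $B$-side. I will lift these into a single flip sequence on $C$ starting from $T_I$: first perform the flips of $\sigma_A$ in order (each removing some $e_1 \in A$ and adding some $e_1' \in A$), then those of $\sigma_B$.

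The main verification is that every lifted flip is valid. Non-crossingness of the result is immediate because $e_1'$ lies in the closed half-plane of $A$ and therefore cannot cross any edge contained in the opposite closed half-plane (they can only share an endpoint at $u$ or $v$). To see that $T^{(i-1)} - e_1 + e_1'$ is still a tree, I maintain the invariant that $T^{(i-1)}[A]$ is the $(i{-}1)$-th tree in $\sigma_A$ while $T^{(i-1)}[B]$ equals either $T_I[B]$ or $T_I[B] - e$ according to whether $e$ is currently in $T^{(i-1)}$. Removing $e_1$ splits $T^{(i-1)}[A]$ into two components $A_1, A_2$; attaching the connected pieces of $T^{(i-1)}[B]$ via $u$ and $v$ then produces the two whole-tree components of $T^{(i-1)} - e_1$. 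A short case analysis on whether $u, v$ lie in the same component of $T^{(i-1)}[A] - e_1$ (and, if $e \notin T^{(i-1)}$, how $B_u, B_v$ attach) shows that any edge $e_1' \in A$ connecting $A_1$ and $A_2$ in the subproblem also connects the two whole-tree components. After $\sigma_A$ finishes, the tree has $T_F[A]$ on side $A$ and $T_I[B]$ on side $B$ (since $e \in T_F[A]$, we are back in the invariant case where $T[B] = T_I[B]$), and $\sigma_B$ can be applied symmetrically to reach $T_F$.

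The total length of the combined sequence is at most
\[
c \cdot \big( \delta(T_I[A], T_F[A]) + \delta(T_I[B], T_F[B]) \big) = c \cdot \delta(T_I, T_F),
\]
contradicting the assumption that $(T_I, T_F)$ is a counterexample. The only delicate point, and the one that requires explicit case analysis, is what happens at steps of $\sigma_A$ that flip $e$ itself: removing $e$ from the whole tree also splits $T[B]$ into the two components $B_u, B_v$, but the reconnection offered by $e_1'$ on side $A$ is precisely what is needed to keep the whole graph a tree.
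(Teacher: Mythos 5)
Your proof is correct and follows essentially the same route as the paper: split along the common chord $e$, apply the minimality assumption to the two induced subproblems on the sides $A$ and $B$, concatenate the lifted flip sequences, and use $\delta(T_I,T_F)=\delta(T_I[A],T_F[A])+\delta(T_I[B],T_F[B])$ to reach the contradiction. The only difference is that you verify in detail the lifting step (including the case where $e$ itself is flipped inside $\sigma_A$), which the paper asserts in one sentence.
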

\begin{proof}
    Assume by contradiction that $T_I$ and $T_F$ have a common chord $e$.
    Let $A $ and $B$ be the two sides of $e$.  
    Since $e$ is a chord, $|A|$ and $|B|$ are smaller than $|C|$.
    Since $e$ belongs to both $T_I$ and $T_F$, $T_I[A]$ and $T_F[A]$ is a pair of non-crossing spanning trees of $A$. 
    By minimality of the counterexample, there exists a flip sequence between $T_I[A]$ and $T_F[A]$ of length at most $c \cdot \delta(T_I[A], T_F[A])$. 
    Note that this sequence of flips can be performed for whole set of points $C$ starting from $T_I$ while keeping connectivity. Let $T_I'$ be the tree obtained from $T_I$ after applying this flip sequence. 
    Note that $T_I'$ and $T_F$ agree on $A$. 
    
    The same argument on $T_I'[B]$ and $T_F[B]$ ensures that there also exists a flip sequence between $T_I'[B]$ and $T_F[B]$ of length at most $c \cdot \delta(T_I'[B], T_F[B])$. 
    Hence, there exists a flip sequence sequence between $T_I$ and $T_F$ of length at most $c \cdot \delta(T_I[A], T_F[A]) +  c \cdot \delta(T_I'[B], T_F[B]) = c \cdot \delta(T_I, T_F)$ (the equality holds since $e$ is an edge of all the trees $T_I[A], T_F[A]$, $T_I[B]$ and $T_F[B]$ and then does not belong to the symmetric difference of any pair), a contradiction.
\end{proof}

\begin{lemma}\label{lem:common_border} 
Let $T_I$ and $T_F$ be a minimal counterexample.
    Every border edge of $T_I$ is a border edge of~$T_F$.
\end{lemma}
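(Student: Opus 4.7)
My plan is to argue by contradiction, reducing to a pair with strictly smaller symmetric difference on the same point set (so that minimality applies). Suppose there is some border edge $e = uv$ of $T_I$ that is not in $T_F$. Since we have already shown that $T_F$ contains at least one chord, Lemma~\ref{lem:flipborder} gives a non-crossing flip $e' \rightsquigarrow e$ in $T_F$ where $e'$ is a chord of $T_F$. Let $T_F' = (T_F \setminus \{e'\}) \cup \{e\}$. This uses exactly one flip to modify $T_F$ into $T_F'$.

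The key observation is that this single flip reduces $\delta$ by one. Indeed, $e \in T_I \setminus T_F$ becomes $e \in T_I \cap T_F'$, which removes $e$ from the symmetric difference. Moreover, since $e'$ is a chord of $T_F$, Lemma~\ref{lem:common_chord} forces $e' \notin T_I$; thus $e'$ was in $T_F \setminus T_I$ but is now in neither tree, so it is also removed from the symmetric difference. Hence
\[
\delta(T_I, T_F') = \delta(T_I, T_F) - 1.
\]

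Now I would apply the minimality assumption. Since $T_F'$ is defined on the same point set $C$ as $T_F$ but has strictly smaller symmetric difference with $T_I$, Theorem~\ref{thm:mainthm} holds for the pair $(T_I, T_F')$, giving a flip sequence from $T_I$ to $T_F'$ of length at most $c \cdot \delta(T_I, T_F') = c(\delta(T_I,T_F)-1)$. Prepending (or appending) the single flip $e \rightsquigarrow e'$ yields a flip sequence from $T_I$ to $T_F$ of length at most
\[
c(\delta(T_I,T_F)-1) + 1 \le c \cdot \delta(T_I,T_F),
\]
using $c \ge 1$. This contradicts $(T_I,T_F)$ being a counterexample, completing the proof.

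The only subtlety to verify is that Lemma~\ref{lem:flipborder} genuinely supplies a flip whose removed edge is a chord (not a border edge), which is precisely its guarantee under the assumption that $T_F$ has a chord; that hypothesis is ensured by the base case lemma proven just above. No other obstacle arises: the argument mirrors the ``common chord'' reduction of Lemma~\ref{lem:common_chord}, with Lemma~\ref{lem:flipborder} playing the role of allowing us to spend exactly one flip to create a common border edge without creating a common chord that would obstruct the count.
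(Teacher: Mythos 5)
Your proof is correct and follows essentially the same route as the paper: use Lemma~\ref{lem:flipborder} to flip a chord $e'$ of $T_F$ into the missing border edge $e$, invoke Lemma~\ref{lem:common_chord} to see $e'\notin T_I$ so the flip decreases $\delta$ by one, then apply minimality and append the reverse flip. The only (harmless) cosmetic difference is that the paper concludes with the strict bound $c(\delta(T_I,T_F)-1)+1 < c\,\delta(T_I,T_F)$, while you settle for $\le$, which suffices equally for the contradiction.
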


\begin{proof}
    Assume by contradiction that there is a border edge $e$ of $T_I$ that is not an edge of $T_F$. 
    Since $T_F$ has at least one chord by Lemma~\ref{lem:flipborder}, we can perform $e^* \rightsquigarrow e$ in $T_F$, where $e^*$ is a chord of $T_F$.
    By Lemma~\ref{lem:common_chord}, $e^*$ is not contained in $T_I$.
    Let $T_F^*$ be the non-crossing spanning tree obtained from $T_F$ after flipping $e^*$ into $e$. 
    By minimality of $T_I$ and $T_F$, there exists a flip sequence between $T_I$ and $T_F^*$ of length at most $c \cdot \delta(T_I, T^*_F) = c(\delta(T_I, T_F) - 1)$.
    Thus, there exists a flip sequence between $T_I$ and $T_F$ of length at most $c \cdot \delta(T_I, T^*_F) + 1 < c \cdot \delta(T_I, T_F) $, a contradiction.
\end{proof}

We say that two trees $T$ and $T'$ form a \emph{nice pair} of trees if the two trees have no common chord and have the same border edges. 
Note that for a nice pair of trees, every pair of consecutive points is either a common hole or a common border edge. 
Thus, for a nice pair of trees $(T_1, T_2)$, we will refer to a hole of $T_1$ or $T_2$ simply as a hole.
Lemmas~\ref{lem:common_chord} and~\ref{lem:common_border} ensure that the following holds:

\begin{corollary}\label{cor:nice}
    A minimal counterexample is a nice pair of trees. 
\end{corollary}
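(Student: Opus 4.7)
The plan is to observe that Corollary~\ref{cor:nice} is essentially a direct combination of the two preceding lemmas, together with a symmetry argument. Recall the definition of a nice pair: two trees that (i) share no chord and (ii) have the same border edges. Condition (i) is given verbatim by Lemma~\ref{lem:common_chord}, so the only work is to establish condition (ii).

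For (ii), Lemma~\ref{lem:common_border} provides one direction: every border edge of $T_I$ is a border edge of $T_F$. To obtain the reverse inclusion, I would invoke symmetry: the notion of minimal counterexample is symmetric in its two arguments, since $\delta(T_I,T_F)=\delta(T_F,T_I)$ and the statement of Theorem~\ref{thm:mainthm} itself is symmetric in the roles of the two trees. Hence $(T_F,T_I)$ is also a minimal counterexample, and applying Lemma~\ref{lem:common_border} to this pair yields that every border edge of $T_F$ is a border edge of $T_I$. Combined with the direction already proved, this gives equality of the border edge sets.

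The only subtle point to check carefully before writing the final argument is whether the definition of minimal counterexample is genuinely symmetric. Looking back, the definition only refers to $\delta(T_I, T_F)$ and the underlying point set, both of which are manifestly symmetric in the two trees, so this step is immediate. There is no real obstacle here; the whole argument is a two-line invocation of Lemma~\ref{lem:common_chord} and two symmetric applications of Lemma~\ref{lem:common_border}, followed by the remark already made in the text that for a nice pair every pair of consecutive points is then either a common hole or a common border edge. This shared structure is what lets the rest of Section~\ref{sec:upper} talk unambiguously about ``the holes'' of the pair, rather than about holes of an individual tree.
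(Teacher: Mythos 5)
Your proof is correct and follows essentially the same route as the paper, which derives Corollary~\ref{cor:nice} directly from Lemmas~\ref{lem:common_chord} and~\ref{lem:common_border}; your only addition is to make explicit the (valid) symmetry argument, exchanging $T_I$ and $T_F$, needed to get both inclusions of border edges from the one-directional statement of Lemma~\ref{lem:common_border}.
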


A \emph{face} $f$ of a tree $T$ is a face, different from the outer face, of the plane graph obtained by filling the holes of $T$ with edges. 
Note that, since $T$ is connected, every face contains exactly one hole on its boundary and every hole is on the boundary of exactly one face. 
Thus, there is a bijection between holes and faces of a tree. 
The face \emph{containing} a hole $h$ of a tree $T$ is the face $f$ such that $h$ belongs to the boundary of $f$. We say that the hole $h$ is \emph{contained in} the face $f$ in $T$. 

\begin{figure}[hbtp]
        \begin{center}
        \tikzstyle{vertex}=[circle,draw, minimum size=7pt, scale=0.5, inner sep=1pt, fill = black]
        \tikzstyle{fleche}=[->,>=latex]
        \tikzstyle{labell}=[text opacity=1, scale =1.2]
        \begin{tikzpicture}[scale=1.2]

         \begin{scope}
            \clip (0,0) circle (1.5cm);
            \fill[black!10] (90:1.5) to (-150:1.5) to[bend left = 90] (90:1.5);
            \draw[pattern=hatch, hatch size=8, hatch angle = 80, pattern color=orange!70] (90:1.5) to (-150:1.5) to[bend right = 90] (-90:1.5) to (-90:1.5);
            \fill[red!15] (90:1.5) to (-90:1.5) to[bend right = 90] (0:1.5) to[bend right = 90] (90:1.5);
            
        \end{scope}
        \nodearc(a1)(0,0)(90:1.5)[vertex]();
        \nodearc(a2)(0,0)(-90:1.5)[vertex]();
        \nodearc(a3)(0,0)(-30:1.5)[vertex]();
        \nodearc(a4)(0,0)(30:1.5)[vertex]();
        \nodearc(a5)(0,0)(150:1.5)[vertex]();
        \nodearc(a6)(0,0)(-150:1.5)[vertex]();

        \centerarc[thick](0,0)(-90:-30:1.5);
        \centerarc[thick](0,0)(30:90:1.5);
        \centerarc[thick](0,0)(150:210:1.5);

        \draw[thick] (a1) to (a2);
        \draw[thick] (a1) to (a6); 
        
        \nodearc(h4)(0,0)(120:1.8)[labell]($h_1$);
        \nodearc(h5)(0,0)(-120:1.8)[labell]($h_2$);
        \nodearc(h4)(0,0)(0:1.8)[labell]($h_3$);
        \nodearc(h4)(0,0)(150:1.1)[labell]($f_1$);
        \nodearc(h5)(0,0)(-150:0.6)[labell, orange!80!black]($f_2$);
        \nodearc(h4)(0,0)(0:0.75)[labell, red]($f_3$);

        \end{tikzpicture}
        \end{center}
        \caption{The tree $T$ has three faces $f_1$, $f_2$ and $f_3$. The face $f_i$ contains the hole $h_i$ in $T$.}
        \label{fig:face}
\end{figure}
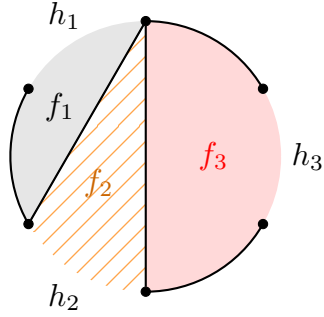

\begin{observation}\label{ob:bound_hole}
    For every hole $h$, we can fill $h$ in $T$ by flipping any chord on the boundary of the face of $T$ containing $h$. 
\end{observation}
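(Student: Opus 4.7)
The plan is to directly verify that for any chord $e$ of $T$ on the boundary of the face $f$ containing $h$, the operation $e \rightsquigarrow h$ is a valid flip filling $h$. Three conditions must be checked: the resulting graph $(T \cup \{h\}) \setminus \{e\}$ must be non-crossing, connected, and acyclic.

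For non-crossingness, I would first argue that $T \cup \{h\}$ itself is non-crossing. Indeed, $h$ joins two consecutive points on the convex hull, so its straight-line segment lies on the boundary of the convex hull and can cross neither the other border edges of $T$ (which join distinct pairs of consecutive points) nor the chords of $T$ (which are segments lying strictly inside the convex hull).

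For the remaining conditions, I would identify the unique cycle of $T \cup \{h\}$ with the boundary of $f$. Consider the plane graph $\widehat T$ obtained from $T$ by filling every hole with an edge. By the face--hole bijection established immediately before the observation, $f$ is bounded by a simple cycle $C_f$ whose only hole-edge is $h$, the remaining edges of $C_f$ being edges of $T$ on the boundary of $f$. Hence $C_f$ is contained in $T \cup \{h\}$, and since $T$ is acyclic, $C_f$ must be the unique cycle of $T \cup \{h\}$. Therefore any chord $e$ of $T$ on the boundary of $f$ belongs to $C_f$, and removing it from $T \cup \{h\}$ breaks the unique cycle without disconnecting the graph, yielding the desired spanning tree, which remains non-crossing as a subgraph of $T \cup \{h\}$.

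The only mild subtlety is the identification of the unique cycle of $T \cup \{h\}$ with the facial boundary of $f$; this rests on the fact that $\widehat T$ is $2$-connected (ensured by the outer $n$-cycle formed by border edges and filled holes), so each bounded face is bounded by a simple cycle. Once this identification is in place, the remaining verifications are immediate.
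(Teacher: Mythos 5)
Your proof is correct: the paper states this as an observation with no written proof, and your argument fills in exactly the intended reasoning — the unique cycle of $T \cup \{h\}$ is the boundary cycle of the face containing $h$ (which, by the face--hole bijection stated just before the observation, contains no hole edge other than $h$), so deleting any chord of that boundary yields a non-crossing spanning tree. Your extra care about the facial boundary being a simple cycle (via $2$-connectivity of the graph with all holes filled) is a legitimate and correctly handled detail that the paper leaves implicit.
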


\subsection{Very good sides}\label{sec:verygood}

Let $T$ and $T'$ be a pair of trees.
A \emph{good} side $A$ of $T$ with respect to $T'$ is a side of $T$ containing no chord of $T'$ (see Figure~\ref{fig:good} for an illustration). 
A good side $A$ is \emph{very good} (w.r.t. $T'$) if the degree of $A$ in $T'$ is at most $k_A$. (Recall that $k_A$ denotes the number of holes in $A$). Remark that, for a side $A$ of an edge $e$ of $T$, if $A$ is a good side of $T$ w.r.t. a tree $T'$, the degree of $A$ in $T'$ is equal to the number of chords of $T'$ crossing $e$.

\begin{figure}[hbtp]
        \begin{center}
        \tikzstyle{vertex}=[circle,draw, minimum size=7pt, scale=0.5, inner sep=1pt, fill = black]
        \tikzstyle{fleche}=[->,>=latex]
        \tikzstyle{labell}=[text opacity=1, scale =1.2]
        \begin{tikzpicture}[scale=1]

            \def\firstarc{($(0,1)+({3*cos(180)},{3*sin(180)})$) arc (-180:0:3)}
            
            \nodearc(a1)(0,1)(-140:3)[vertex]();
            \nodearc(a2)(0,1)(-40:3)[vertex]();
            \nodearc(a3)(0,1)(-65:3)[vertex]();
            \nodearc(a4)(0,1)(-90:3)[vertex]();
            \nodearc(a5)(0,1)(-115:3)[vertex]();

            \begin{scope}
            \clip \firstarc;
            \fill[black!10] (a1) to[bend left = 70] (a2) to[bend left = 90] (a1); 
            \draw[thick]  (a1) to[bend left = 70] (a2) to[bend left = 90] (a1); 
            \end{scope}

            \centerarc[thick, dashed](0,1)(-140:-150:3);
            \centerarc[thick, dashed](0,1)(-40:-30:3);
            
            \centerarc[thick](0,1)(-140:-115:3);
            \centerarc[thick](0,1)(-90:-65:3);
            \draw[thick] (a1) to [bend left=50] (a4);
            
            \draw[thick, red] (a1) to [bend left=20] (a5);
            \draw[thick, red] (a4) to [bend left=20] (a3);

            \nodearc(a1)(0,1)(-140:3)[vertex]();
            \nodearc(a2)(0,1)(-40:3)[vertex]();
            \nodearc(a3)(0,1)(-65:3)[vertex]();
            \nodearc(a4)(0,1)(-90:3)[vertex]();
            \nodearc(a5)(0,1)(-115:3)[vertex]();

            \draw[thick, red] (a3) to  ({3*cos(-65)},0.4);
            \draw[thick, red] (a5) to  ({3*cos(-110)},0.4);
            \draw[thick, red] (a5) to  ({3*cos(-120)},0.4);
            \draw[thick, red, dashed] (a2) to  ({3*cos(-37)},-0.4);
            \draw[thick, red, dashed] (a1) to  ({3*cos(-143)},-0.4);

            \node (h3) at ({3*cos(90)},0.15) [labell]{$e$};
            \nodearc(a4)(0,1)(-102.5:3.3)[labell]($h$);
            \nodearc(a5)(0,1)(-52.5:3.3)[labell]($h'$);

        \end{tikzpicture}
        \end{center}
        \caption{Let $T_1$ be the black tree and $T_2$ the red tree. The side $A$ (in grey) of $e$ is a good side of $T_1$ w.r.t. $T_2$ since there is no chord of $T_2$ inside $A$, but $A$ is not very good w.r.t $T_2$ since the degree of $A$ in $T_2$ is $3 > k_A = 2$.} 
        \label{fig:good}
\end{figure}
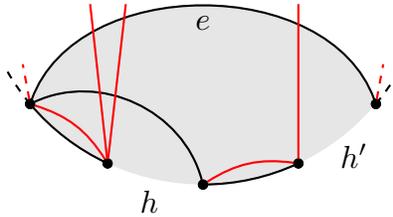

The goal of Section~\ref{sec:verygood} is to prove that the following holds:

\begin{restatable}{lemma}{verygood} \label{lem:very_good}
 Let $T_1$ and $T_2$ be a nice pair of trees, $e$ be a chord of $T_1$, and $A$ be a very good side of $e$ (w.r.t. $T_2$). Then, we can match $k_A$ pairs of chords of $T_1$ and $T_2$ using at most $\frac{5}{3} k_A$ flips in total. 
\end{restatable}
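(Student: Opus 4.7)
My plan is to proceed by strong induction on $k_A$, using Remark~\ref{rmk:min_incl} to extract an inclusion-minimal chord of $T_1$ in $A$, match it with a chord of $T_2$, and then recurse on the residual structure. The base case $k_A = 1$ is immediate: $A$ contains exactly one chord $e$ of $T_1$ and one hole $h$ on the face of $T_1[A]$ bounded by $e$, so Observation~\ref{ob:bound_hole} fills $h$ by flipping $e$ in a single flip, matching the unique pair in $1 \leq \tfrac{5}{3}$ flips.

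For the inductive step, pick an inclusion-minimal chord $e' = u'v'$ of $T_1$ in $A$ with $k_{A'} = 1$; inside $A'$, the tree $T_1$ is exactly $e'$ together with the border path from $u'$ to $v'$ skipping the unique hole $h' \subset A'$. Let $d'$ be the number of chords of $T_2$ crossing $e'$; since $A$ is good, every such chord has its other endpoint outside $A$. I distinguish a \emph{cheap case} $d' \geq 1$: pick any $T_2$-chord $f$ crossing $e'$; since $e'$ is the only $T_1$-chord in $A'$ and $f$ meets no other $T_1$-chord inside $A'$, the exchange $e' \rightsquigarrow f$ is a valid non-crossing flip. This matches the pair $(e',f)$ in one flip and leaves a residual structure on $A$ with $k_A - 1$ chords that remains very good w.r.t.\ $T_2$, so induction applies. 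And an \emph{expensive case} $d' = 0$: since no $T_2$-chord crosses $e'$, I instead fill $h'$ by flipping $e'$ to its border edge (Observation~\ref{ob:bound_hole}), and perform one additional flip in the residual structure to produce the matched chord pair, spending $2$ flips per match.

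Writing $t$ and $s$ for the numbers of cheap and expensive matches across the whole recursion, the total flip count is $t + 2s$, and the bound $t + 2s \leq \tfrac{5}{3}(s+t)$ is equivalent to $s \leq 2t$. The main obstacle of the proof is establishing this amortization inequality by a charging argument. Intuitively, when an inclusion-minimal chord $e'$ has $d' = 0$, the non-crossing property of $T_2$ combined with the global very-good hypothesis $d \leq k_A$ should force the ``parent'' chords of $e'$ in the inclusion hierarchy on $A$ to be crossed by at least two $T_2$-chords, providing the two cheap matches needed to absorb the overflow of an expensive step. Formalizing this structural claim and arguing that the $\tfrac13$-charges assigned from each expensive match to two distinct cheap matches do not conflict across different expensive steps is the technical heart of the argument; I expect it to rely crucially on the planarity of $T_2$ together with the tightness of the bound $d \leq k_A$.
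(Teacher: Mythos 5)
Your overall plan (induct on $k_A$ via an inclusion-minimal chord $e'$, as in Remark~\ref{rmk:min_incl}) is the same skeleton as the paper's proof, but the step that does all the work is wrong, and the dichotomy you build the amortization on is inverted. In your ``cheap case'' you claim that whenever some chord $f$ of $T_2$ crosses $e'$, a single flip matches $(e',f)$. This fails as soon as two or three chords of $T_2$ cross $e'$ (which the very-good hypothesis allows: the degree of $A'$ in $T_2$ can be up to $3$, by Lemma~\ref{lem:intborderpath}): performing $f\rightsquigarrow e'$ in $T_2$ leaves the other crossing chords, so adding $e'$ creates a crossing; and performing $e'\rightsquigarrow f$ in $T_1$ is not non-crossing either, because $A$ being good forces the other endpoint of $f$ to lie outside $A$, so $f$ in general crosses $e$ (and possibly further chords of $T_1$) --- checking crossings only ``inside $A'$'' is not enough. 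This multi-crossing situation is precisely the hard case; the paper's Claim~\ref{cl:very_good} spends $3$ (resp.\ $5$) flips when the degree of $A'$ is $2$ (resp.\ $3$), and the key point is that the extra flips fill the \emph{same} hole of $A$ in \emph{both} trees, so each pair of extra flips itself produces a matched pair. That is what yields the $1$, $3/2$, $5/3$ per-match costs locally, with no global charging argument at all.

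Conversely, your ``expensive case'' $d'=0$ is actually the easy one: one can flip a chord $e_1$ of $T_2$ incident to an endpoint of $e'$ (found on the face of $T_2$ containing the hole of $A'$, via Observations~\ref{ob:path} and~\ref{ob:bound_hole}) directly into $e'$, matching a pair in one flip. Your treatment instead flips $e'$ in $T_1$ onto the hole $h'$; since $h'$ is also a hole of $T_2$ (nice pair), this creates no common edge and does not decrease the symmetric difference, so it is not a ``match'' in the sense needed in the proof of Theorem~\ref{thm:mainthm} (the same accounting flaw is already present in your base case, where the correct one-flip move is in $T_2$, not $T_1$). Finally, the amortization $s\le 2t$ that your argument hinges on is not proved: you only sketch a hoped-for structural claim about parent chords in the inclusion hierarchy, and since the bound $\frac53 k_A$ is tight (Figure~\ref{fig:tight}), any such charging would have to be exactly calibrated; as written, the technical heart of the lemma is missing, and the case analysis it would have to rest on is the one described above that your cheap case skips.
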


Before proving this lemma, let us introduce some definitions and investigate the structure of very good sides (see Figure~\ref{fig:borderpath} for an illustration of these definitions).
A \emph{border path} of a tree $T$ is a maximal path (possibly reduced to a single point) of consecutive border edges of $T_1$. 
A border path of a tree $T$ and a hole $h$ are \emph{incident} if they share a common point.
Note that, for a tree $T$ that is not reduced to a border path, a border path of $T$ is incident to exactly two distinct holes of $T$, and a hole of $T$ is incident to exactly two distinct border paths of $T$.
Note that since trees forming a nice pair admit the same border edges, they also have the same border paths. 
Thus, for a nice pair of trees $T_1$, $T_2$, we will refer to a border path of $T_1$ or $T_2$ simply as a border path. Let us first remark that the following holds:

\begin{observation}\label{ob:path}
    Let $T$ be a tree and $h$ be a hole of $T$ such that $P$ is a border path of $T$ incident to $h$. If $T$ has a chord, then the face of $T$ containing $h$ has a chord on its boundary that has an endpoint in $P$.
\end{observation}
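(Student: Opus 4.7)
My plan is to examine the boundary of the face $f$ of $T$ containing $h$. Write $h = uv$ with $v$ the endpoint shared with $P$. Filling in the hole $h$ creates a unique cycle in $T \cup \{h\}$ bounding $f$, consisting of $h$ together with the unique tree path $Q = u = q_0, q_1, \ldots, q_m = v$ in $T$. Since $h$ is not an edge of $T$, we have $m \geq 2$, and my goal is to locate a chord on $Q$ whose endpoint lies in~$P$.

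The first step is to use the hypothesis that $T$ has a chord to show that $u \notin P$. A quick edge count shows that if $T$ has $c \geq 1$ chords then it has exactly $c + 1 \geq 2$ holes, so $P$ is a proper border path (not the full convex hull path). Since $uv$ is a hole and not a border edge of $T$, the vertex $u$ sits on the convex hull just across this hole from $v$, and hence is not contained in the border path $P$ emanating from $v$ in the opposite direction.

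The main step is then to let $j$ be the largest index with $q_j \notin P$. This exists because $q_0 = u \notin P$, and $j < m$ because $q_m = v \in P$. The edge $e' = q_j q_{j+1}$ then lies on the boundary of $f$, has its endpoint $q_{j+1}$ in $P$, and its other endpoint $q_j$ outside of $P$. The one delicate point I expect to be the main (though small) obstacle is verifying that $e'$ is a chord. The argument will be: the border edges of $T$ partition into maximal border paths, so any border edge of $T$ incident to a vertex $q_{j+1}$ of $P$ must lie in $P$ itself (at an interior vertex of $P$ both convex-hull neighbours are in $P$; at an endpoint of $P$ the other convex-hull neighbour lies across a hole and so is not connected to $q_{j+1}$ in $T$). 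If $e'$ were a border edge, this would force $q_j \in P$, a contradiction. Therefore $e'$ is a chord on the boundary of $f$ with an endpoint in $P$, as required. This argument also covers the degenerate case where $P$ reduces to a single vertex $v$, in which every edge of $T$ at $v$ is automatically a chord.
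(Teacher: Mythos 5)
Your proof is correct and complete: identifying the boundary of the face containing $h$ as $h$ together with the unique $u$--$v$ path in $T$, using the chord hypothesis to get $u\notin P$, and then taking the edge $q_jq_{j+1}$ at the last vertex outside $P$ (with the check that a border edge of $T$ meeting a vertex of $P$ must lie in $P$, which also handles $P$ being a single point) establishes exactly the claimed statement. The paper states this as an observation without proof, and your argument is precisely the natural justification it implicitly relies on, so there is nothing in the paper's treatment that diverges from your approach.
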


An \emph{internal border path} $P$ of a side $A$ of a chord $e$ is a border path such that all the points of $P$ are in $A$ and that does not contain an endpoint of $e$ (see Figure~\ref{fig:borderpath} for an illustration).

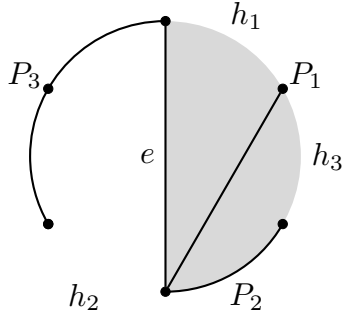
\begin{figure}[hbtp]
        \begin{center}
        \tikzstyle{vertex}=[circle,draw, minimum size=7pt, scale=0.5, inner sep=1pt, fill = black]
        \tikzstyle{fleche}=[->,>=latex]
        \tikzstyle{labell}=[text opacity=1, scale =1.2]
        \begin{tikzpicture}[scale=1.2]

         \begin{scope}
            \clip (0,0) circle (1.5cm);
            \fill[black!15] (90:1.5) to (-90:1.5) to[bend right = 90] (0:1.5) to[bend right = 90] (90:1.5);
        \end{scope}
        \nodearc(a1)(0,0)(90:1.5)[vertex]();
        \nodearc(a2)(0,0)(-90:1.5)[vertex]();
        \nodearc(a3)(0,0)(-30:1.5)[vertex]();
        \nodearc(a4)(0,0)(30:1.5)[vertex]();
        \nodearc(a5)(0,0)(150:1.5)[vertex]();
        \nodearc(a6)(0,0)(-150:1.5)[vertex]();
        \nodearc(a6)(0,0)(-150:1.5)[vertex]();

        \centerarc[thick](0,0)(-90:-30:1.5);
        \centerarc[thick](0,0)(90:210:1.5);

        \draw[thick] (a1) to (a2);
        \draw[thick] (a4) to (a2); 
        
        \nodearc(h4)(0,0)(60:1.8)[labell]($h_1$);
        \nodearc(h5)(0,0)(-120:1.8)[labell]($h_2$);
        \nodearc(h4)(0,0)(0:1.8)[labell]($h_3$);
        \nodearc(h4)(0,0)(30:1.8)[labell]($P_1$);
        \nodearc(h5)(0,0)(-60:1.8)[labell]($P_2$);
        \nodearc(h4)(0,0)(150:1.8)[labell]($P_3$);
        \nodearc(h4)(0,0)(-180:0.2)[labell]($e$);

        \end{tikzpicture}
        \end{center}
        \caption{The tree has three border paths $P_1$, $P_2$ and $P_3$. The border path $P_1$, reduced to a single vertex, is incident to two holes $h_1$ and $h_3$. The border path $P_1$ is an internal border path of the side $A$ (in grey) of $e$ while $P_2$ (containing an endpoint of $e$) and $P_3$ (not in $A$) are not.}
        \label{fig:borderpath}
\end{figure}

\begin{lemma}\label{lem:intborderpath}
    Let $T_1$ and $T_2$ be a nice pair of trees, $e$ be a chord of $T_1$, and $A$ be a very good side of $e$ w.r.t. $T_2$. Then: 
    \begin{itemize}
        \item[(i)] Every internal border path is incident to $1$ or $2$ chords of $T_2$,
        \item[(ii)] Every border path containing an endpoint of $e$ is incident to $0$ or $1$ chord of $T_2$ in $A$,
        \item[(iii)] There is at most one border path reaching the maximum value in (i) and (ii). 
        When it exists, we call this path the \emph{extra path} of $A$. 
    \end{itemize}
\end{lemma}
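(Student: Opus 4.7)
The plan is to analyze the components of $T_2[A]$ and combine the connectivity of $T_2$ with the very good hypothesis. Since $A$ contains no chord of $T_2$, the induced subgraph $T_2[A]$ consists only of border edges, and a standard count yields $n_A - b_A = k_A + 1$ connected components. Two of them contain respectively the endpoints $u, w$ of $e$ (call them $K_u, K_w$; they are distinct because $k_A \geq 1$ forces a hole between them), while the remaining $k_A - 1$ components are exactly the internal border paths of $A$. Since $A$ is good, the degree of $A$ in $T_2$ equals the number $d$ of crossing chords of $T_2$ (those with one endpoint in $A \setminus \{u, w\}$ and the other in $C \setminus A$), and the very good hypothesis gives $d \leq k_A$.

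For each internal border path $P$ I will let $c_P$ denote the number of crossing-chord endpoints on $P$, and define $c_u, c_w$ analogously for $K_u \setminus \{u\}$ and $K_w \setminus \{w\}$. Since $u$ and $w$ cannot themselves be endpoints of a crossing chord, every crossing-chord endpoint lies in exactly one of these sets, so $d = \sum_P c_P + c_u + c_w$. The key input is connectivity of $T_2$: each component of $T_2[A]$ must attach to the rest of $T_2$ via at least one edge leaving $A$, and for an internal $P$ the only such edges are crossing chords with an endpoint on $P$; hence $c_P \geq 1$.

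Combining $\sum_P c_P \geq k_A - 1$ with $d \leq k_A$ and $c_u, c_w \geq 0$ squeezes $\sum_P c_P$ into $\{k_A - 1, k_A\}$, leaving only two configurations: either every internal $c_P = 1$ and $c_u + c_w \leq 1$, or exactly one internal path $P^*$ has $c_{P^*} = 2$ (the others equal $1$) and $c_u = c_w = 0$. From this, $c_P \in \{1, 2\}$ always, giving (i). For (ii), I will argue that $u$ is never inside $A$ for a chord of $T_2$ (otherwise that chord would lie entirely in $A$, contradicting goodness), so the chords of $T_2$ having an endpoint inside $A$ on the border path at $u$ are exactly the crossing chords with an endpoint in $K_u \setminus \{u\}$, numbering $c_u \in \{0, 1\}$; the analogue holds at $w$. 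Then (iii) follows by inspecting the two configurations: in the first, at most one of $P_u, P_w$ attains its maximum $1$ and no internal path attains $2$; in the second, only $P^*$ attains $2$ and neither $P_u$ nor $P_w$ attains $1$.

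The main obstacle I foresee is the bookkeeping, specifically confirming that the quantity "incident to a chord of $T_2$ in $A$" in (ii) matches the crossing-chord count on $K_u \setminus \{u\}$, so that the chords of $T_2$ from $u$ to $C \setminus A$ are correctly excluded (their endpoint $u$ is not inside $A$). Once this point is pinned down, all three conclusions follow cleanly from the narrow window $k_A - 1 \leq \sum_P c_P \leq k_A$ forced by the very good condition.
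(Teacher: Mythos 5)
Your proof is correct and follows essentially the same counting argument as the paper: the $k_A-1$ internal border paths each force at least one endpoint of a chord of $T_2$ inside $A$ by connectivity, and the very good hypothesis caps the degree of $A$ at $k_A$, leaving at most one extra endpoint to distribute, which yields (i)--(iii). The only cosmetic difference is that you derive the per-path lower bound from a component count of $T_2[A]$ together with connectivity of $T_2$, whereas the paper invokes Observation~\ref{ob:path} (via faces); the underlying argument is the same.
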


\begin{proof}
Let us denote by $h_1,\ldots,h_{k_A}$ the holes of $A$ and let us denote by $P_i$ the internal path between $h_i$ and $h_{i+1}$ for every $i < k_A$ (note that there are $k_A-1$ such paths). 
By Observation~\ref{ob:path} applied to $h_i,P_i$ for every $i < k_A$, there is at least one chord of $T_2$ incident to $P_i$. 

So the degree of $A$ in $T_2$ is at least $k_A - 1$. 
    Since the degree of $A$ in $T_2$ is at most $k_A$ (by definition of very good side), there is at most one extra chord of $T_2$ with an endpoint in $A$.
    If it exists, this extra chord is either incident with a path that contains an endpoint of $e$, or an internal border path of $A$. 
\end{proof}

Lemma~\ref{lem:intborderpath} gives us a lot of information on how the tree $T_2$ behaves in a very good side w.r.t $T_2$. Using this information, we can now prove the main result of this section:

\begin{proof}[Proof of Lemma~\ref{lem:very_good}]
Consider a side $A' \subseteq A$ of an inclusion-wise minimal chord $e'$ of $T_1$ w.r.t. $A$. Note that we might have $A=A'$. Let $d$ be the degree of $A'$ in $T_2$. 
By Remark~\ref{rmk:min_incl}, $A'$ contains exactly one hole $h$. Since there is no chord of $T_1$ in $A'$, the two border paths $P_1$ and $P_2$ incident to $h$ contain all the points in $A'$. 
By Lemma~\ref{lem:intborderpath}, these paths contain altogether at most three endpoints of a chord of $T_2$ in $A'$ since $A$ is very good, hence $d \leq 3$. 
The core of the proof consists in proving the following result:

\begin{claim}\label{cl:very_good}
Let $d\leq 3$. In at most $ \frac{5}{3} \max (1, d)$ flips, we can obtain from $T_1$ and $T_2$ a very good pair of trees $T'_1,T'_2$ that agree on $A'$ by filling $\max(0,d-1)$ holes of $A$ in both $T_1$ and $T_2$ and creating the chord $e'$ in $T_2$, see Figure~\ref{fig:verygood}.   Moreover, if $k_A > d$, then we do not flip $e$ and the number of edges in $T'_2$ crossing $e$ decreased by $\max(1,d)$.

 
    \end{claim}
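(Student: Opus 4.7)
The plan is to prove the claim by case analysis on $d \in \{0,1,2,3\}$. Since $A'$ is inclusion-wise minimal w.r.t.\ $A$, it has a single hole $h$ and is bounded by $e'$ together with two border paths $P_1, P_2$, each containing an endpoint of $e'$; by definition of the degree, every endpoint of a chord of $T_2$ that is inside $A'$ lies on $P_1 \cup P_2$. In each case I exhibit an explicit flip sequence: for $d = 0$ a single flip suffices to create $e'$ in $T_2$, and for $d \ge 1$ I use $2d-1$ flips in total, namely $d-1$ hole-filling flips in each of $T_1,T_2$ plus one final flip inserting $e'$ into $T_2$. Since $2d-1 \le \tfrac{5}{3}d$ for $d \in \{1,2,3\}$ and $1 \le \tfrac{5}{3}$, this meets the claimed bound $\tfrac{5}{3}\max(1,d)$.

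\textbf{The easy cases $d \le 1$.} If $d = 0$ then no chord of $T_2$ is inside $A'$ or incident to an interior point of $A'$, so the fundamental cycle of $e'$ in $T_2 \cup \{e'\}$ contains an edge entirely outside $A'$, and flipping that edge with $e'$ produces $T'_2$ in one flip. If $d = 1$, let $f$ be the unique chord of $T_2$ whose endpoint inside $A'$ is some interior vertex $v$ of $P_1\cup P_2$; since the border paths are broken at $h$, the edge $f$ is the only $T_2$-connection between $v$'s side and the outside of $A'$. Hence $f$ lies on the fundamental cycle of $e'$ and the flip $f \rightsquigarrow e'$ is valid. In both cases the resulting trees agree on $A'$ since $T'_2[A'] = P_1\cup P_2\cup\{e'\} = T'_1[A']$.

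\textbf{The hard cases $d \in \{2, 3\}$.} Here I proceed iteratively. At each step I identify a chord $c$ of $T_2$ with an interior endpoint in $A'$ together with a hole $h^\star$ (which is common to $T_1$ and $T_2$ by the nice pair property) such that $c$ lies on the boundary of the face of $T_2$ containing $h^\star$. By Observation~\ref{ob:bound_hole}, filling $h^\star$ in $T_2$ via $c$ and filling $h^\star$ in $T_1$ via any chord on its face boundary in $T_1$ are both valid flips; together they reduce the number of chord endpoints of $T_2$ inside $A'$ by at least one, while preserving the nice-pair property. After $d-1$ such pairs of flips, the configuration reduces to the $d = 1$ case above, and the final flip inserting $e'$ into $T_2$ concludes. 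The \emph{moreover} clause, i.e.\ the decrease by $\max(1,d)$ of chords of $T_2$ crossing $e$, is obtained by choosing, at each step, the removed chord of $T_2$ to be one that crosses $e$: when $A' \subsetneq A$, every chord of $T_2$ with an endpoint interior to $A'$ (being confined to the half-plane bounded by $e$) must cross $e$, and the final $e'$-creating flip removes such a chord as well, without disturbing $e$.

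\textbf{Main obstacle.} The main technical point in the hard cases is selecting a suitable pair $(h^\star, c)$ at each iteration so that (i) $c$ has an interior endpoint in $A'$, hence the filling decreases the remaining degree, (ii) the companion fill in $T_1$ preserves the nice-pair structure so that the induction on $d$ applies, and (iii) the removed $T_2$-chord crosses $e$ for the moreover clause. The existence of such a pair follows from the rigid structural description in Lemma~\ref{lem:intborderpath}: each internal border path of $A$ is incident to $1$ or $2$ chords of $T_2$, with at most one ``extra path'' concentrating an additional chord, so one can always target the hole adjacent to this extra concentration and peel off its chords one by one. A secondary subtlety is to verify that the resulting pair $(T'_1, T'_2)$ is still a very good pair on the remaining portion of $A$, which is needed for the outer iteration in the proof of Lemma~\ref{lem:very_good}.
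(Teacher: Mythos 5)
Your overall plan coincides with the paper's: case analysis on $d$, one flip when $d\le 1$, and for $d\in\{2,3\}$ a sequence of $d-1$ paired hole-fillings followed by one flip creating $e'$, giving $2d-1\le\frac{5}{3}\max(1,d)$ flips. The trouble is that the places where you defer to ``existence'' or ``any choice'' are exactly where the content of the claim lies, and as stated they fail. First, in the hard cases the hole $h^\star$ you fill must be a hole of $A$ \emph{other than} $h$, hence it lies in $A\setminus A'$ and not on the two paths $P_1,P_2$ bounding $A'$ that you set up; you must show that some chord of $T_2$ with an endpoint inside $A'$ lies on the boundary of the face of $T_2$ containing such an $h^\star$. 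The paper obtains this from Lemma~\ref{lem:intborderpath} by choosing a chord whose endpoint sits on an internal border path of $A$ that is \emph{not} the extra path: only then is it the unique $T_2$-chord incident to that path, and Observation~\ref{ob:path} then forces it onto the face of the adjacent hole. Your heuristic of ``targeting the hole adjacent to the extra concentration'' points the other way (an extra path may carry two chords, and you cannot control which of them Observation~\ref{ob:path} produces), and the subcase analysis of whether the relevant endpoints lie on one or on two border paths --- the bulk of the paper's Cases $d=2$ and $d=3$ --- is missing.

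Second, filling $h^\star$ in $T_1$ ``via any chord on its face boundary'' is not sound: that face may have only $e$ and $e'$ as chords on its boundary, and flipping $e'$ destroys the agreement on $A'$ you are building, while flipping $e$ violates the ``we do not flip $e$'' part of the moreover clause whenever $k_A>d$. The paper needs the inclusion-wise minimality of $e'$ together with $d\le k_A$ (very-goodness) to exhibit a third chord, and it allows flipping $e$ only when $k_A=d$. Third, your $d=0$ flip removes an edge ``entirely outside $A'$'' of the fundamental cycle; such an edge need not cross $e$, so the required decrease by $\max(1,d)=1$ of the number of $T_2$-edges crossing $e$ --- which the outer induction in Lemma~\ref{lem:very_good} uses as exact bookkeeping --- is not guaranteed; the paper instead selects the removed edge on the face of $T_2$ containing $h$ via a border path incident to $h$, precisely so that it crosses $e$. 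Relatedly, in your $d=1$ case the assertion that the crossing chord $f$ is ``the only $T_2$-connection between $v$'s side and the outside of $A'$'' ignores possible attachments of that border piece through the endpoint of $e'$ it contains, so the claim that $f$ lies on the fundamental cycle of $e'$ needs the face-based argument of the paper rather than this connectivity shortcut. Until these points are supplied, the proposal does not establish Claim~\ref{cl:very_good}.
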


    \input{figures/verygood}

    Before proving Claim~\ref{cl:very_good}, let us show how it can be used to complete the proof of Lemma~\ref{lem:very_good}. Recall that $A$ is very good hence $d \leq k_A$. If $d=k_A$, then we can fill every hole of $A$ that is different from $h$, and then add $e'$ to $T_2$ using at most $\frac 53 d = \frac 53 k_A$ flips by Claim~\ref{cl:very_good}. We may thus assume that $d < k_A$.

    We proceed by induction on $k_A$. For the base case $k_A=1$, we have $d=0$. In that case, $A=A'$ and $e=e'$, and we can add $e$ to $T_2$ using at most $\lfloor \frac 53 \rfloor = 1$ flip by Claim~\ref{cl:very_good}.
    
    Assume now that $k_A>1$. Let $T_1^*$ and $T_2^*$ be the two trees obtained from $T_1$ and $T_2$ after applying Claim~\ref{cl:very_good}.
    Note that even if $e$ was not flipped, we cannot directly apply induction since $T_1^*$ and $T_2^*$ do not form a nice pair ($e'$ is a common chord) and moreover, $e$ is not very good anymore (because of $e'$).
    
    However, since $T_1$ and $T_2$ form a nice pair, and we only added holes and $e'$ to obtain $T_1^*, T_2^*$, $e'$ is the only common chord of the two trees $T_1^*, T_2^*$ and they have the same border edges. Let $C'$ be the convex set of points obtained from $C$ by removing the points of $A'$ that are not endpoints of $e'$.

    Now observe that the trees $T_1^*[C'] $ and $T_2^*[C']$ form a nice pair where $e'$ is a common border edge. We claim that $e$ is now very good for $T_2^*[C']$. Let $A^*$ be the side of $e$ in $T_1^*[C']$ that is contained in $A$. Note $A^*$ does not contain any chord of $T_2^*[C']$, hence $A^*$ is good w.r.t. $T_2^*[C']$. 
    
    To prove that it is very good, first observe that $A'$ contained only one hole, hence $k_{A^*} = k_A - \max(0,d-1)-1$. Moreover, since $A^*$ is good, the degree $d_{A^*}$ of $A^*$ in $T_2^*[C']$ is the number of chords in $T'_2[C']$ crossing $e$. The number of such chords is precisely the degree $d_A$ of $A$ in $T_2$ minus the $\max(1,d)$ (corresponding to the chords crossing $e$ that were flipped to obtain $T_2^*$). Since $A$ was very good w.r.t. $T_2$, we have $k_A\geqslant d_A$, hence $k_{A^*}\geqslant d_{A^*}$ and $A^*$ is very good w.r.t. $T_2^*[C']$.
    
    By induction, we can match $k_{A^*}$ chords of the trees using at most $\frac{5}{3}k_{A^*}$ flips in total. Adding the flips applied to obtain $T_1^*$ and $T_2^*$ concludes the induction.    
\end{proof}

To complete the proof, it now remains to prove Claim~\ref{cl:very_good}.
        
\begin{proof}[Proof of Claim~\ref{cl:very_good}]
        Recall that $d\leq 3$ and that $A'$ contains only one hole $h$. We consider three cases depending on $d$ and will proceed as illustrated in Figure~\ref{fig:verygood}.
            
        \smallskip
        \noindent
        \textbf{Case $d\leq 1$:} \\
        In that case, we claim that we can flip an edge of $T_2$ crossing $e$ into $e'$.

        Since $T_1$ and $T_2$ form a nice pair and $e'$ is a chord of $T_1$, then $T_2$ has also a chord, so neither $T_1$ nor $T_2$ is a border path.
        Let $P$ be a border path incident to $h$. If $d=1$, we additionally require $P$ to contain the endpoint of a chord of $T_2$ that is in $A'$. By Observation~\ref{ob:path}, the face $f$ of $T_2$ containing $h$ has a chord $e_1$ with one endpoint in $P$. By Observation~\ref{ob:bound_hole}, the flip $e_1\rightsquigarrow h$ is valid, and then we may perform $h\rightsquigarrow e'$. Let $T'_2$ be the resulting tree. However, $T'_2$ could be obtained from $T_2$ using only the flip $e_1\rightsquigarrow e'$, since the only edge that may cross $e'$ in $T_2$ is $e_1$ (when $d=1$).

        Finally note that $e$ was not flipped, and we removed one edge of $T_2$ crossing $e$, namely $e_1$.
        
        \smallskip
        \noindent
        \textbf{Case $d = 2$:} \\        
        Let $u_1,u_2$ be the two endpoints in $A'$ of the chords $e_1,e_2$ of $T_2$ with endpoints in $A'$. Note that we may have $u_1=u_2$, but not $e_1=e_2$ since $A$ is good. Let $P_1,P_2$ be the two border paths incident with $h$. 

        If both $u_1$ and $u_2$ lie on the same border path, say $P_2$, then Lemma~\ref{lem:intborderpath}(ii) ensures that $P_2$ is internal and is incident with no other chord of $T_2$. 
        
        Otherwise, $u_1$ and $u_2$ lie on distinct border paths $P_1$ and $P_2$. Since $A$ is very good, we have $k_A\geqslant d\geqslant 2$, hence $P_1$ or $P_2$ is internal. Moreover, by Lemma~\ref{lem:intborderpath}, if both paths are internal, at least one of them (say $P_2$) is not the extra path. And if one of them (say $P_1$) is not internal, then it is the extra path, and $P_2$ is an internal path. 
        
        So, up to symmetry, we can assume that $P_2$ is an internal border path incident with only one chord of $T_2$, say $e_2$. Therefore, there exists a hole $h'\neq h$ in $A$ that is incident to $P_2$. Now by Observation~\ref{ob:path}, the face of $T_2$ containing $h'$ contains an edge with an endpoint in $P_2$, hence it is $e_2$. We then perform $e_2\rightsquigarrow h'$ in $T_2$. 

        Using Case 1, we may then perform $e_1\rightsquigarrow e'$ in the resulting tree. It then remains to flip a chord of $T_1$ on $h'$. Let $f$ be the face of $T_1$ containing $h'$. If $k_A=2$, then the only chords of $T_1$ on the boundary of $f$ are $e$ and $e'$, and we perform $e\rightsquigarrow h'$. Otherwise, since $e'$ is inclusion-wise minimal, there exists another chord $e'_1\notin\{e,e'\}$ on $f$, that we can flip on $h'$. 

        Note that in both cases, we filled a hole in both trees and created $e'$ in $T_2$. Moreover $e$ was not flipped unless $k_A=d$, and we removed two chords of $T_2$ crossing $e$ (namely $e_1$ and $e_2$).

        \smallskip
        \noindent
        \textbf{Case $d = 3$:} \\
        Let $u_1, u_2$ and $u_3$ be the three points in $A'$ of the chords $e_1,e_2,e_3$ of $T_2$ with an endpoint in $A'$. Since $A$ is very good, $e_1,e_2$ and $e_3$ are distinct.
        
        By Lemma~\ref{lem:intborderpath}, we can assume up to symmetry that $u_1$ and $u_2$ belong to $P_1$ and $u_3$ to $P_2$. (Note that $u_1$ and $u_2$ might be the same).
        Since there is at most one extra path, Lemma~\ref{lem:intborderpath} ensures that $P_2$ is internal and $e_3$ is the only chord of $T_2$ incident with $P_2$. In particular, $P_2$ is incident to a hole $h^*\neq h$ in $A$.
        
        Let us first prove that we can fill $h^*$ in $T_2$ by flipping a chord that has an endpoint inside $A$.
        By Observation~\ref{ob:path}, the face containing $h^*$ in $T_2$ has a chord on its boundary with an endpoint in $P_2$, which has to be $e_3$.
        By Observation~\ref{ob:bound_hole}, we can perform $e_3 \rightsquigarrow h^*$ in $T_2$.
    
        Let us now prove that we can fill $h^*$ in $T_1$ by flipping a chord different from $e$ and $e'$. Let $f$ be the face of $T_1$ containing $h^*$. If the boundary of $f$ has no chord besides $e$ and $e'$, then by minimality of $e'$, $A$ contains only two chords and $k_A=2$, which is impossible since $A$ is very good. Therefore there is another chord $e^*_1$ that we may flip on $h^*$ by Observation~\ref{ob:bound_hole}. 

    We may now apply Case 2 to the resulting trees $T'_1$ and $T'_2$. Observe that in total, we filled two holes in both $T_1$ and $T_2$, and created $e'$ in $T_2$ using $5$ flips. Moreover, we removed the chords $e_1,e_2,e_3$ from $T_2$ that were crossing $e$. Finally, we only flip $e$ when $k_A=2$ in $T'_1$, that is when $k_A=3$ in $T_1$, which completes the proof. 
    \end{proof}
    
 Since there is at most one extra path in a very good side, observe that Case 3 of Claim~\ref{cl:very_good} only happens once during the whole induction. Hence, with a deeper analysis, the $\frac 53k_A$ bound of Lemma~\ref{lem:very_good} can be improved to $\lceil \frac 32 k_A \rceil$. However, there is an example of a small very good side $A$ of $T_1$ w.r.t $T_2$, such that we cannot match the $k_A$ chords in $A$ of $T_1$ with chords of $T_2$ in less than $\frac 53k_A = \lceil \frac 32 k_A \rceil$ flips, see Figure~\ref{fig:tight}.  This example was the initial stone on which we built the lower bounds of Section~\ref{sec:lower}.

\begin{figure}[hbtp]
        \begin{center}
        \tikzstyle{vertex}=[circle,draw, minimum size=7pt, scale=0.5, inner sep=1pt, fill = black]
        \tikzstyle{fleche}=[->,>=latex]
        \tikzstyle{labell}=[text opacity=1, scale =1.2]
        \begin{tikzpicture}[scale=1]

            \def\firstarc{($(0,1)+({3*cos(180)},{3*sin(180)})$) arc (-180:0:3)}
            
            \nodearc(a1)(0,1)(-140:3)[vertex]();
            \nodearc(a2)(0,1)(-40:3)[vertex]();
            \nodearc(a3)(0,1)(-60:3)[vertex]();
            \nodearc(a4)(0,1)(-80:3)[vertex]();
            \nodearc(a5)(0,1)(-100:3)[vertex]();
            \nodearc(a6)(0,1)(-120:3)[vertex]();

            \begin{scope}
            \clip \firstarc;
            \fill[black!10] (a1) to[bend left = 50] (a2) to[bend left = 90] (a1);
            \draw[thick]  (a1) to[bend left = 50] (a2) to[bend left = 90] (a1); 
            \end{scope}

            \centerarc[thick, dashed](0,1)(-140:-150:3);
            \centerarc[thick, dashed](0,1)(-40:-30:3);
            
            \centerarc[thick](0,1)(-120:-100:3);
            \centerarc[thick](0,1)(-80:-60:3);
            \draw[thick] (a1) to [bend left=50] (a3);
            \draw[thick] (a3) to [bend right=40] (a6);

            \nodearc(a1)(0,1)(-140:3)[vertex]();
            \nodearc(a2)(0,1)(-40:3)[vertex]();
            \nodearc(a3)(0,1)(-60:3)[vertex]();
            \nodearc(a4)(0,1)(-80:3)[vertex]();
            \nodearc(a5)(0,1)(-100:3)[vertex]();
            \nodearc(a6)(0,1)(-120:3)[vertex]();

            \draw[thick, red] (a3) to [bend right=20] (a4);
            \draw[thick, red] (a5) to [bend right=20] (a6);
            \draw[thick, red] (a4) to  ({3*cos(-75)},0.5);
            \draw[thick, red] (a4) to  ({3*cos(-85)},0.5);
            \draw[thick, red] (a5) to  ({3*cos(-100)},0.5);
            \draw[thick, red, dashed] (a2) to  ({3*cos(-37)},-0.4);
            \draw[thick, red, dashed] (a1) to  ({3*cos(-143)},-0.4);

            \node (h3) at ({3*cos(85)},0.8) [labell, text = red]{$e_1$};
            \node (h3) at ({3*cos(100)},0.8) [labell, text = red]{$e_3$};
            \node (h3) at ({3*cos(75)},0.8) [labell, text = red]{$e_2$};
            \node (h3) at ({3*cos(110)}, -0.6) [labell]{$e^*_1$};
            \node (h3) at ({3*cos(55)},0.25) [labell]{$e = e_1'$};
            \node (h3) at ({3*cos(90)},-1.3) [labell]{$e'$};
            \nodearc(a4)(0,1)(-90:3.3)[labell]($h$);
            \nodearc(a5)(0,1)(-50:3.3)[labell]($h'$);
            \nodearc(a5)(0,1)(-130:3.3)[labell]($h^*$);

        \end{tikzpicture}
        \end{center}
        \caption{An example of a very good side which needs at least $\frac{5}{3}k_A = 5$ flips to make the two trees agree on $A$.  }
        \label{fig:tight}
\end{figure}
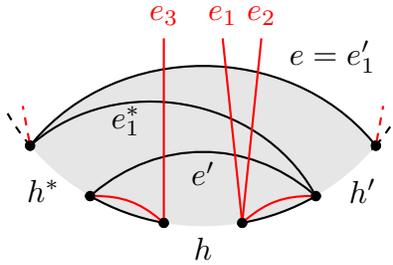

Now that we have shown how we can handle very good sides, we need to find one in a minimal counterexample $T_I, T_F$. 
However, we did not proved that a very good side always exists in $T_I, T_F$. 
We present in the next section how to transform some side which exists in $T_I, T_F$ into a very good side.

\subsection{Obtaining a very good side}\label{sec:comb}

Let $T$ and $T'$ be a nice pair of trees. The goal of this paragraph is to create a very good side for $T'$ in $T$. Recall that a good side of $T$ w.r.t. $T'$ is a side of $T$ containing no chord in $T'$.

The first step to obtain a very good side is to create a good side from a special type of side (which always exist in a minimal counterexample) we describe later on. 
A \emph{bad hole} $h$ of a side $A$ of $T$ w.r.t. $T'$ is a hole in $A$ that is also in a side $B \subsetneq A$ of $T'$, see Figure~\ref{fig:bad}. 
Note that a side of $T$ is a good side w.r.t. $T'$ if and only if it does not contain bad holes of $T$ w.r.t. $T'$. 
So in order to obtain a good side, our goal will consist in filling bad holes.

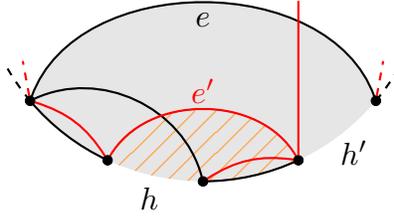
\begin{figure}[hbtp]
        \begin{center}
        \tikzstyle{vertex}=[circle,draw, minimum size=7pt, scale=0.5, inner sep=1pt, fill = black]
        \tikzstyle{fleche}=[->,>=latex]
        \tikzstyle{labell}=[text opacity=1, scale =1.2]
        \begin{tikzpicture}[scale=1]

            \def\firstarc{($(0,1)+({3*cos(180)},{3*sin(180)})$) arc (-180:0:3)}
            
            \nodearc(a1)(0,1)(-140:3)[vertex]();
            \nodearc(a2)(0,1)(-40:3)[vertex]();
            \nodearc(a3)(0,1)(-65:3)[vertex]();
            \nodearc(a4)(0,1)(-90:3)[vertex]();
            \nodearc(a5)(0,1)(-115:3)[vertex]();

            \begin{scope}
            \clip \firstarc;
            \fill[black!10] (a1) to[bend left = 70] (a2) to[bend left = 90] (a1); 
            \draw[thick]  (a1) to[bend left = 70] (a2) to[bend left = 90] (a1); 
            \draw[red, pattern=hatch, hatch size=8, hatch angle = 80, pattern color=orange!70]  (a5) to[bend left = 60] (a3) to[bend left = 90] (a5);
            \end{scope}

            \centerarc[thick, dashed](0,1)(-140:-150:3);
            \centerarc[thick, dashed](0,1)(-40:-30:3);
            
            \centerarc[thick](0,1)(-140:-115:3);
            \centerarc[thick](0,1)(-90:-65:3);
            \draw[thick] (a1) to [bend left=50] (a4);
            
            \draw[thick, red] (a1) to [bend left=20] (a5);
            \draw[thick, red] (a4) to [bend left=20] (a3);
            \draw[thick, red] (a5) to [bend left=60] (a3);

            \nodearc(a1)(0,1)(-140:3)[vertex]();
            \nodearc(a2)(0,1)(-40:3)[vertex]();
            \nodearc(a3)(0,1)(-65:3)[vertex]();
            \nodearc(a4)(0,1)(-90:3)[vertex]();
            \nodearc(a5)(0,1)(-115:3)[vertex]();

            \draw[thick, red] (a3) to  ({3*cos(-65)},0.4);
            \draw[thick, red, dashed] (a2) to  ({3*cos(-37)},-0.4);
            \draw[thick, red, dashed] (a1) to  ({3*cos(-143)},-0.4);

            \node (h3) at ({3*cos(90)},-0.8) [labell, text = red]{$e'$};
            \node (h3) at ({3*cos(90)},0.15) [labell]{$e$};
            \nodearc(a4)(0,1)(-102.5:3.3)[labell]($h$);
            \nodearc(a5)(0,1)(-52.5:3.3)[labell]($h'$);

        \end{tikzpicture}
        \end{center}
        \caption{Let $T_1$ be the black tree and $T_2$ the red tree. The hole $h$ is a bad hole of the side $A$ (in grey) w.r.t $T_2$ since it is inside the side of $e'$ included in $A$.} 
        \label{fig:bad}
\end{figure}

To generate a good side in $T$, we start from a side $A$ of $T$ that contains at least one hole that is not bad w.r.t. $T'$. Such a side always exists in a nice pair since, at least one of the two sides of the same chord of $T$ must contain a hole that is not bad w.r.t. to $T'$ (otherwise, $T'$ would contain a cycle).

Note that we will actually require more conditions on the initial side in order to conclude, namely that it is $\tau$-extremal (which will be defined later). However, this does not impact the transformation into a very good side. 

Lemma~\ref{lem:bad} explains how we fill one bad hole in $A$, and our goal is to apply it several times to obtain a good side.

\begin{lemma}\label{lem:bad}
    Let $T_1$, $T_2$ be a nice pair of trees, $A$ be a side of a chord $e$ of $T_1$ that contains at least two holes including at least one bad hole $h$ w.r.t $T_2$.

    Then, we can fill $h$ in $T_1$ by flipping a chord different from $e$ and we can fill $h$ in $T_2$ by flipping a chord with both endpoints in $A$.    
    Moreover the resulting pair of trees $T'_1$ and $T'_2$ form a nice pair and $A$ has one less bad hole.
\end{lemma}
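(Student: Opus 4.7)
The plan is to construct the two flips independently and then verify the compatibility conditions. For $T_2$, I exploit that $h$ is bad to locate a chord whose endpoints both lie in $A$. For $T_1$, I use the assumption that $A$ contains another hole to ensure that a chord distinct from $e$ is available on the boundary of the face containing $h$. The main difficulty will lie in this second part: avoiding flipping $e$ itself.

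For $T_2$: since $h$ is bad, there is a chord $e'$ of $T_2$ whose side $B$ is strictly contained in $A$ and contains $h$; in particular both endpoints of $e'$ lie in $A$. Because $T_2$ is non-crossing, no edge of $T_2$ crosses $e'$, so each face of $T_2$ lies entirely on one side of $e'$; since $h$ sits on the $B$-side, the face $g$ of $T_2$ containing $h$ is enclosed in the planar region bounded by $e'$ and the $B$-side border. Hence every vertex on the boundary of $g$ belongs to $B \subseteq A$, so every chord on the boundary of $g$ has both endpoints in $A$. Because $T_2$ has a chord, $g$ admits at least one chord on its boundary, and I flip any such chord into $h$ via Observation~\ref{ob:bound_hole}.

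For $T_1$: let $f$ be the face of $T_1$ containing $h$. Since $T_1$ has the chord $e$, $f$ admits at least one chord on its boundary, but I still have to exhibit one distinct from $e$. Suppose, towards a contradiction, that $e$ is the only chord on the boundary of $f$. Then the boundary of $f$ consists of $e$, the hole $h$, and border edges of $T_1$ linking the two endpoints of $e$ along the $A$-side of the convex hull. No other hole of $A$ can appear on this border path ($f$ contains only one hole), and the path cannot skip such a hole either, since no further chord on the boundary of $f$ is available. Thus $h$ would be the unique hole of $A$, contradicting the hypothesis that $A$ contains at least two holes. Therefore some chord $e^* \neq e$ lies on the boundary of $f$, and I flip it into $h$ using Observation~\ref{ob:bound_hole}.

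It remains to check that the resulting pair $T'_1, T'_2$ is nice and that $A$ loses at least one bad hole. Both trees now contain $h$ as a common border edge and keep all other border edges of $T_1, T_2$, so they share the same border. The chord $e^*$ belongs to $T_1 \setminus T_2$ (otherwise it would be a common chord of the nice pair $(T_1, T_2)$), and symmetrically for $e'$; hence deleting these chords creates no new common chord, so $(T'_1, T'_2)$ is again a nice pair. Because $e$ was not flipped, $A$ is still a side of a chord of $T'_1$. The hole $h$ has been filled in both trees, hence is no longer a bad hole; and a hole of $A$ that was not bad before cannot become bad either, since every side of $T'_2$ is also a side of $T_2$. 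Therefore $A$ has at least one less bad hole, as claimed.
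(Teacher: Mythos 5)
Your proposal is correct and takes essentially the same route as the paper: both flips come from Observation~\ref{ob:bound_hole}, with the chord of $T_1$ different from $e$ found by ruling out, via the two-holes hypothesis, the case where $e$ is the only chord on the boundary of the face of $T_1$ containing $h$, and the chord of $T_2$ taken on the boundary of the face of $T_2$ containing $h$, which lies inside $A$ because $h$ is bad. Your closing checks of niceness and of the drop in the number of bad holes are just a more explicit version of the paper's one-line remark.
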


\begin{proof}
    An illustration of the proof is given in Figure~\ref{fig:fillbad}.
    
    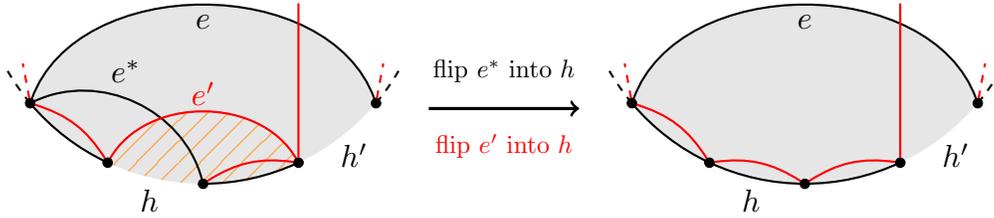
\begin{figure}[hbtp]
        \begin{center}
        \tikzstyle{vertex}=[circle,draw, minimum size=7pt, scale=0.5, inner sep=1pt, fill = black]
        \tikzstyle{fleche}=[->,>=latex]
        \tikzstyle{labell}=[text opacity=1, scale =1.2]
        \begin{tikzpicture}[scale=1]

            \def\firstarc{($(0,1)+({3*cos(180)},{3*sin(180)})$) arc (-180:0:3)}
            
            \nodearc(a1)(0,1)(-140:3)[vertex]();
            \nodearc(a2)(0,1)(-40:3)[vertex]();
            \nodearc(a3)(0,1)(-65:3)[vertex]();
            \nodearc(a4)(0,1)(-90:3)[vertex]();
            \nodearc(a5)(0,1)(-115:3)[vertex]();

            \begin{scope}
            \clip \firstarc;
            \fill[black!10] (a1) to[bend left = 70] (a2) to[bend left = 90] (a1);
            \draw[thick]  (a1) to[bend left = 70] (a2) to[bend left = 90] (a1); 
            \draw[red, thin, pattern=hatch, hatch size=8, hatch angle = 80, pattern color=orange!70]  (a5) to[bend left = 60] (a3) to[bend left = 90] (a5);
            \end{scope}

            \centerarc[thick, dashed](0,1)(-140:-150:3);
            \centerarc[thick, dashed](0,1)(-40:-30:3);
            
            \centerarc[thick](0,1)(-140:-115:3);
            \centerarc[thick](0,1)(-90:-65:3);
            \draw[thick] (a1) to [bend left=50] (a4);
            
            \draw[thick, red] (a1) to [bend left=20] (a5);
            \draw[thick, red] (a4) to [bend left=20] (a3);
            \draw[thick, red] (a5) to [bend left=60] (a3);

            \nodearc(a1)(0,1)(-140:3)[vertex]();
            \nodearc(a2)(0,1)(-40:3)[vertex]();
            \nodearc(a3)(0,1)(-65:3)[vertex]();
            \nodearc(a4)(0,1)(-90:3)[vertex]();
            \nodearc(a5)(0,1)(-115:3)[vertex]();

            \draw[thick, red] (a3) to  ({3*cos(-65)},0.4);
            \draw[thick, red, dashed] (a2) to  ({3*cos(-37)},-0.4);
            \draw[thick, red, dashed] (a1) to  ({3*cos(-143)},-0.4);

            \node (h3) at ({3*cos(90)},-0.8) [labell, text = red]{$e'$};
            \node (h3) at ({3*cos(110)}, -0.5) [labell]{$e^*$};
            \node (h3) at ({3*cos(90)},0.15) [labell]{$e$};
            \nodearc(a4)(0,1)(-102.5:3.3)[labell]($h$);
            \nodearc(a5)(0,1)(-52.5:3.3)[labell]($h'$);

        \tikzset{xshift=3cm}

            \node (h0) at (1, -0.5) [labell, scale=0.8]{flip $e^*$ into $h$};
            \draw[->, very thick] (0,-1) to (2,-1);
            \node (h0) at (1, -1.5) [labell, scale=0.8, text = red]{flip $e'$ into $h$};
            
        \tikzset{xshift=5cm}

            \def\firstarc{($(0,1)+({3*cos(180)},{3*sin(180)})$) arc (-180:0:3)}
            
            \nodearc(a1)(0,1)(-140:3)[vertex]();
            \nodearc(a2)(0,1)(-40:3)[vertex]();
            \nodearc(a3)(0,1)(-65:3)[vertex]();
            \nodearc(a4)(0,1)(-90:3)[vertex]();
            \nodearc(a5)(0,1)(-115:3)[vertex]();

            \begin{scope}
            \clip \firstarc;
            \fill[black!10] (a1) to[bend left = 70] (a2) to[bend left = 90] (a1);
            \draw[thick]  (a1) to[bend left = 70] (a2) to[bend left = 90] (a1); 
            \end{scope}

            \centerarc[thick, dashed](0,1)(-140:-150:3);
            \centerarc[thick, dashed](0,1)(-40:-30:3);
            
            \centerarc[thick](0,1)(-140:-115:3);
            \centerarc[thick](0,1)(-90:-65:3);
            \centerarc[thick](0,1)(-90:-115:3);
            
            \draw[thick, red] (a1) to [bend left=20] (a5);
            \draw[thick, red] (a4) to [bend left=20] (a3);
            \draw[thick, red] (a5) to [bend left=20] (a4);

            \nodearc(a1)(0,1)(-140:3)[vertex]();
            \nodearc(a2)(0,1)(-40:3)[vertex]();
            \nodearc(a3)(0,1)(-65:3)[vertex]();
            \nodearc(a4)(0,1)(-90:3)[vertex]();
            \nodearc(a5)(0,1)(-115:3)[vertex]();

            \draw[thick, red] (a3) to  ({3*cos(-65)},0.4);
            \draw[thick, red, dashed] (a2) to  ({3*cos(-37)},-0.4);
            \draw[thick, red, dashed] (a1) to  ({3*cos(-143)},-0.4);

            \node (h3) at ({3*cos(90)},0.15) [labell]{$e$};
            \nodearc(a4)(0,1)(-102.5:3.3)[labell]($h$);
            \nodearc(a5)(0,1)(-52.5:3.3)[labell]($h'$);

        \end{tikzpicture}
        \end{center}
        \caption{The side $A$ (in grey) of the chord $e$ in the tree $T_1$ (in black) contains two holes $h$ and $h'$, with $h$ a bad hole of $A$ w.r.t the tree $T_2$ (in red). We can fill $h$ in $T_1$ by flipping $e^*$ and we can fill $h$ in $T_2$ by flipping $e'$. }
        \label{fig:fillbad}
\end{figure}
    
    Let us first prove that we can fill $h$ in $T_1$ by flipping a chord different from $e$.
    Let $f$ be the face containing $h$ in $T_1$. 
    Note that the boundary of $f$ is included in $A$. 
    If $e$ is the only chord on the boundary of $f$, then $f$ is the only face whose boundary is included in $A$ and $h$ is the only hole of $A$, a contradiction. 
    Otherwise, let $e^*$ be a chord of $T_1$ different from $e$ that is on the boundary of $f$.
    By Observation~\ref{ob:bound_hole}, we can perform $e^* \rightsquigarrow h$ in $T_1$.

    Let us now prove that we can fill $h$ in $T_2$ by flipping a chord in $A$.
    Let $f'$ be the face containing $h$ in $T_2$. Since $h$ is a bad hole of $A$ w.r.t $T_2$, the face $f'$ is included in $A$.
    Let $e'$ be a chord of $T_2$ on the boundary of $f'$, and note that $e'$ is a chord in $A$.
    By Observation~\ref{ob:bound_hole}, we can perform $e' \rightsquigarrow h$ in $T_2$.
    Note that $T_1',T_2'$ is still nice since we created a common border edge and no common chord.

\end{proof}

Recall that the side $A$ of $T$ contains a good hole. Therefore, each time we update $T,T'$ applying Lemma~\ref{lem:bad} on $A$, the good holes in $A$ are not filled, which ensures we can repeatedly apply the lemma until no bad hole remains. After this process, we filled $m<k_A$ bad holes w.r.t. $T'$ in $2m$ flips, and $A$ is now a good side of $T$ w.r.t. $T'$. Let us now explain how we can transform $A$ into a very good side.   

Recall that $A$ being not very good simply means that there are too many chords of $T'$ crossing the unique chord $e$ on the boundary of $A$. The goal of Lemma~\ref{lem:good} is to remove these extra crossings. To obtain a very good side, we will apply it iteratively until we reach the right amount of chords.

\begin{lemma}\label{lem:good}
    Let $T_1$, $T_2$ be a nice pair of trees, $A$ be a good side of a chord $e$ of $T_1$ w.r.t $T_2$ which is not very good w.r.t. $T_2$.
    Then there exists a hole $h$ not in $A$ such that: (i) we can fill $h$ in $T_1$ by flipping a chord distinct from $e$ and (ii) we can fill $h$ in $T_2$ by flipping a chord crossing $e$.

    Moreover the resulting pair of trees after these two flips is still nice.
\end{lemma}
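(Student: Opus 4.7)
The plan is to exploit the fact that $A$ is not very good by examining the faces of $T_2$ that the chord $e$ traverses, and to locate within them a hole on the $B$-side suitable for both required flips. Throughout, let $B$ denote the other side of $e$ in $T_1$.

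Since $A$ is good, no chord of $T_2$ has both endpoints in $A$, so the degree of $A$ in $T_2$ equals the number $d$ of chords of $T_2$ crossing $e$, and the not-very-good hypothesis yields $d > k_A$. Denote these crossing chords by $f_1,\ldots,f_d$, ordered along $e$ from $u$ to $v$. The plane graph obtained from $T_2$ by filling its holes is the convex hull together with the chords of $T_2$, hence $2$-connected, so its faces are simply connected. Consequently $e$ passes through exactly $d+1$ distinct faces $F_0,F_1,\ldots,F_d$ of $T_2$, where $F_i$ contains the segment of $e$ between the crossings with $f_i$ and $f_{i+1}$ (with $F_0$ touching $u$ and $F_d$ touching $v$). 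In particular each $F_i$ has at least one crossing chord on its boundary.

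By the bijection between faces and holes of $T_2$, these $d+1$ faces correspond to $d+1$ distinct holes $h_0,\ldots,h_d$ of $T_2$. As at most $k_A < d+1$ of them can lie in $A$, I pick an index $i$ for which $h := h_i$ is a hole not in $A$. Property~(ii) is then immediate: some crossing chord $f_j$ of $T_2$ lies on $\partial F_i$, so Observation~\ref{ob:bound_hole} permits the flip $f_j \rightsquigarrow h$ in $T_2$.

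For~(i), I count chords of $T_2$: there are $k_A + k_B - 1$ in total, none in $A$ and $d$ crossing $e$, so $T_2$ has $k_A + k_B - 1 - d \geq 0$ chords in $B$, which forces $k_B \geq 2$. If the face $F$ of $T_1$ containing $h$ had only $e$ as a chord on its boundary, then $\partial F$ would traverse $e$ together with the entire $B$-arc of the convex hull, placing all $k_B \geq 2$ holes of $B$ on $\partial F$, contradicting that $F$ contains exactly one hole. Thus some chord $c^* \neq e$ of $T_1$ lies on $\partial F$, and Observation~\ref{ob:bound_hole} allows the flip $c^* \rightsquigarrow h$ in $T_1$. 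After both flips, $T_1$ and $T_2$ each gain the border edge $h$ and lose a chord, so border edges still match; since the original pair had no common chord, none is created, so the pair remains nice. The most delicate point to justify carefully will be that $e$ traverses $d+1$ distinct simply connected faces of $T_2$, each with a crossing chord on its boundary; this follows from the $2$-connectedness of $T_2$ with holes filled and the separation of the corresponding regions by the non-crossing chords $f_1,\ldots,f_d$.
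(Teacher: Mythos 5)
Your proof is correct and follows essentially the same route as the paper: the $d>k_A$ chords of $T_2$ crossing $e$ give more than $k_A$ distinct faces (hence holes) of $T_2$ met by $e$, so some such hole $h$ lies outside $A$ and can be filled in $T_2$ by flipping a crossing chord via Observation~\ref{ob:bound_hole}, after which filling $h$ in $T_1$ and checking niceness proceeds as in the paper. The only divergence is in part (i): the paper selects two holes outside $A$ and observes that $e$ can lie on the boundary of at most one of the corresponding faces of $T_1$, whereas you use a single hole together with the count $k_B\geq d+1-k_A\geq 2$ to exclude that $e$ is the only chord of $T_1$ on the boundary of the face containing $h$ --- both arguments are valid.
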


\begin{proof} 
The proof is illustrated in Figure~\ref{fig:removextra}.

\begin{figure}[hbtp]
        \begin{center}
        \tikzstyle{vertex}=[circle,draw, minimum size=7pt, scale=0.5, inner sep=1pt, fill = black]
        \tikzstyle{fleche}=[->,>=latex]
        \tikzstyle{labell}=[text opacity=1, scale =1.2]
        \begin{tikzpicture}[scale=1]

            \nodeellipse(a1)(0,0)(-20:2.5:1.5)[vertex]();
            \nodeellipse(a2)(0,0)(20:2.5:1.5)[vertex]();
            \nodeellipse(a3)(0,0)(-60:2.5:1.5)[vertex]();
            \nodeellipse(a4)(0,0)(-100:2.5:1.5)[vertex]();
            \nodeellipse(a5)(0,0)(-140:2.5:1.5)[vertex]();
            \nodeellipse(a6)(0,0)(180:2.5:1.5)[vertex]();
            \nodeellipse(a7)(0,0)(60:2.5:1.5)[vertex]();
            \nodeellipse(a8)(0,0)(100:2.5:1.5)[labell]();
            \nodeellipse(a9)(0,0)(140:2.5:1.5)[vertex]();

            \begin{scope}
            \clip (0,0) ellipse (2.5cm and 1.5cm);
            \fill[black!10]  (a3) to [bend left = 10] (a7) to[bend left = 90] (a2) to[bend left=90] (a3); 
            \fill[thin, pattern=hatch, hatch size=8, hatch angle = 20, pattern color=black!50]   (a6) to [bend right = 90] (a5) to[bend left = 20] (a4) to[bend right =10] (a9) to[bend left=40] (a6); 
            \end{scope}
    
            \nodeellipse(a1)(0,0)(-20:2.5:1.5)[vertex]();
            \nodeellipse(a2)(0,0)(20:2.5:1.5)[vertex]();
            \nodeellipse(a3)(0,0)(-60:2.5:1.5)[vertex]();
            \nodeellipse(a4)(0,0)(-100:2.5:1.5)[vertex]();
            \nodeellipse(a5)(0,0)(-140:2.5:1.5)[vertex]();
            \nodeellipse(a6)(0,0)(180:2.5:1.5)[vertex]();
            \nodeellipse(a7)(0,0)(60:2.5:1.5)[vertex]();
            \nodeellipse(a8)(0,0)(100:2.5:1.5)[labell]();
            \nodeellipse(a9)(0,0)(140:2.5:1.5)[vertex]();

            \draw[thick] (a3) to [bend left = 10] (a7);
            \draw[thick] (a3) to [bend left = 10] (a2);
            \draw[thick] (a4) to [bend right = 10] (a9);

            \draw[thick,dashed] (a9) to[bend left = 40] (a6);
            \draw[thick,dashed] (a5) to[bend left = 20] (a4);
            
            \draw[thick] (a1) arc [x radius = 2.5cm, y radius = 1.5cm, start angle = -20, end angle = 20] ;
            
            \draw[thick, red] (a1) to (a5);
            \draw[thick, red] (a1) to (a9);
            \draw[thick, red] (a2) to (a8);
            \draw[thick, red] (a1) to[bend left = 20] (a2);
            \draw[thick,dashed, red] (a9) to[bend left=5] (a6);

            \node (h3) at (30:1) [labell]{$e$};
            \node (h3) at (0,0) [labell, text=red]{$e_i$};
            \nodeellipse(h9)(0,0)(-160:2.9:1.8)[labell]($h$);
            \nodeellipse(h9)(0,0)(165:1:1)[labell]($e^*$);

      \tikzset{xshift=3.5cm}

            \node (h0) at (1, 0.5) [labell, scale=0.8]{flip $e^*$ into $h$};
            \draw[->, very thick] (0,0) to (2,0);
            \node (h0) at (1, -0.5) [labell, scale=0.8, text = red]{flip $e_i$ into $h$};
            
        \tikzset{xshift=5.5cm}

            \nodeellipse(a1)(0,0)(-20:2.5:1.5)[vertex]();
            \nodeellipse(a2)(0,0)(20:2.5:1.5)[vertex]();
            \nodeellipse(a3)(0,0)(-60:2.5:1.5)[vertex]();
            \nodeellipse(a4)(0,0)(-100:2.5:1.5)[vertex]();
            \nodeellipse(a5)(0,0)(-140:2.5:1.5)[vertex]();
            \nodeellipse(a6)(0,0)(180:2.5:1.5)[vertex]();
            \nodeellipse(a7)(0,0)(60:2.5:1.5)[vertex]();
            \nodeellipse(a8)(0,0)(100:2.5:1.5)[labell]();
            \nodeellipse(a9)(0,0)(140:2.5:1.5)[vertex]();

            \begin{scope}
            \clip (0,0) ellipse (2.5cm and 1.5cm);
            \fill[black!10]  (a3) to [bend left = 10] (a7) to[bend left = 90] (a2) to[bend left=90] (a3); 
            \end{scope}

            \nodeellipse(a1)(0,0)(-20:2.5:1.5)[vertex]();
            \nodeellipse(a2)(0,0)(20:2.5:1.5)[vertex]();
            \nodeellipse(a3)(0,0)(-60:2.5:1.5)[vertex]();
            \nodeellipse(a4)(0,0)(-100:2.5:1.5)[vertex]();
            \nodeellipse(a5)(0,0)(-140:2.5:1.5)[vertex]();
            \nodeellipse(a6)(0,0)(180:2.5:1.5)[vertex]();
            \nodeellipse(a7)(0,0)(60:2.5:1.5)[vertex]();
            \nodeellipse(a8)(0,0)(100:2.5:1.5)[labell]();
            \nodeellipse(a9)(0,0)(140:2.5:1.5)[vertex]();

            \draw[thick] (a3) to [bend left = 10] (a7);
            \draw[thick] (a3) to [bend left = 10] (a2);

            \draw[thick,dotted] (a9) to[bend left] (a6);
            \draw[thick,dotted] (a5) to[bend left] (a4);
            
            \draw[thick] (a5) arc [x radius = 2.5cm, y radius = 1.5cm, start angle = -140, end angle =-180] ;
            \draw[thick] (a1) arc [x radius = 2.5cm, y radius = 1.5cm, start angle = -20, end angle = 20] ;
            
            \draw[thick, red] (a1) to (a5);
            \draw[thick, red] (a2) to (a8);
            \draw[thick, red] (a1) to[bend left = 20] (a2);
            \draw[thick, red] (a5) to[bend right = 20] (a6);
            \draw[thick,dashed, red] (a9) to[bend left=5] (a6);

            \node (h3) at (30:1) [labell]{$e$};
            \nodeellipse(h9)(0,0)(-160:2.9:1.8)[labell]($h$);

        \end{tikzpicture}
        \end{center}
        \caption{An example of a nice pair $T_1$ (in black) and $T_2$ (in red). We can fill $h$ in $T_1$ by flipping $e^*$, a chord on the face (in grey lines) of $T_1$ associated to $h$, and we can fill $h$ in $T_2$ by flipping $e_i$, a chord on the face of $T_2$ associated to $h$ which crosses $e$.  }
        \label{fig:removextra}
\end{figure}
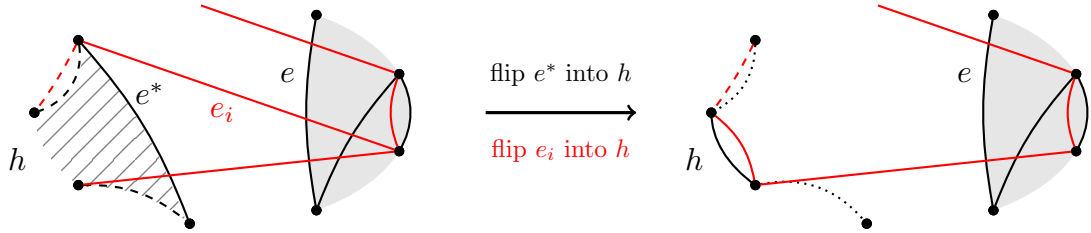    

    Let $\ell > k_A$ be the degree of $A$ in $T_2$. And let us denote by $B$ be the other side of $e$.
    Since $A$ is good w.r.t. $T_2$, $A$ contains no chord of $T_2$, and hence every chord $e_0$ of $T_2$ with an endpoint inside $A$ for $e_0$ crosses $e$, so has its other endpoint in $C \setminus A$. 
    Thus, there are $\ell$ chords $e_1,\dots, e_\ell$ of $T_2$ that cross $e$. 
    These $\ell$ chords split the convex hull of $C$ into $\ell+1$ parts, hence there are at least $\ell+1$ faces of $T_2$ that contain at least one chord in $e_1, \dots, e_\ell$ on their boundary. 
    Now consider the holes contained in these $\ell + 1$ faces of $T_2$. 
    Since $\ell > k_A$, there are at least two such holes, say $h$ and $h'$, that are not in $A$.
  
    Let $f$ and $f'$ be the faces of $T_1$ containing $h$ and $h'$. 
    Since the boundaries of $f$ and $f'$ are both included in $B$, $e$ cannot lie in both $f$ and $f'$. So, up to symmetry, we may assume $e$ is not on the boundary of $f$. 
    By Observation~\ref{ob:bound_hole}, we can perform $e^* \rightsquigarrow h$ in $T_1$ with $e^*$ a chord of $T_1$ that is on the boundary of $f$.
    By definition of $h$, there is a chord $e_i$ on the face of $T_2$ containing $h$ and we can perform $e_i \rightsquigarrow h$ in $T_2$ by Observation~\ref{ob:bound_hole}.
    Since we only flipped chords to border edges, the resulting pair of trees is still nice.
\end{proof}

Let $d$ be the degree of $A$ in $T'$. Applying $d-k_A$ times Lemma~\ref{lem:good} transforms $A$ into a very good side w.r.t. $T'$ using $2(d - k_A)$ flips. 

To summarize, using Lemma~\ref{lem:bad} and Lemma~\ref{lem:good}, we are able to transform any side $A$ containing a good hold into a very good side. However, at each step, we use $2$ flips and fill one hole. Once $A$ becomes very good side, our goal is to apply Lemma~\ref{lem:very_good} to match $k_A$ edges in $T$ and $T'$ in $\frac{5k_A}{3}$ flips.

\subsection{Bounding the number of flips}

In order to conclude the proof of Theorem~\ref{thm:mainthm}, we need to make sure that we save enough using Lemma~\ref{lem:very_good} to compensate for the expensive use of Lemmas~\ref{lem:bad} and~\ref{lem:good}. In other words, we want to ensure that we will not use too many flips to obtain a very good side relative to the number of holes in the resulting side. This is why we need to start from a side whose number of bad holes and degree are not too large compared to its number of holes. 

Let $T,T'$ be a nice pair of trees and $\tau > 2$. We say a side $A$ of a chord $e$ of $T$ is \emph{$\tau$-extremal} for a tree $T'$ if the degree of $A$ in $T'$ is at most $\tau \cdot  k_A$, and, for every side $A' \subsetneq A$ of $T'$, the degree of $A'$ in $T$ is more than $\tau \cdot k_{A'}$. First, we prove that such a side exists in a minimal counterexample.

\begin{lemma}\label{cl:existtau}
    Let $T_1$ and $T_2$ be a nice pair of trees that are not border paths, then either $T_1$ or $T_2$ contains a $\tau$-extremal side.
\end{lemma}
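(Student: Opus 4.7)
My plan is to show that among all sides of $T_1$ or $T_2$ satisfying the first condition of $\tau$-extremality, a lex-smallest such side (in a suitable ordering) must automatically satisfy the second condition. Call a side $A$ of a chord of some $T\in\{T_1,T_2\}$ a \emph{candidate} if $\deg_A(T')\leq \tau k_A$, where $T'$ denotes the other tree. The whole lemma will reduce to: (a) candidates exist, and (b) the lex-minimum candidate is $\tau$-extremal.

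For existence, since $T_1$ is not a border path it has a chord $e$ with two sides $A,B$. I will prove the degree-sum identity
\[ \deg_A(T_2)+\deg_B(T_2) \;=\; 2\bigl(k_A+k_B-1\bigr) \]
by checking that each chord $f$ of $T_2$ contributes exactly $2$ to the left-hand side. The only subtle case is when some endpoint $v$ of $f=vw$ coincides with an endpoint of $e$: then the "inside $A$ for $vw$" condition forces both endpoints of $f$ to be counted on the side that contains $w$, still contributing $2$. The case $f=e$ is excluded because $(T_1,T_2)$ is a nice pair. The right-hand side uses that $T_1$ and $T_2$ have the same number of chords (they share their border edges), combined with the fact that the chord $e$ itself appears in both $k_A$ and $k_B$. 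From the identity, whenever $\deg_A(T_2) > \tau k_A$ we get $\deg_B(T_2) < (2-\tau)k_A+2k_B-2 < 2k_B \le \tau k_B$ since $\tau>2$, so at least one of $A,B$ is a candidate.

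Finally, I will pick a candidate $(T,A)$ minimizing the pair $(k_A,|A|)$ lexicographically and claim it is $\tau$-extremal. The first condition holds by choice. For the second, suppose there were $A'\subsetneq A$ a side of $T'$ with $\deg_{A'}(T)\leq\tau k_{A'}$; then $(T',A')$ is again a candidate (its "other tree" is $T$). Since $A$ and $A'$ are both contiguous arcs and $A'\subsetneq A$, we have $|A'|<|A|$; and since every hole of $A'$ (a consecutive pair of points with both endpoints in $A'\subseteq A$) is also a hole of $A$, we have $k_{A'}\leq k_A$. Then $k_{A'}<k_A$ contradicts $k$-minimality of $(T,A)$, while $k_{A'}=k_A$ together with $|A'|<|A|$ contradicts the size tiebreaker. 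The main obstacle is the careful case analysis behind the degree-sum identity, particularly dealing with the "inside" convention at endpoints of $e$; once that is in hand, the minimality argument is clean.
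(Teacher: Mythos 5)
Your proof is correct and takes essentially the same approach as the paper: the existence of a ``candidate'' side comes from the same count of endpoints of chords of $T_2$ against the two sides of a chord of $T_1$, and choosing a lex-minimal candidate is just a repackaging of the paper's descent, which replaces a non-extremal candidate $A$ by a strictly smaller candidate side $A'\subsetneq A$ of the other tree and iterates until termination. Note that minimizing $|A|$ alone would already suffice, since $A'\subsetneq A$ forces $|A'|<|A|$.
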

\begin{proof}
    Let $e$ be a chord of $T_1$, and $A$ and $B$ be the two sides of $e$. 
    Recall that $k_A + k_B$ is the number of holes in $T_1$ (thus in $T_2$).
    Since the number of holes in $T_2$ is equal to the number of chords in $T_2$ minus $1$, $T_2$ contains $k_A + k_B - 1$ chords.
    In particular, there are $2(k_A + k_B - 1)$ endpoints of chords of $T_2$, hence by symmetry we get that $A$ has degree at most $2k_A \leq \tau  \cdot k_A$ in $T_2$.
    
    If $A$ is not $\tau$-extremal, there is a side $A' \subsetneq A$ of $T_2$ of degree at most $\tau \cdot k_{A'}$ in $T_1$. We now replace $A$ by $A'$, swap $T_1$ and $T_2$ and iterate this process until we find a $\tau$-extremal side. Note that this terminates since each time $A'\subsetneq A$.
\end{proof}

We now show that a $\tau$-extremal side has not too many bad holes, which we use later for bounding the number of flips in our process. 

\begin{lemma}\label{lem:boundbadholes}
Let $T_1$ and $T_2$ be a nice pair of trees, and $A$ be a $\tau$-extremal side of a chord $e$ of $T$. Then the side $A$ contains at most $\frac{2}{\tau} k_A$ bad holes w.r.t. $T_2$.
\end{lemma}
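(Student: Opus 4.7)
The plan is to bound the bad holes in $A$ by grouping them according to the inclusion-wise maximal sides of $T_2$ strictly contained in $A$, and then to exploit $\tau$-extremality on each such side.

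First, I would let $e'_1,\dots,e'_m$ be the chords of $T_2$ whose side $A'_i$ contained in $A$ is inclusion-wise maximal among the sides of $T_2$ strictly contained in $A$. The non-crossing property of $T_2$, combined with this maximality and the fact that $T_2$ is a tree (so two distinct chords share at most one vertex), should imply that the regions $A'_i$ are pairwise essentially disjoint, intersecting in at most one common vertex. Indeed, if $e'_j$ did not cross $e'_i$ and had both endpoints in $A'_i$, then $A'_j \subsetneq A'_i$, contradicting maximality; the remaining case forces the shared boundary to consist of at most a single shared endpoint of $e'_i$ and $e'_j$.

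The first main step is to observe that the number of bad holes of $A$ w.r.t.\ $T_2$ equals $\sum_{i=1}^m k_{A'_i}$. Any bad hole $h$ lies in some side $B\subsetneq A$ of $T_2$, and the inclusion-wise maximal such $B$ is one of the $A'_i$. Conversely, every hole in $A'_i$ is a bad hole since $A'_i \subsetneq A$. Uniqueness of the $A'_i$ containing $h$ follows from the fact that $h$ has two distinct endpoints while $A'_i \cap A'_j$ contains at most one vertex.

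The second main step is to bound $\sum_i \deg_{T_1}(A'_i)$ by $2k_A$. Since $A$ is a side of the chord $e$ of $T_1$ and $T_1$ is non-crossing, no chord of $T_1$ crosses $e$; hence all chords of $T_1$ with an endpoint in $A$ are fully contained in $A$, and there are $k_A$ of them, contributing $2k_A$ endpoints. A short case analysis using the ``inside $A'_i$ for $vw$'' definition together with the essentially-disjoint property shows that each such endpoint is counted in at most one $\deg_{T_1}(A'_i)$: the only subtle case is when an endpoint $v$ of a chord $vw \in T_1$ lies on the boundary of two distinct $A'_i,A'_j$, i.e.\ $v$ is a shared endpoint of $e'_i$ and $e'_j$; but then being inside both for $vw$ forces $w \in A'_i\cap A'_j=\{v\}$, impossible.

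Finally, applying $\tau$-extremality to each $A'_i$ gives $\deg_{T_1}(A'_i) > \tau \cdot k_{A'_i}$. Summing and combining with the previous bound yields
\[
\tau \sum_i k_{A'_i} \;<\; \sum_i \deg_{T_1}(A'_i) \;\leq\; 2k_A,
\]
so the number of bad holes $\sum_i k_{A'_i}$ is at most $\tfrac{2}{\tau}k_A$, as required. The main technical obstacle I anticipate is the verification of the essentially-disjoint property of the $A'_i$ and the single-counting property for endpoints of chords of $T_1$; both rely on careful use of the non-crossing property of $T_2$, the tree structure, and the maximality of the $A'_i$, but once these structural facts are in hand the counting is immediate.
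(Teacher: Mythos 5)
Your proposal is correct and follows essentially the same route as the paper: group the bad holes into the inclusion-wise maximal sides of $T_2$ contained in $A$, bound the sum of their degrees in $T_1$ by the $2k_A$ endpoints of the $k_A$ chords of $T_1$ in $A$, and apply $\tau$-extremality to each such side. The extra care you take with the almost-disjointness of the maximal sides and the single-counting of endpoints only makes explicit what the paper's proof leaves implicit.
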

\begin{proof}
Let $m$ be the number of bad holes in $A$ w.r.t. $T_2$. 
For each bad hole $h$ w.r.t. $T_2$ in $A$, let $e_h$ be the chord of $T'$ such that the side of $e_h$ included in $A$ is inclusion-wise maximal (with possibly $e_h=e_{h'}$ for distinct bad holes $h,h'$). 
Then, we define $\mathcal{A}$ as the set formed by the sides $A'\subseteq A$ of all the chords $e_h$.

Note that these sides do not overlap, so no hole belongs to two different sides in $\A$.
Thus, each bad hole w.r.t. $T_2$ is contained in exactly one side $A' \in \mathcal{A}$. 
Further, each $A' \in \A$ contains exactly $k_{A'}$ holes of $T_2$, that are all bad w.r.t $T_2$. 
Thus, we get that $m=\sum_{A' \in \mathcal{A}} k_{A'}$. 
Since $A$ is $\tau$-extremal, the degree of $A' \in \A$ in $T_1$ is more than $\tau k_{A'}$.
Since there are $k_A$ chords of $T_1$ in $A$, the sum of the degrees of the sides $A'\in \A$ in $T_1$ is at most $2k_A$. But then:
    \begin{align*}
    2k_A \geq \sum_{A' \in \mathcal{A}} \tau \cdot k_{A'} = \tau m.
    \end{align*}
    Rearranging the equation proves the lemma.
\end{proof}

We are now ready to prove Theorem~\ref{thm:mainthm}. Let us first explain the intuition of the proof.
By Lemma~\ref{cl:existtau}, a $\tau$-extremal side exists in a minimal counterexample. Moreover, informally speaking, a $\tau$-extremal side does not have too large degree (by definition) and does not contain too many bad holes by Lemma~\ref{lem:boundbadholes}.
Thus,  we can obtain a very good side from a $\tau$-extremal side using not too many flips.
And, we will use this idea along all we have proved on very good sides and minimal counterexamples to prove the upper bound on a minimal flip sequence. 

\mainthm*

\begin{proof}

Assume by contradiction Theorem~\ref{thm:mainthm} does not hold and let us consider  a minimum counterexample $T_I$ and $T_F$.
By Corollary~\ref{cor:nice}, $T_I$ and $T_F$ form a nice pair of trees.
Consider a $\tau$-extremal side $A$ in the counterexample. 
By symmetry, we can assume that $A$ is a side of some chord $e$ of $T_I$.
By Lemma~\ref{lem:boundbadholes}, the side $A$ contains $ m \leq \frac{2}{\tau} k_A$ bad holes w.r.t $T_F$. In particular $A$ contains a good hole, and we can apply $m$ times Lemma~\ref{lem:bad} to fill every bad hole of $A$ w.r.t. $T_F$ using $2m$ flips.

By Lemma~\ref{lem:bad}, the resulting pair of trees $T_I'$ and $T_F'$ form a nice pair of trees and the side $A'$ of the chord $e$ of $T_I'$ is a good side of $T_I'$ w.r.t $T_F'$ of size $k_{A'} = k_A - m$ such that the degree of $A'$ in $T_F'$ is at most $\tau k_A - 2m$.

Now, we apply $(\tau - 1)k_A - m$ times Lemma~\ref{lem:good}, until the degree of $A'$ in $T_F''$ becomes at most $k_{A'}$.
Again, note that each time by Lemma~\ref{lem:good}, we are left with a nice pair of trees where $e$ did not change.
After $2((\tau - 1)k_A - m)$ flips, the resulting pair of trees $T_I''$ and $T_F''$ form a nice pair of trees and the side $A'' = A$ of the chord $e$ of $T_I''$ is a very good side of $T_I''$ w.r.t. $T_F''$ of size $k_{A''} = k_{A'}= k_A - m$.

By Lemma~\ref{lem:very_good}, we can match $k_{A''}$ chords of the trees using at most $\frac{5}{3} k_{A''}$ flips. Let $T_I^*$ and $T_F^*$ be the resulting pair of trees. We have that:
    \begin{align*}
    \delta(T_I^*, T_ F^*) &= \delta(T_I'', T_F'')-k_{A''}= \delta(T_I'', T_F'') - (k_{A}-m) \\
                          &= \delta(T_I', T_F') - ((\tau - 1)k_A - m) - (k_A - m) \\
                          &= \delta(T_I, T_F) - m - ((\tau - 1)k_A - m) - (k_A - m) \\
                          &= \delta(T_I, T_F) - \tau k_A + m
    \end{align*}

Since  $\tau k_A - m > 0$, $\delta(T_I^*, T_F^*) < \delta(T_I, T_F)$ and by minimality, there exists a flip sequence between $T_I^*$ and $T_F^*$ of length at most $c\delta(T_I^*, T_F^*)$.  
Thus, we have a flip sequence between $T_I$ and $T_F$ of length at most:
    \begin{align*}
     &  c\delta(T_I^*, T_F^*) + 2m + 2((\tau - 1)k_A - m)  + \frac{5}{3} (k_A - m)  \\
     &=  c\delta(T_I, T_F) + (2 - c)\tau k_A - 2k_A + cm + \frac{5}{3} \left(k_A - m\right) \\
 &=  c\delta(T_I, T_F) + (2 - c)\tau k_A - \frac{1}{3}k_A + \left(c - \frac{5}{3}\right)m \\
  &\leq  c\delta(T_I, T_F) + \left((2 - c)\tau - \frac{1}{3} + \left( c - \frac{5}{3} \right) \frac{2}{\tau} \right) k_A     
    \end{align*}

For $\tau = 2 + \sqrt{2}$ and $c = \frac{1}{12}(22 + \sqrt{2})$, we get a flip sequence between $T_I$ and $T_F$ of length at most $c\delta(T_I, T_F)$, a contradiction.
\end{proof}


\section{Lower bounds}\label{sec:lower}

The goal of this section is to prove the different lower bounds. 

For each model, we will give a family of pairs of trees that satisfy the corresponding theorem. The proofs of the three theorems share a similar structure and rely on counting arguments. To give the flavor, we start with the simplest construction, and prove Theorem~\ref{thm:lb_ncflips} in Section~\ref{sec:lowerncflip}. We then proceed with the more involved case of flips by proving Theorem~\ref{thm:lb_flips} in Section~\ref{sec:lowerflip}. Finally, we prove Theorem~\ref{thm:lb_rotations} in Section~\ref{sec:lowerrotation} using a different counting method.

\subsection{Non-crossing flips}\label{sec:lowerncflip}

As a warm-up before proving the other items which are harder, let us prove prove Theorem~\ref{thm:lb_ncflips}.

\lbncflips*

In particular, the proof of Theorem~\ref{thm:lb_flips} will follow the same scheme but the construction and proofs will be more technical.

\begin{figure}[hbtp]
        \begin{center}
        \tikzstyle{vertex}=[circle,draw, minimum size=7pt, scale=0.5, inner sep=1pt, fill = black]
        \tikzstyle{fleche}=[->,>=latex] 
        \tikzstyle{labell}=[text opacity=1, scale =0.9]
        \begin{tikzpicture}[scale=0.9]

            \node (a2) at (-140:1.4) [vertex] {};
            \node (a4) at (140:1.4) [vertex] {};
            \node (a5) at (-40:1.4) [vertex] {};
            \node (a7) at (40:1.4) [vertex] {};

            \node (b2) at (-140:1.8) [labell] {$v_2$};
            \node (b4) at (140:1.8) [labell] {$v_1$};
            \node (b5) at (-40:1.8) [labell] {$v_4$};
            \node (b7) at (40:1.8) [labell] {$v_3$};
            
            \draw (a2) to (a4);
            \draw (a5) to (a7);
            \draw (a5) to (a4);

            \draw (a2) to [bend right = 20](a4) [red];
            \draw (a7) to [bend right = 20](a5) [red];

            \draw (a2) to (a7) [red];

        \tikzset{xshift=7.5cm}
    
        \node (i3) at ({3*cos(3*(360/14) + (360/28))}, 1)[vertex]{};
        \node (i6) at ({3*cos(6*(360/14) + (360/28))}, 1)[vertex]{};
        \node (i7) at ({3*cos(7*(360/14) + (360/28))}, -1)[vertex]{};
        \node (i13) at ({3*cos(13*(360/14) + (360/28))}, -1)[vertex]{};
        \node (i14) at ({3*cos(14*(360/14) + (360/28))}, 1)[vertex]{};
        \node (i10) at ({3*cos(10*(360/14) + (360/28))}, -1)[vertex]{};

        \node (h3) at ({3*cos(3*(360/14) + (360/28))}, 1.4)[labell]{$v_3^i = v_1^{i+1}$};
        \node (h10) at ({3*cos(10*(360/14) + (360/28))}, -1.4)[labell]{$v_4^i = v_2^{i+1}$};
        \node (h6) at ({3*cos(6*(360/14) + (360/28)) - 0.2}, 1.4)[labell]{$v_1^i$};
        \node (h7) at ({3*cos(7*(360/14) + (360/28)) - 0.2}, -1.4)[labell]{$v_2^i$};
        \node (h13) at ({3*cos(13*(360/14) + (360/28)) + 0.2}, -1.4)[labell]{$v_3^{i+1}$};
        \node (h14) at ({3*cos(14*(360/14) + (360/28)) + 0.2}, 1.4)[labell]{$v_4^{i+1}$};

        \draw[red] (i3) to[bend right=10] (i10);
        \draw (i13) to (i14) ;
        \draw (i6) to (i10) ;
        \draw (i3) to (i13) ;
        \draw (i3) to[bend left=10] (i10);
        \draw (i7) to (i6) ;

        \draw (i6) to (i10) ;
        \draw[red] (i7) to (i3);
        \draw[red] (i10) to (i14);
        \draw[red] (i6) to[bend left = 20] (i7) ;
        \draw[red] (i13) to[bend left = 20] (i14) ;

        \draw[red, dashed] (i7) to ({3*cos(7*(360/14) + (360/28)) - 0.3}, -0.4);
        \draw[dashed] (i6) to ({3*cos(6*(360/14) + (360/28)) - 0.3}, 0.4);
        \draw[red, dashed] (i13) to ({3*cos(13*(360/14) + (360/28)) + 0.3}, -0.4);
        \draw[dashed] (i14) to ({3*cos(14*(360/14) + (360/28)) + 0.3}, 0.4);
            
        \begin{scope}
            \clip (0,0) ellipse (3cm and 2cm);
        \end{scope}

        \end{tikzpicture}
        \end{center}
        \caption{On the left, the tree $T_1$ in black and the tree $T_1'$ in red. On the right, the subgraph induced by $C_i$ and $C_{i+1}$ in a pair $T_k$ (in black) and $T_k'$ (in red).  }
        \label{fig:lowernc}
    \end{figure}

Let us denote by $T_1$ and $T'_1$ the pair of non-crossing spanning trees on a convex set $C$ of size $4$ represented in Figure~\ref{fig:lowernc}. Note that we have $\delta(T_1,T_1')=1$.

For every $k$, we denote by $T_k, T_k'$ the pair of non-crossing spanning trees obtained by taking $k$ disjoint copies of $T_1,T_1'$ and identifying the points $v_3$ and $v_4$ of the $i$-th copy  respectively with the points $v_1$ and $v_2$ in the $(i+1)$-th copy for $i<k$.
We define $C_i$ as the set of points of the $i$-th copy, and $v_j^i$ the point corresponding to $v_j$ in $C_i$. In particular, the sets of points $C_i$'s are not disjoint since $C_i$ and $C_{i+1}$ intersect on $v_1^i=v_3^{i+1}$ and $v_2^i=v_4^{i+1}$. Finally observe that $\delta(T_k, T_k') = k$ for all $k \geq 1$.

\medskip

Recall that we can always transform a tree $T$ into another tree $T'$ using at most $2\delta(T,T')$ non-crossing flips by flipping edges of the symmetric difference into border edges (with an iterative application of Lemma~\ref{lem:flipborder}). The rest of the proof of Theorem~\ref{thm:lb_ncflips} consists in proving by induction on $k$ that this strategy yields a minimal non-crossing flip sequence. First, we can easily see that one cannot transform $T_1$ into $T'_1$ with one non-crossing flip, which proves the case $k=1$.
 
For the induction, consider an integer $k>1$ and assume that for $\ell < k$, a minimal non-crossing flip sequence from $T_\ell$ to $T_\ell'$ has length at least $2\ell$.
Let us first remark that the following holds.
    
\begin{lemma}\label{lem:LB_ncflips_chordschange}
    If a non-crossing flip sequence $\S$ from $T_k$ to $T_k'$ does not modify at least one common chord $e$, then $\S$ has length at least $2k$.
\end{lemma}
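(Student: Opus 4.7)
The plan is to split the flip sequence $\S$ along the chord $e$ and apply the induction hypothesis on each of the two resulting sub-problems. First I would identify $e$: in the construction, the common chords of $T_k$ and $T_k'$ are exactly the edges $v_3^i v_4^i = v_1^{i+1} v_2^{i+1}$ for $1 \leq i \leq k-1$, since $v_3 v_4$ is the only edge common to $T_1$ and $T_1'$ that becomes a chord (rather than a border edge) once copies are glued along the identifications $v_3^i = v_1^{i+1}$, $v_4^i = v_2^{i+1}$. Thus $e = v_3^i v_4^i$ for some $1 \leq i \leq k-1$, and $e$ splits the convex hull of $C$ into a ``left'' side $L$ spanning $C_1 \cup \cdots \cup C_i$ and a ``right'' side $R$ spanning $C_{i+1} \cup \cdots \cup C_k$.

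Next, I would exploit that $\S$ never removes $e$. Since every intermediate tree of $\S$ contains $e$ and is non-crossing, no edge appearing in any intermediate tree crosses $e$; in particular, every edge added or removed along $\S$ lies entirely inside $L$ or entirely inside $R$. So $\S$ decomposes into two subsequences $\S_L$ and $\S_R$ consisting respectively of flips in $L$ and in $R$. Augmenting each tree of $\S_L$ with $e$ produces a non-crossing flip sequence on $L$ from $T_k[L] \cup \{e\}$ to $T_k'[L] \cup \{e\}$. Under the natural isomorphism between $L$ (in which $e$ becomes a border edge, as $v_3^i$ and $v_4^i$ are consecutive on the boundary of $L$) and the convex set on which $T_i, T_i'$ are defined, these two trees are identified with $T_i$ and $T_i'$ respectively. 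The analogous statement holds for $\S_R$ with $T_{k-i}$ and $T_{k-i}'$.

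By the induction hypothesis of Theorem~\ref{thm:lb_ncflips}, applicable since both $i < k$ and $k - i < k$, any non-crossing flip sequence from $T_i$ to $T_i'$ has length at least $2i$, and analogously any such sequence from $T_{k-i}$ to $T_{k-i}'$ has length at least $2(k-i)$. Therefore $|\S| = |\S_L| + |\S_R| \geq 2i + 2(k-i) = 2k$, as required. The main obstacle is the bookkeeping step in the second paragraph: verifying that restricting $T_k$ and $T_k'$ to one side of $e$ and adding back $e$ really produces a smaller instance of the construction, and that every flip of $\S$ respects this partition. Both facts follow from the persistence of $e$ throughout $\S$ (which prevents any added edge from crossing $e$) together with the explicit modular structure of $T_k$ and $T_k'$, where the chord $v_3^i v_4^i$ cleanly separates the first $i$ copies from the last $k-i$.
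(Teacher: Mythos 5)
Your proposal is correct and takes essentially the same route as the paper: identify the unmodified common chord $v_3^iv_4^i$, use its persistence to argue that every flip stays within one of the two sides, restrict the sequence to each side to obtain flip sequences between copies of $(T_i,T_i')$ and $(T_{k-i},T_{k-i}')$, and sum the inductive bounds $2i+2(k-i)=2k$. The only point you gloss over is why a single flip cannot remove an edge on one side and add one on the other: the fact that no edge crosses $e$ is not by itself enough, and the paper justifies this via connectivity (the side losing an edge would become disconnected, since $e$ keeps each side spanned by its restricted subtree) -- a one-line addition to your argument.
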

\begin{proof}
    By construction, there exists $i < k$ such that $e$ is a chord with both endpoints in $C_i$ and $C_{i+1}$. 
    Let $A$ and $B$ be the two sides of $e$ such that $C_i \subseteq A$ and $C_{i+1} \subseteq B$.
    Since $e$ is not modified during $\S$, no step in $\S$ removes an edge in $A$ to add an edge in $B$, or conversely (otherwise one side would not be connected anymore). 

     We can partition the sequence $\S$ into two sequences $\S_A$ and $\S_B$ where $\S_A$ is restricted to flips between points in $A$ and $\S_B$ flips between points in $B$.
     Since $e$ belongs to the tree at any step of $\S$, we can first perform all the non-crossing flips in $\S_A$ and then all the non-crossing flips in $\S_B$ while keeping connectivity.
    Note that both $T_k[A],T_k'[A]$ and $T_k[B],T_k'[B]$ induce a copy of $T_i,T_i'$ and $T_{k-i},T_{k-i}'$. 
    By induction, $|\S_A| \geq 2i$ and $|\S_B| \geq 2(k-i)$. Thus, $\S$ has length at least $2k$.
\end{proof}

\begin{lemma}\label{lem:LB_ncflips_nochordschange}
    If a non-crossing flip sequence $\S$ from $T_k$ to $T_k'$ modifies every common chord, then $\S$ has length at least $2k$.
\end{lemma}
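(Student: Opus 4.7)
The plan is to combine a global counting argument on additions performed by $\mathcal{S}$ with a local geometric lemma about the very first flip. Let $n = |\mathcal{S}|$; since every flip both removes one edge and adds one edge, the sequence performs $n$ additions in total. I would first observe two forced groups of additions: each of the $k$ edges of $T_k' \setminus T_k$ must be created at some step (contributing $k$ forced additions), and, under the hypothesis that every common chord is modified, each of the $k-1$ common chords $v_3^i v_4^i$ (for $1 \le i \le k-1$) is removed during $\mathcal{S}$ and so must be re-added before the end (contributing $k-1$ further forced additions). This already gives $n \ge 2k-1$, so there is exactly one extra addition to be explained.

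To produce the missing flip, I would focus on the very first step of $\mathcal{S}$, say $e \rightsquigarrow f$, performed on the tree $T_k$. The added edge $f$ cannot lie in $T_k$ (otherwise it could not be added), and in particular cannot be a common edge of $T_k$ and $T_k'$. The key geometric claim is the following: for every $i \in \{1,\dots,k\}$, using the conventions $v_4^0 = v_2^1$ and $v_3^0 = v_1^1$, the edge $v_4^{i-1} v_3^i \in T_k' \setminus T_k$ and the edge $v_3^{i-1} v_4^i \in T_k$ are the two crossing diagonals of the convex quadrilateral whose vertices are $v_3^{i-1}, v_3^i, v_4^i, v_4^{i-1}$, appearing in exactly that cyclic order on the convex hull of $C$. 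Granting this, adding any edge of $T_k' \setminus T_k$ to $T_k$ introduces a crossing, so the non-crossing flip condition forces $f \notin T_k' \setminus T_k$. Combined with $f \notin T_k$, we conclude $f \notin T_k'$, so $f$ is an edge outside $T_k'$ that must later be removed, giving one more forced addition.

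Putting the two ingredients together yields
\begin{equation*}
n \;\ge\; |T_k' \setminus T_k| + (k-1) + 1 \;=\; k + (k-1) + 1 \;=\; 2k,
\end{equation*}
which is the desired bound. The main obstacle is the geometric claim on the cyclic order of $v_3^{i-1}, v_3^i, v_4^i, v_4^{i-1}$ on the convex hull: this must be checked with some care because the identification of the points $v_3^i, v_4^i$ of copy $C_i$ with $v_1^{i+1}, v_2^{i+1}$ of copy $C_{i+1}$ nests the gadgets inside one another on the convex hull, and it is this nesting that places those four points as consecutive vertices of a convex quadrilateral. Once that geometric fact is verified, the rest of the argument is essentially bookkeeping on add/remove counts.
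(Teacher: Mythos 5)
Your proof is correct and follows essentially the same route as the paper: count the $k$ forced creations of edges of $T_k'\setminus T_k$ plus the $k-1$ forced re-additions of the modified common chords, and then observe that the very first non-crossing flip cannot create any edge of $T_k'$ (each chord of $T_k'\setminus T_k$ crosses its partner chord of $T_k$ inside the same gadget), so it creates an edge outside $T_k\cup T_k'$ that accounts for one extra flip. The geometric claim you flag is exactly the crossing-diagonals property of the construction and holds as you state it, so there is no gap.
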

\begin{proof}
    In the non-crossing flip sequence $\S$, we have to remove and add every common chord of $T_k$ and $T'_k$ (there are $k-1$ of them), to remove chords of $T_k \setminus T_k'$ and create chords of $T_k' \setminus T_k$. Thus, $\S$ add or remove at least $2k-2$ edges (and then contains at least $2k - 1$ non-crossing flips). 
    Since, at the first step, the first non-crossing flip cannot directly create a chord of $T'_k$, there is also an edge $e' \notin T_k\cup T'_k$ that is not described above that has to appear and be removed in $\S$. 
    Therefore $\S$ contains at least $2k$ non-crossing flips.
\end{proof}

\subsection{Flips}\label{sec:lowerflip}

The goal of this part is to prove Theorem~\ref{thm:lb_flips}.

\lbflips*

The proof also holds by induction but (i) the construction has to be different and (ii) in order to prove that the result holds, one has to analyze it with more involved arguments.

\begin{figure}[hbtp]
        \begin{center}
        \tikzstyle{vertex}=[circle,draw, minimum size=7pt, scale=0.5, inner sep=1pt, fill = black]
        \tikzstyle{fleche}=[->,>=latex] 
        \tikzstyle{labell}=[text opacity=1, scale =0.9]
        \begin{tikzpicture}[scale=0.9]

            \node (a1) at (-110:2) [vertex] {};
            \node (a2) at (-160:2) [vertex] {};
            \node (a3) at (-70:2) [vertex] {};
            \node (a4) at (160:2) [vertex] {};
            \node (a5) at (-20:2) [vertex] {};
            \node (a6) at (110:2) [vertex] {};
            \node (a7) at (20:2) [vertex] {};
            \node (a8) at (70:2) [vertex] {};

            \node (b1) at (-110:2.4) [labell] {$v_5$};
            \node (b2) at (-160:2.4) [labell] {$v_2$};
            \node (b3) at (-70:2.4) [labell] {$v_6$};
            \node (b4) at (160:2.4) [labell] {$v_1$};
            \node (b5) at (-20:2.4) [labell] {$v_8$};
            \node (b6) at (110:2.4) [labell] {$v_3$};
            \node (b7) at (20:2.4) [labell] {$v_7$};
            \node (b8) at (70:2.4) [labell] {$v_4$};
            
            \draw (a1) to (a3);
            \draw (a2) to (a4);
            \draw (a6) to (a8);
            \draw (a5) to (a7);
            \draw (a5) to (a6);
            \draw (a5) to (a4);
            \draw (a3) to (a4);

            \draw (a1) to [bend left = 20](a3) [red];
            \draw (a2) to [bend right = 20](a4) [red];
            \draw (a7) to [bend right = 20](a5) [red];
            \draw (a8) to [bend left = 20](a6) [red];

            \draw (a1) to (a7) [red];
            \draw (a2) to (a8) [red];
            \draw (a8) to (a1) [red];

        \tikzset{xshift=7.5cm}

        \foreach \s in {1,...,5}{(\node (i\s) at ({3*cos(\s*(360/14) + (360/28))}, 2)[vertex]{};};
        \node (i6) at ({3*cos(6*(360/14) + (360/28))}, 1)[vertex]{};
        \node (i7) at ({3*cos(7*(360/14) + (360/28))}, -1)[vertex]{};
        \node (i13) at ({3*cos(13*(360/14) + (360/28))}, -1)[vertex]{};
        \node (i14) at ({3*cos(14*(360/14) + (360/28))}, 1)[vertex]{};
        \foreach \s in {8,...,12}{(\node (i\s) at ({3*cos(\s*(360/14) + (360/28))}, -2)[vertex]{};};

        \node (h1) at ({3*cos(1*(360/14) + (360/28))}, 2.4)[labell]{$v_6^{i+1}$};
        \node (h2) at ({3*cos(2*(360/14) + (360/28))}, 2.4)[labell]{$v_5^{i+1}$};
        \node (h3) at ({3*cos(3*(360/14) + (360/28))}, 2.4)[labell]{$v_7^i = v_2^{i+1}$};
        \node (h4) at ({3*cos(4*(360/14) + (360/28))}, 2.4)[labell]{$v_4^i$};
        \node (h5) at ({3*cos(5*(360/14) + (360/28))}, 2.4)[labell]{$v_3^i$};
        \node (h8) at ({3*cos(8*(360/14) + (360/28))}, -2.4)[labell]{$v_5^i$};
        \node (h9) at ({3*cos(9*(360/14) + (360/28))}, -2.4)[labell]{$v_6^i$};
        \node (h10) at ({3*cos(10*(360/14) + (360/28))}, -2.4)[labell]{$v_8^i = v_1^{i+1}$};
        \node (h11) at ({3*cos(11*(360/14) + (360/28))}, -2.4)[labell]{$v_3^{i+1}$};
        \node (h12) at ({3*cos(12*(360/14) + (360/28))}, -2.4)[labell]{$v_4^{i+1}$};
        \node (h6) at ({3*cos(6*(360/14) + (360/28)) - 0.2}, 1.4)[labell]{$v_1^i$};
        \node (h7) at ({3*cos(7*(360/14) + (360/28)) - 0.2}, -1.4)[labell]{$v_2^i$};
        \node (h13) at ({3*cos(13*(360/14) + (360/28)) + 0.2}, -1.4)[labell]{$v_7^{i+1}$};
        \node (h14) at ({3*cos(14*(360/14) + (360/28)) + 0.2}, 1.4)[labell]{$v_8^{i+1}$};

        \draw[red] (i3) to[bend right=5] (i10);
        \draw[red] (i3) to (i12);
        \draw[red] (i3) to (i8);
        \draw[red] (i2) to (i12);
        \draw[red] (i2) to (i13);
        \draw[red] (i8) to (i4);
        \draw[red] (i4) to (i7);
        \draw (i6) to (i7) ;
        \draw (i13) to (i14) ;
        \draw (i1) to (i2) ;
        \draw (i4) to (i5) ;
        \draw (i8) to (i9) ;
        \draw (i11) to (i12) ;    
        \draw (i3) to[bend left=5] (i10);
        \draw (i10) to (i1);
        \draw (i10) to (i5);
        \draw (i14) to (i10);
        \draw (i11) to (i14);
        \draw (i6) to (i10);
        \draw (i9) to (i6);
        \draw[red] (i6) to[bend left = 20] (i7) ;
        \draw[red] (i13) to[bend left = 20] (i14) ;
        \draw[red] (i1) to[bend left = 20] (i2) ;
        \draw[red] (i4) to[bend left = 20] (i5) ;
        \draw[red] (i8) to[bend left = 20] (i9) ;
        \draw[red] (i11) to[bend left = 20] (i12) ;

        \draw[red, dashed] (i7) to ({3*cos(7*(360/14) + (360/28)) - 0.3}, -0.4);
        \draw[dashed] (i6) to ({3*cos(6*(360/14) + (360/28)) - 0.3}, 0.4);
        \draw[red, dashed] (i13) to ({3*cos(13*(360/14) + (360/28)) + 0.3}, -0.4);
        \draw[dashed] (i14) to ({3*cos(14*(360/14) + (360/28)) + 0.3}, 0.4);
            
        \begin{scope}
            \clip (0,0) ellipse (3cm and 2cm);
        \end{scope}

        \end{tikzpicture}
        \end{center}
        \caption{On the left, the tree $T_1$ in black and the tree $T_1'$ in red. On the right, the subgraph induced by $C_i$ and $C_{i+1}$ in a pair $T_k$ (in black) and $T_k'$ (in red). }
        \label{fig:lower}
    \end{figure}
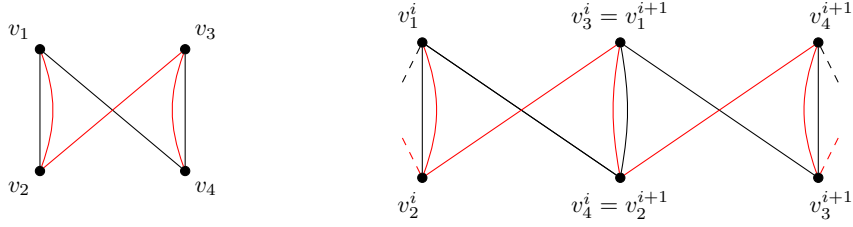

\paragraph{Construction of the trees.}

Let us denote by $T_1$ and $T'_1$ the pair of non-crossing spanning trees on a convex set $C$ of size $8$ represented in Figure~\ref{fig:lower}. Note that we have $\delta(T_1,T_1')=3$.

For every $k$, we denote by $T_k, T_k'$ the pair of non-crossing spanning trees obtained by taking $k$ disjoint copies of $T_1,T_1'$ and identifying the points $v_7$ and $v_8$ of the $i$-th copy  respectively with the points $v_2$ and $v_1$ in the $(i+1)$-th copy for $i<k$. (Note that the identification is performed upside down, which will be of importance in the proof, see Figure~\ref{fig:lower} for an illustration with $i=2$). 
We again define $C_i$ as the set of points of the $i$-th copy, and $v_j^i$ the point corresponding to $v_j$ in $C_i$. 
Observe that $\delta(T_k, T_k') = 3k$ for all $k \geq 1$.

\medskip

\paragraph{Properties of a minimal flip sequence.}

We first claim that for every $k\geq 1$, there is a flip sequence from $T_k$ to $T_k'$ of length $\frac{5}{3}\delta(T_k, T_k') = 5k$.
Indeed, the following flip sequence gives a transformation from $T_1$ to $T_1'$: we perform in order the flips $v_6v_1 \rightsquigarrow v_2v_5$, $v_3v_8 \rightsquigarrow v_4v_7$, $v_1v_8 \rightsquigarrow v_4v_5$, $v_2v_5 \rightsquigarrow v_2v_4$, and finally $v_4v_7 \rightsquigarrow v_5v_7$.
We can adapt this flip sequence for every $k>1$ between $T_k$ and $T'_k$ into a sequence of length $5k$ by applying the former in each copy of $T_1$ and $T_1'$ independently. 
The rest of the proof of Theorem~\ref{thm:lb_flips} consists in proving by induction on $k$ that the above mentioned sequences are minimal.
First, we prove the base case $k=1$.

 \begin{lemma}\label{cl:T1flip}
    A minimal flip sequence between $T_1$ and $T'_1$ has length at least $5$.
 \end{lemma}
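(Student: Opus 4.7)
The plan is to introduce a potential function on non-crossing spanning trees of our $8$ points whose drop from $T_1$ to $T_1'$ equals $6$, to bound its per-flip change, and to show that the extremal decrease $-2$ is so constrained that it can occur at most once in a flip sequence of length $4$, giving a contradiction.

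Let $C_1 = \{v_1v_8,\, v_3v_8,\, v_1v_6\}$ and $C_1' = \{v_5v_7,\, v_2v_4,\, v_4v_5\}$ denote the chord sets of $T_1$ and $T_1'$. Using the cyclic order $v_1, v_3, v_4, v_7, v_8, v_6, v_5, v_2$, one checks by direct inspection that every chord of $C_1$ crosses every chord of $C_1'$, for nine crossings in total. This yields the key \emph{crossing lemma}: whenever a flip $R \rightsquigarrow A$ produces a non-crossing tree with $A \in C_1'$, the edge $A$ cannot cross any edge of $G \setminus \{R\}$, hence no chord of $C_1$ can belong to $G \setminus \{R\}$; in particular, $G$ contains at most one chord of $C_1$, and exactly one (namely $R$) if $R \in C_1$.

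Define $P(G) := |G \cap C_1| + |C_1' \setminus G|$, so that $P(T_1) = 6$ and $P(T_1') = 0$, and thus any flip sequence must satisfy $\sum_i \Delta P_i = -6$. A short case check on $R$ and $A$ shows that $\Delta P_i \in \{-2,-1,0,1,2\}$, with $\Delta P_i = -2$ if and only if $R \in C_1$ and $A \in C_1'$. By the crossing lemma such a flip forces $m_{i-1} = 1$, where $m_j := |G_j \cap C_1|$; moreover it satisfies $\Delta m_i = -1$, since $C_1 \cap C_1' = \emptyset$ implies that the added edge $A$ contributes $0$ to $m$.

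Now suppose for contradiction that a flip sequence of length $4$ exists. Then $\sum_i \Delta m_i = m_4 - m_0 = -3$ with each $\Delta m_i \in \{-1,0,+1\}$ over only four steps, which forces the profile to contain no $+1$ and exactly one $0$. Enumerating the four possible positions of the $0$-step and combining with the two conditions ``$\Delta m_i = -1$ and $m_{i-1} = 1$'' required for $\Delta P_i = -2$, one finds in each profile a unique index $i \in \{2,3,4\}$ realizing both; the first flip is automatically excluded because $m_0 = 3 \neq 1$. Hence at most one flip in the whole sequence achieves $\Delta P_i = -2$, while the other three flips have $\Delta P_i \geq -1$, yielding
\[
\sum_{i=1}^{4} \Delta P_i \;\geq\; -2 + 3\cdot(-1) \;=\; -5 \;>\; -6,
\]
a contradiction. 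Therefore any flip sequence from $T_1$ to $T_1'$ has length at least $5$. The main obstacle in the plan is the crossing verification in the second paragraph and the systematic enumeration of $\Delta m$-profiles at the end; both are finite checks but must be carried out carefully.
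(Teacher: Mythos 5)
Your core computation is sound as far as it goes: with the cyclic order $v_1,v_3,v_4,v_7,v_8,v_6,v_5,v_2$ all nine pairs of chords of $C_1$ and $C_1'$ do cross, the identity ``$\Delta P_i=-2$ iff $R\in C_1$ and $A\in C_1'$'' is correct, and your $\Delta m$-profile enumeration does show that a length-$4$ sequence can contain at most one flip with $\Delta P_i=-2$, so $\sum_i\Delta P_i\geq -5>-6$ and length $4$ is impossible. The genuine problem is that this is all you refute: the statement also requires excluding length $3$, which is not ruled out by $\delta(T_1,T_1')=3$, and a length-$3$ sequence cannot be converted into a length-$4$ one (each flip changes the symmetric difference by at most one and a detour costs two flips), so ``no length-$4$ sequence'' does not imply ``length at least $5$''. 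The omission is easily repaired with your own machinery: a length-$3$ sequence would need $\Delta P_i=-2$ at every step, in particular at the first one, which is impossible since $m_0=3\neq 1$ (equivalently, by your crossing lemma the first flip cannot create a chord of $C_1'$, as two chords of $C_1$ crossing it would survive). But as written the proof has a hole.

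It is also worth comparing with the paper's argument, which uses exactly the same crossing observation but much more directly: since every chord of $T_1$ crosses every chord of $T_1'$, a flip creating a chord of $T_1'$ requires all three chords of $T_1$ to be absent except possibly the one being removed, so neither the first nor the second flip can create a chord of $T_1'$; after two flips the three chords of $T_1'$ are therefore still missing, and at least three further flips are needed, giving $2+3=5$. This handles all lengths simultaneously and makes the potential function $P$ and the profile enumeration unnecessary; once you add the one-line treatment of length $3$, your version is correct, but the direct argument is both shorter and less error-prone.
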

 \begin{proof}
    Since every chord of $T_1$ crosses all chords of $T_1'$, the first two flips cannot create a chord of $T_1'$.
    Thus, after the first two steps, the symmetric difference still contains at least three edges of $T_1'$.
    Hence, a flip sequence between $T_1$ and $T'_1$ has length at least $5 = \frac{5}{3}\delta(T_1, T_1')$.
 \end{proof}

    Let $k>1$ be such that for $\ell < k$, a minimal flip sequence from $T_\ell$ to $T_\ell'$ has length at least $5\ell$. Following the exact same arguments as in Lemma~\ref{lem:LB_ncflips_chordschange}, we can derive the following.
    
\begin{lemma}\label{lem:LB_flips_chordschange}
    If a flip sequence $\S$ from $T_k$ to $T_k'$ does not modify at least one common chord $e$, then $\S$ has length at least $5k$.
\end{lemma}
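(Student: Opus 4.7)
The plan is to mirror the proof of its non-crossing analogue Lemma~\ref{lem:LB_ncflips_chordschange} almost verbatim. The essential observation is that even though a general flip may create an edge that crosses the one it removes, the intermediate trees all lie in $\mathcal{S}(C)$, hence are non-crossing; so no added edge may cross the preserved chord $e$.

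First, I would describe $e$ precisely. By the construction of $T_k,T_k'$, the only common chords are the junction edges $v_7^iv_8^i = v_1^{i+1}v_2^{i+1}$ for some $i\in\{1,\dots,k-1\}$, each of which has both endpoints in $C_i\cap C_{i+1}$. Let $u,v$ denote its endpoints, and let $A,B$ be the two sides of $e$ with $C_1\cup\dots\cup C_i\subseteq A$ and $C_{i+1}\cup\dots\cup C_k\subseteq B$. Since $e$ is in every tree of the sequence and each such tree is non-crossing, no tree of $\S$ contains any edge joining $A\setminus\{u,v\}$ with $B\setminus\{u,v\}$; equivalently, every edge of every intermediate tree lies entirely in $A$ or entirely in $B$.

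Next I claim each flip $e_1\rightsquigarrow e_2$ in $\S$ happens entirely on one side. If $e_1\subseteq A$, removing it from the current tree splits it into two components; the one not containing $e$, call it $Y$, is contained in $A\setminus\{u,v\}$. The added edge $e_2$ must join $Y$ with the rest. If its other endpoint lay in $B\setminus\{u,v\}$, then $e_2$ would cross $e$ in convex position, contradicting non-crossingness of the resulting tree. Hence the other endpoint of $e_2$ is in $A$ (possibly equal to $u$ or $v$), so $e_2\subseteq A$; symmetrically for $B$. Therefore $\S$ partitions as $\S=\S_A\sqcup\S_B$, and because flips in $A$ commute with flips in $B$ (they act on disjoint edge sets and $e$ is always present to guarantee global connectivity), one may reorder $\S$ so that all of $\S_A$ is performed first and $\S_B$ afterwards, each maintaining a non-crossing spanning tree.

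Finally, by the construction, $T_k[A]$ and $T_k'[A]$ together with the outer convex hull of $A$ are, up to isomorphism of labelled convex point sets, precisely $T_i$ and $T_i'$ (where $e=v_7^iv_8^i$ is now a border edge of $A$'s convex hull); the same holds for $B$ with the pair $(T_{k-i},T_{k-i}')$. The induction hypothesis therefore gives $|\S_A|\geq 5i$ and $|\S_B|\geq 5(k-i)$, yielding $|\S|\geq 5k$. The only subtlety is checking the side-preservation of flips; once that is in place, the remainder is bookkeeping. I do not anticipate any genuine obstacle, since the non-crossingness of every intermediate tree in $\mathcal{S}(C)$ is exactly what makes the general-flip argument identical to the non-crossing-flip one.
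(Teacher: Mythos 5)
Your proposal is correct and follows essentially the same route as the paper, which proves this lemma by repeating the argument of Lemma~\ref{lem:LB_ncflips_chordschange}: the preserved common chord $e$ forces every flip to act within one side, the sequence splits into $\S_A$ and $\S_B$ acting on copies of $(T_i,T_i')$ and $(T_{k-i},T_{k-i}')$, and induction gives $5i+5(k-i)=5k$. Your only (harmless) deviation is justifying side-preservation via non-crossingness of the added edge rather than the paper's connectivity argument; both are valid.
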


Therefore, it only remains to show that flip sequences that modify every common chords of $T_k$ and $T'_k$ have length at least $5k$. Let $\S$ be such a sequence. We use a more involved version of the counting argument presented in Lemma~\ref{lem:LB_ncflips_nochordschange}. More precisely, we distribute one unit of weight to a subset of $C_1, \dots C_k$ for every flip of $\S$. We will essentially\footnote{What we will prove is actually slightly weaker since we will only ensure that the total weight is at least $5k-3/4$; But since the weight has to be an integer, it will be enough to conclude.} show that the total weight given by $\S$ to every set $C_i$ is at least $5$, which ensures that $\S$ has length at least $5k$. In other words, we will prove that the following holds:
    
    \begin{lemma}\label{lem:flipcounting}
        Let $\S$ be a minimal flip sequence between $T_k$ and $T_k'$ such that all the common chords are modified. Then, $\S$ has length at least $5k$.
    \end{lemma}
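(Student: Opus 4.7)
The plan is to set up a charging scheme that distributes the flips of $\S$ among the copies $C_1,\dots,C_k$ so that every copy ends up charged for at least five flips. First, I would observe that two adjacent copies $C_i$ and $C_{i+1}$ share only the two identified vertices $v_7^i=v_2^{i+1}$ and $v_8^i=v_1^{i+1}$, so an edge $uv$ lies inside a single copy unless both its endpoints are these shared vertices -- in which case $uv$ is the common chord $c_i=v_7^iv_8^i$. I would then define the weight of a flip $F=(e\rightsquigarrow f)\in\S$ with respect to a copy $C_i$: weight $1$ to $C_i$ if both $e$ and $f$ have all endpoints in $C_i$ and neither is a common chord; weight $\tfrac12$ to each of $C_i$ and $C_{i+1}$ when one of $e,f$ equals the common chord $c_i$; and a residual rule for flips that occur during a window in which a common chord is temporarily absent and a trans-copy edge is present. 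The total weight of each flip is $1$, so that $|\S|=\sum_{F,i} w(F,C_i)$.

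The core of the argument is then to show that every copy $C_i$ receives total weight at least $5$. Since $T_k[C_i]\cong T_1$ and $T_k'[C_i]\cong T_1'$, the three internal chords of $T_k$ inside $C_i$ pairwise cross the three internal chords of $T_k'$ inside $C_i$, so the exact same argument as in Lemma~\ref{cl:T1flip} shows that the first two flips ``visible in'' $C_i$ cannot yet create an internal chord of $T_k'[C_i]$. Combined with the three forced removals and three forced additions of internal chords of $C_i$, this produces $\geq 5$ units of charge at $C_i$. The hypothesis that every common chord is modified is crucial here: it guarantees that each boundary of $C_i$ contributes at least one removal and one re-addition of the corresponding common chord, each bringing the accountable $\tfrac12+\tfrac12$ charge that makes the bookkeeping exact.

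Summing over $i$ then yields $|\S|\geq 5k$, up to a small deficit coming from the boundary copies $C_1$ and $C_k$, which each only border one common chord; if a direct calculation produces only a bound of the form $|\S|\geq 5k-\tfrac34$, as the text hints, then integrality of $|\S|$ upgrades it to $|\S|\geq 5k$. The main obstacle I expect is the bookkeeping in intermediate trees where a common chord has been temporarily deleted: during such a window, an edge can span several copies, and one has to argue that every such trans-copy edge must be both created and later destroyed at a cost that matches the charge it would otherwise dodge. I would handle this by a local argument showing that every trans-copy edge forces extra ``internal'' work in the copies it touches, so the $\geq 5$ per-copy bound still survives the residual rule, concluding the proof.
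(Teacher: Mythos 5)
Your overall strategy (a weight distribution over the copies $C_1,\dots,C_k$, a per-copy bound of roughly $5$, and integrality at the end) is the same in spirit as the paper's, but as written the core counting step has a genuine gap. You claim that each $C_i$ receives charge at least $5$ by combining the three forced removals and three forced additions of the internal chords of $C_i$ with the observation that the first two flips ``visible in'' $C_i$ cannot create a chord of $T_k'[C_i]$. Under any scheme in which each flip distributes a total weight of $1$, this does not follow: a flip that removes an internal chord of $C_i$ may add an edge that is not inside $C_i$ (a trans-copy edge, a common chord being re-inserted, or even an edge lying in another copy, which is possible in intermediate trees once common chords have been temporarily deleted), so such a flip gives $C_i$ only a fraction of its weight; worse, one flip can simultaneously serve as a forced removal for $C_i$ and a forced addition for $C_j$, so the sets of ``$\geq 5$ flips seen by each copy'' are not disjoint and their union can have size well below $5k$. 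This leakage is exactly what the paper's phase-and-endpoint scheme ($\tfrac14$ per endpoint per phase) is designed to control, and the forced internal chords then only yield weight $3$ per copy (Claim~\ref{cl:flip3chords}), plus roughly $1$ from the common chords (Claim~\ref{cl:flip1common}). The remaining unit per copy is the real content of the proof: Claim~\ref{cl:flip2connect} shows, by a connectivity argument at the last moment a chord of $T_k\setminus T_k'$ inside $C_i$ is removed, that two distinct intermediate edges (in neither $T_k$ nor $T_k'$) must be both created and destroyed, with endpoints avoiding $v_2^i,v_7^i$. You gesture at this (``every trans-copy edge forces extra internal work'') but do not prove it, and it is not a routine verification.

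A second, smaller gap concerns the endgame. The natural deficit is not $\tfrac34$ but $1$: the two boundary copies $C_1$ and $C_k$ each lose $\tfrac12$ from the common-chord count, giving only $5k-1$, which integrality cannot repair. The paper needs a further argument (Claim~\ref{cl:fliplast}, analyzing the first removal of a common chord and exhibiting an uncounted quarter of weight at $v_7^i$) to reach $5k-\tfrac34$ before invoking integrality; your proposal assumes such a bound will appear without supplying the argument. So the two essential ingredients of the proof --- the forced intermediate edges of Claim~\ref{cl:flip2connect} and the extra weight of Claim~\ref{cl:fliplast} --- are missing from your outline, and without them the charging scheme does not reach $5k$.
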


\paragraph{Assignment of weights.} 
    Let $\S$ be a minimal flip sequence from $T_k$ to $T_k'$ modifying all common chords.
    Recall that during a step, one edge is added and one is removed.
    We call the addition of an edge or the removal of an edge a \emph{phase} of the flip sequence (each step then consists of two phases).
    
    We distribute $\frac{1}{2}$ for each phase as follows.
    Consider a phase of a step in $\S$ in which an edge $e = uv$ is added or removed.
    Let $i$ and $j$ be the minimal indices such that $u \in C_i$ and $v \in C_j$, with $i \le j$ by symmetry.
    The sequence $\S$ gives a total weight of $\frac{1}{2}$ for this phase according to the following rules:
    \begin{itemize}
        \item if $u$ also belongs to $C_{i+1}$ and $v \in \bigcup_{\ell>i}C_{\ell}$, $\S$ gives weight $\frac{1}{4}$ from $u$ to $C_{i+1}$, otherwise, $\S$ gives weight $\frac{1}{4}$ from $u$ to $C_i$.
        \item $\S$ gives weight $\frac{1}{4}$ from $v$ to $C_j$. 
    \end{itemize}
    
    Note that, if $uv$ is a common chord of $T_k$ and $T'_k$, then $i = j$ and both $C_i$ and $C_{i+1}$ receive weight $\frac{1}{4}$.
    
    The rest of the proof consists in  counting in different claims how much weight is given during all the phases of $\S$.

\paragraph{Weights assignment for disjoint and common chords.}
    
    \begin{claim}\label{cl:flip3chords}
        For every $i$, $\S$ gives weight $3$ to $C_i$ because of the addition of the chords of $T_k' \setminus T_k$ and the deletion of the chords of $T_k \setminus T_k'$ with both endpoints in $C_i$.
    \end{claim}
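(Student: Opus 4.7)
The plan is to enumerate the six chords of $\Delta(T_k,T_k')$ with both endpoints in $C_i$, observe that each must be added (resp.\ deleted) at least once in $\S$, and check that each of those phases contributes exactly $\tfrac12$ of weight to $C_i$. Summing $6\cdot\tfrac12=3$ yields the claim.

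By the construction of $T_k,T_k'$ as glued copies of $T_1,T_1'$, these six chords are the in-copy images of the chords of $\Delta(T_1,T_1')$: the three chords $v^i_8v^i_3$, $v^i_8v^i_1$, $v^i_6v^i_1$ of $T_k\setminus T_k'$ and the three chords $v^i_5v^i_7$, $v^i_2v^i_4$, $v^i_4v^i_5$ of $T_k'\setminus T_k$. Since $\S$ transforms $T_k$ into $T_k'$, each of these chords must undergo at least one addition or deletion phase during $\S$, and it suffices to show that such a phase contributes $\tfrac12$ to $C_i$.

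For the weight computation, I track the minimal index of each endpoint: $v^i_3,v^i_4,v^i_5,v^i_6$ lie only in $C_i$, so have minimal index $i$; for $i>1$, $v^i_1=v^{i-1}_8$ and $v^i_2=v^{i-1}_7$ have minimal index $i-1$; and for $i<k$, $v^i_7=v^{i+1}_2$ and $v^i_8=v^{i+1}_1$ have minimal index $i$ but also belong to $C_{i+1}$. Given such a chord $uv$, let $u$ be the endpoint of smaller minimal index (ties broken arbitrarily). The $\tfrac14$ from $v$ is routed to $C_i$ in every case, since $v$'s minimal index is $i$ in each of the six chords. For the $\tfrac14$ from $u$, I split into three subcases: (i) $u$ has minimal index $i-1$, in which case $u\in C_i$ and the other endpoint lies in $C_i\subseteq\bigcup_{\ell>i-1}C_\ell$, so the rule routes the weight to $C_{(i-1)+1}=C_i$; (ii) $u$ has minimal index $i$ and $u\notin C_{i+1}$, so the rule defaults the weight to $C_i$; (iii) $u\in\{v^i_7,v^i_8\}$, the only possibility in which $u$'s minimal index is $i$ while $u\in C_{i+1}$, in which case direct inspection of the three chords actually involving such a $u$, namely $v^i_8v^i_3$, $v^i_8v^i_1$, and $v^i_5v^i_7$, shows that the other endpoint lies only in $C_i$, so the rule again defaults to $C_i$.

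The argument is thus a structured case analysis over six chords times three subcases; the boundary cases $i=1$ or $i=k$ simply eliminate some shared vertices and make the checks easier. No deeper obstacle arises beyond carefully bookkeeping which endpoint has the smaller minimal index and verifying the asymmetric weight rule.
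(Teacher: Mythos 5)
Your proposal is correct and follows essentially the same approach as the paper: each deletion of a chord of $T_k\setminus T_k'$ and each addition of a chord of $T_k'\setminus T_k$ with both endpoints in $C_i$ contributes $\frac12$ to $C_i$, and summing over the six such chords, which must all be flipped, gives weight $3$; you merely make explicit the routine check (which the paper leaves implicit) that the weight rule routes both quarter-weights to $C_i$. One harmless imprecision: in your subcase (iii) the other endpoint of $v_8^iv_1^i$ is $v_1^i$, which for $i>1$ also lies in $C_{i-1}$ (so not "only in $C_i$"), but that chord is then covered by your subcase (i), and in any event what matters is only that this endpoint does not lie in $\bigcup_{\ell>i}C_\ell$, which holds.
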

        \begin{proof}
        Each set $C_i$ receives weight $\frac{1}{2}$ when a chord of $T_k \setminus T_k'$ with both endpoints in $C_i$ is deleted, and each set $C_i$ receives the same amount when a chord of $T'_k \setminus T_k$ with both endpoints in $C_i$ is added.
        Since the chords of $T_k' \setminus T_k$ have to be added and the chords of $T_k \setminus T_k'$ have to be deleted during $S$ and there are six such chords, 
        the conclusion follows.
        \end{proof}

    \begin{claim}\label{cl:flip1common}
    For every $1<i<k$, $\S$ gives weight $1$ to $C_i$ because of the addition and deletion of the common chords of $T_k$ and $T_k'$. The sets $C_1$ and $C_k$ receive weight $\frac 12$.
    \end{claim}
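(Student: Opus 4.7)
The plan is to first identify the common chords of $T_k$ and $T_k'$ explicitly, and then apply the weight distribution rule to each phase involving such a chord. Within any single copy, the three chords of $T_1$ and the three chords of $T_1'$ are pairwise disjoint (as can be read off Figure~\ref{fig:lower}), so the only edges common to $T_1$ and $T_1'$ inside one copy are its four border edges. After gluing, all of these remain on the boundary of the combined convex hull except for the pair $v_7^iv_8^i$ and $v_1^{i+1}v_2^{i+1}$, which the identification turns into a single edge $e_i$ lying strictly inside the combined convex hull, and hence a chord. Thus $e_1, \ldots, e_{k-1}$ are precisely the $k-1$ common chords of $T_k$ and $T_k'$.

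Next, since $\S$ modifies every common chord by assumption, each $e_i$ is both removed and added at least once during $\S$, contributing at least two phases. For any such phase, the endpoints $u = v_7^i = v_2^{i+1}$ and $v = v_8^i = v_1^{i+1}$ both have minimal index $i$ and both also belong to $C_{i+1}$. Plugging into the weight distribution rule: $u \in C_{i+1}$ and $v \in C_{i+1} \subseteq \bigcup_{\ell>i} C_\ell$, so $u$ contributes $\tfrac14$ to $C_{i+1}$, while $v$ contributes $\tfrac14$ to $C_j = C_i$. Each phase involving $e_i$ therefore deposits $\tfrac14$ in each of $C_i$ and $C_{i+1}$, so across the two phases $e_i$ gives at least $\tfrac12$ to each.

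The claim then follows by summing over the common chords. For $1 < i < k$, the set $C_i$ inherits $\tfrac12$ from $e_{i-1}$ and $\tfrac12$ from $e_i$, totalling $1$. At the ends, $C_1$ receives only $\tfrac12$ from $e_1$ and $C_k$ only $\tfrac12$ from $e_{k-1}$. The only subtle point I anticipate is verifying that the glued edge $v_7^iv_8^i = v_1^{i+1}v_2^{i+1}$ really is a chord rather than a border edge of the combined convex hull, which is immediate from Figure~\ref{fig:lower}; the rest is a direct application of the weight rule to an edge whose two endpoints both sit at the interface between two consecutive copies.
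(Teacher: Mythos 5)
Your proposal is correct and takes essentially the same approach as the paper: you identify the $k-1$ common chords as the interface edges $v_7^iv_8^i=v_1^{i+1}v_2^{i+1}$, note that each must be both deleted and re-added since $\S$ modifies every common chord, and observe (as the paper does right after defining the weight rule) that each such phase deposits $\frac{1}{4}$ in both $C_i$ and $C_{i+1}$, so summing gives weight $1$ to each interior $C_i$ and $\frac{1}{2}$ to $C_1$ and $C_k$.
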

        \begin{proof}
            Recall that we suppose that the minimal flip sequence $\S$ changes all the common chords.
            Each set $C_i$ receives weight $\frac{1}{4}$ when a chord in $T_k \cap T'_k$ with both endpoints in $C_i$ is deleted, and $C_i$ receives the same amount when such a chord is added back.
            Since $\S$ modifies all common chords and there are two such chords for $1 < i <k$ and one otherwise, the conclusion follows.
        \end{proof}

    Claim~\ref{cl:flip3chords} and Claim~\ref{cl:flip1common} ensure that $\S$ already gives a total weight of at least $4k - 1$. We now aim at finding $k+1$ additional units of weight. This weight can only come from phases of $\S$ involving intermediate edges, that are edges not appearing in $T_k$ nor $T'_k$.

\paragraph{Weight assignment for intermediate edges.}
The core of the proof consists in proving the following claim:

\begin{claim}\label{cl:flip2connect}
    For each $i$, there exist two distinct edges $e_1$, $e_2$  that are not in $T_k \cup T_k'$, such that $\S$ gives weight $1$ to $C_i$ because of the addition and the deletion of $e_1$ and $e_2$.
    Moreover, $C_i$ receives this weight from two endpoints $u_1 \in e_1$ and $u_2 \in e_2$ which are not from $v_2^i$ and $v_7^i$ for any $i \le k$.
\end{claim}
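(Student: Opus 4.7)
The plan is to exploit the pairwise crossing between chords of $T_k \setminus T_k'$ and chords of $T_k' \setminus T_k$ within $C_i$ (the same mechanism as in the base case, Lemma~\ref{cl:T1flip}) to force a ``transition moment'' in $\S$ at which the current tree contains no chord of $T_k \cup T_k'$ with both endpoints in $C_i$, and then to use connectivity of the internal vertices of $C_i$ at that moment to extract the two required intermediate edges. More precisely, I would let $t^*$ be the first flip of $\S$ that adds a chord of $T_k' \setminus T_k$ with both endpoints in $C_i$, and let $T^*$ denote the tree just before $t^*$. Since any chord of $T_k \setminus T_k'$ in $C_i$ crosses every chord of $T_k' \setminus T_k$ in $C_i$, the tree $T^*$ contains no chord of $T_k \cup T_k'$ with both endpoints in $C_i$: the $T_k'$-chords are absent by the choice of $t^*$, and any $T_k$-chord still present would cross the chord added at $t^*$.

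Next, I would argue that at least two distinct edges of $T^*$ lying outside $T_k \cup T_k'$ must be incident to the four internal vertices $v_3^i, v_4^i, v_5^i, v_6^i$ of $C_i$ (the vertices of $C_i$ not shared with any other copy). Indeed, the only edges of $T_k \cup T_k'$ having an endpoint in $\{v_3^i, v_4^i, v_5^i, v_6^i\}$ that could belong to $T^*$ are the two border edges $v_3^i v_4^i$ and $v_5^i v_6^i$; both have both endpoints internal and hence do not link internal vertices to the shared vertices $v_1^i, v_2^i, v_7^i, v_8^i$ of $C_i$. Since $T^*$ is a spanning tree of $C$ and the internal vertices have no $T^*$-edge going outside $C_i$ coming from $T_k \cup T_k'$, each of the two internal pairs $\{v_3^i, v_4^i\}$ and $\{v_5^i, v_6^i\}$ must be connected to the rest of $T^*$ through an intermediate edge incident to it. A short case analysis (depending on which of $v_3^i v_4^i, v_5^i v_6^i$ lie in $T^*$ and on whether an intermediate chord such as $v_3^i v_5^i$ bridges the two internal pairs) confirms that at least two such distinct intermediate edges incident to $\{v_3^i, v_4^i, v_5^i, v_6^i\}$ must be present in $T^*$.

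Fixing two such distinct intermediate edges $e_1, e_2$ of $T^*$, together with internal endpoints $u_\ell \in e_\ell \cap \{v_3^i, v_4^i, v_5^i, v_6^i\}$ for $\ell = 1, 2$, I would conclude as follows. Each of $e_1, e_2$ lies in $T^*$ but in neither $T_k$ nor $T_k'$, so it is added in some phase of $\S$ before $t^*$ and deleted in some phase of $\S$ after $t^*$. By the weight rule, because each $u_\ell$ has minimal index $i$ and does not belong to $C_{i+1}$, the weight $\frac{1}{4}$ contributed by $u_\ell$ at every such phase is assigned to $C_i$. Summing the four phases (an addition and a deletion for each of $e_1, e_2$) yields total weight $1$ given to $C_i$ from $u_1$ and $u_2$. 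Since $u_1, u_2 \in \{v_3^i, v_4^i, v_5^i, v_6^i\}$, neither equals $v_2^j$ nor $v_7^j$ for any $j \le k$, as required. The hardest step is the case analysis establishing the existence of two distinct intermediate edges: one must rule out every way of routing connectivity through a single intermediate link, and handle the subtleties that common border edges may be temporarily absent from $T^*$ or that an intermediate edge may reach across copies.
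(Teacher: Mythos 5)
There is a genuine gap at the key step of your argument. You define $T^*$ as the tree just before the first flip $t^*$ that adds a chord of $T_k'\setminus T_k$ with both endpoints in $C_i$, and you conclude that $T^*$ contains no chord of $T_k\cup T_k'$ in $C_i$ because any surviving chord of $T_k\setminus T_k'$ would cross the chord added at $t^*$. In the flip model of Theorem~\ref{thm:lb_flips} this inference fails: a flip removes one edge and adds one edge \emph{simultaneously}, and only the trees before and after must be non-crossing, so the flip $t^*$ may itself delete the crossing chord of $T_k\setminus T_k'$ while creating the new chord. Hence $T^*$ may still contain one chord of $T_k\setminus T_k'$ in $C_i$ (it may also contain the common chords $v_1^iv_2^i$ and $v_7^iv_8^i$, though these are harmless since they avoid the internal vertices). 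This is not cosmetic: if, say, $v_3^iv_8^i$ lies in $T^*$ and is the edge removed at $t^*$, then $\{v_3^i,v_4^i\}$ can be attached to the rest of $T^*$ through $v_3^iv_8^i$ and the border edge $v_3^iv_4^i$, and your connectivity argument then forces only one intermediate edge (leaving $\{v_5^i,v_6^i\}$), not two. Such configurations are realizable: take $T^*$ containing, inside $C_i$, the edges $v_3^iv_8^i$, $v_3^iv_4^i$, $v_5^iv_6^i$, $v_6^iv_8^i$, $v_2^iv_8^i$ together with the common chords; the flip $v_3^iv_8^i\rightsquigarrow v_2^iv_4^i$ is then a legitimate candidate for $t^*$, and only one intermediate edge ($v_6^iv_8^i$) touches $\{v_3^i,v_4^i,v_5^i,v_6^i\}$. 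So the ``short case analysis'' you invoke cannot go through under your stated premise.

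This is exactly the difficulty the paper sidesteps by pivoting on a different moment: it considers the tree $T$ just before the \emph{last} removal of a chord $e$ of $T_k\setminus T_k'$ with both endpoints in $C_i$, so that $e$ is the unique chord of $\Delta(T_k,T_k')$ in $C_i$ present in $T$, and then argues by cases on which chord $e$ is. Your argument essentially handles the situation corresponding to $e=v_1^iv_8^i$ (no $\Delta$-chord touching the internal vertices), but the harder case $e\in\{v_1^iv_6^i, v_3^iv_8^i\}$ is missing: there one must find an edge leaving $\{v_3^i,v_8^i\}$ and an edge leaving $\{v_5^i,v_6^i\}$ into the correct side of $e$, deal with the possibility that these two edges coincide (exhibiting a third edge), and check that when the weight-giving endpoint is the shared vertex $v_8^i$ the weight is still credited to $C_i$ (which is why the paper insists the other endpoint lies in the side $A$). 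Your bookkeeping for the parts you do cover is fine: internal endpoints do credit $C_i$ under the weight rule, each intermediate edge is both added and deleted, and internal vertices are never of the form $v_2^j$ or $v_7^j$. But without handling the simultaneous-removal scenario (i.e., without something like the paper's last-removal pivot and its Case 2), the claim is not proved.
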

    \begin{proof}
        Recall that, since all the edges of $T_k \setminus T_k'$ with both endpoints in $C_i$ are pairwise crossing the edges of $T_k' \setminus T_k$ with both endpoints in $C_i$, we have to modify all the edges of $T_k \setminus T_k'$ with both endpoints in $C_i$ before creating an edge of $T_k' \setminus T_k$ with both endpoints in $C_i$. 
        Consider the last step where a chord $e$ of $T_k \setminus T_k'$ with both endpoints in $C_i$ is removed during $\S$. And let us denote by $T$ the tree before removing $e$. Note that $e$ is the only chord of $T \cap \Delta(T_k,T'_k)$.
        
        We want to prove that there are two different edges $e_1$ and $e_2$ in $T \setminus (T_k' \cup T_k)$, each having an endpoint different from $v_2^i$ and $v_7^i$ that gives weight $\frac{1}{4}$ when adding and removing $e_1$ and $e_2$. We distinguish two cases (see Figure~\ref{fig:weightflip} for an illustration). \smallskip

        \noindent\textbf{Case 1:} $e = v_1^iv_8^i$. \\
        By connectivity, there exists an edge $e_1$ (resp. $e_2$) in $T \setminus (T_k \cup T_k')$ with exactly one endpoint in $\{ v_5^i, v_6^i \}$ (resp. $\{ v_3^i, v_4^i \}$) in $T$. Note that $e_1, e_2$ and $v_2^iv_7^i$ are pairwise distinct since $e$ separates $\{v_3^i,v_4^i,v_7^i\}$ from $\{v_2^i,v_5^i,v_6^i\}$.
        
        \smallskip
        
        \noindent\textbf{Case 2:} $e$ is either $v_1^iv_6^i$ or $v_3^iv_8^i$. \\
        Up to symmetry, we can assume that $e=v_3^iv_8^i$.

         Let $A$ be the  side of $e=v_3^iv_8^i$ containing $v_1^i$. By connectivity, there exist two edges $e_1$ (resp. $e_2$) in $T\setminus (T_k \cup T_k')$ with exactly one endpoint in $\{v_3^i,v_8^i \}$  (resp. $\{ v_5^i, v_6^i \}$) and the other endpoint in $A$. If $e_1 \ne e_2$, the conclusion follows. Otherwise, there exists another edge $e'_2$ in $T \setminus (T_k \cup T_k')$ with exactly one endpoint in $\{ v_3^i,v_5^i,v_6^i,v_8^i \}$ and the other endpoint in $A$, which completes the proof.

        \smallskip
        
        In both cases, we have proved the existence of the two distinct edges $e_1$ and $e_2$ in $T \setminus (T_k' \cup T_k)$.
        Since $e_1$ and $e_2$ have to be both created and removed in $\S$, $\S$ gives weight $1$ to $C_i$ because of the addition and deletion of $e_1$ and $e_2$. 
    \end{proof}

    \begin{figure}[hbtp]
        \begin{center}
        \tikzstyle{vertex}=[circle,draw, minimum size=7pt, scale=0.5, inner sep=1pt, fill = black]
        \tikzstyle{fleche}=[->,>=latex] 
        \tikzstyle{labell}=[text opacity=1, scale =1]
        \begin{tikzpicture}[scale=0.9]

            \node (a1) at (-110:2) [vertex] {};
            \node (a2) at ({2*cos(160)}, -1.2) [vertex] {};
            \node (a3) at (-70:2) [vertex] {};
            \node (a4) at ({2*cos(160)}, 1.2) [vertex] {};
            \node (a5) at ({2*cos(20)}, -1.2) [vertex] {};
            \node (a6) at (110:2) [vertex] {};
            \node (a7) at ({2*cos(20)}, 1.2) [vertex] {};
            \node (a8) at (70:2) [vertex] {};

            \node (b1) at (-110:2.4) [labell] {$v_5^i$};
            \node (b2) at ({2.4*cos(160)}, -1.2) [labell] {$v_2^i$};
            \node (b3) at (-70:2.4) [labell] {$v_6^i$};
            \node (b4) at ({2.4*cos(160)}, 1.2) [labell] {$v_1^i$};
            \node (b5) at ({2.4*cos(20)}, -1.2) [labell] {$v_8^i$};
            \node (b6) at (110:2.4) [labell] {$v_3^i$};
            \node (b7) at ({2.4*cos(20)}, 1.2) [labell] {$v_7^i$};
            \node (b8) at (70:2.4) [labell] {$v_4^i$};
            
            \draw[opacity=0.55, densely dashed] (a1) to (a3);
            \draw[opacity=0.55, densely dashed] (a2) to (a4);
            \draw[opacity=0.55, densely dashed] (a6) to (a8);
            \draw[opacity=0.55, densely dashed] (a5) to (a7);
            \draw[opacity=0.55, densely dashed] (a5) to (a6);
            \draw[opacity=0.55, densely dashed] (a5) to (a4);
            \draw[opacity=0.55, densely dashed] (a3) to (a4);

            \draw[opacity=0.55, densely dashed] (a2) to [bend right = 20](a4) [red];
            \draw[opacity=0.55, densely dashed] (a7) to [bend right = 20](a5) [red];

            \draw[opacity=0.55, densely dashed] (a1) to (a7) [red];
            \draw[opacity=0.55, densely dashed] (a2) to (a8) [red];
            \draw[opacity=0.55, densely dashed] (a1) to (a8) [red];

            \draw[thick, blue] (a4) to (a5);
            \draw[thick, blue] (a1) to (a3);
            \draw[thick, blue] (a6) to (a8);
            \draw[thick, blue] (a1) to (a5);
            \draw[thick, blue] (a8) to[bend right = 50] (2,0.3);

            \node (b8) at (1.6,-1.7) [labell, blue, scale=1.2] {$e_1$};
            \node (b8) at (1.3, 1.3) [labell, blue, scale=1.2] {$e_2$};
            \node (b8) at (-1.2, 1.2) [labell, blue, scale=1.2] {$e$};
            
        \tikzset{xshift=7.5cm}
     
            \node (a1) at (-110:2) [vertex] {};
            \node (a2) at ({2*cos(160)}, -1.2) [vertex] {};
            \node (a3) at (-70:2) [vertex] {};
            \node (a4) at ({2*cos(160)}, 1.2) [vertex] {};
            \node (a5) at ({2*cos(20)}, -1.2) [vertex] {};
            \node (a6) at (110:2) [vertex] {};
            \node (a7) at ({2*cos(20)}, 1.2) [vertex] {};
            \node (a8) at (70:2) [vertex] {};

            \node (b1) at (-110:2.4) [labell] {$v_5^i$};
            \node (b2) at ({2.4*cos(160)}, -1.2) [labell] {$v_2^i$};
            \node (b3) at (-70:2.4) [labell] {$v_6^i$};
            \node (b4) at ({2.4*cos(160)}, 1.2) [labell] {$v_1^i$};
            \node (b5) at ({2.4*cos(20)}, -1.2) [labell] {$v_8^i$};
            \node (b6) at (110:2.4) [labell] {$v_3^i$};
            \node (b7) at ({2.4*cos(20)}, 1.2) [labell] {$v_7^i$};
            \node (b8) at (70:2.4) [labell] {$v_4^i$};

            \draw[opacity=0.55, densely dashed] (a1) to (a3);
            \draw[opacity=0.55, densely dashed] (a2) to (a4);
            \draw[opacity=0.55, densely dashed] (a6) to (a8);
            \draw[opacity=0.55, densely dashed] (a5) to (a7);
            \draw[opacity=0.55, densely dashed] (a5) to (a6);
            \draw[opacity=0.55, densely dashed] (a5) to (a4);
            \draw[opacity=0.55, densely dashed] (a3) to (a4);

            \draw[opacity=0.55, densely dashed] (a2) to [bend right = 20](a4) [red];
            \draw[opacity=0.55, densely dashed] (a7) to [bend right = 20](a5) [red];

            \draw[opacity=0.55, densely dashed] (a1) to (a7) [red];
            \draw[opacity=0.55, densely dashed] (a2) to (a8) [red];
            \draw[opacity=0.55, densely dashed] (a1) to (a8) [red];

            \draw[thick, blue] (a6) to (a5);
            \draw[thick, blue] (a1) to (a3);
            \draw[thick, blue] (a6) to (a8);
            \draw[thick, blue] (a6) to [bend left = 45] (-2,0.3);
            \draw[thick, blue] (a3) to [bend right = 35] (-2,-0.3);

            \node (b8) at (-0.8,-0.8) [labell, blue, scale=1.2] {$e_2$};
            \node (b8) at (-1.2, 1.2) [labell, blue, scale=1.2] {$e_1$};
            \node (b8) at (0.7, 0.7) [labell, blue, scale=1.2] {$e$};

        \end{tikzpicture}
        \end{center}
        \caption{The tree $T$ (in blue) on the set $C_i$ obtained before removing $e$ during $\S$ in the Claim~\ref{cl:flip2connect}. On the left, $e= v_1^iv_8^i$, and on the right, $e = v_8^iv_3^i$. In both cases, the two edges $e_1$ and $e_2$ are not in $T_k \cup T_k'$, and $C_i$ receives weight from each phase adding or removing them. }
        \label{fig:weightflip}
    \end{figure}
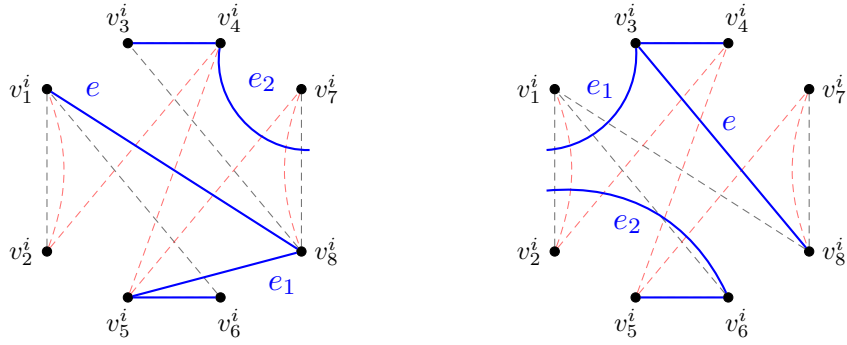

All the claims above put together ensure that the total weight given by $\S$ over all phases is at least $5k - 1$.
Since the total weight is an integer, in order to ensuring that the flip sequence has length at least $5k$, we only need to find some positive additional weight given by $\S$.

\begin{claim}\label{cl:fliplast}
    The flip sequence $\S$ gives an additional weight of $\frac{1}{4}$ to some set $C_i$.
\end{claim}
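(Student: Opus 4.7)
The plan is to examine the first flip $F_1$ of $\S$, which removes some edge $e_1 \in T_k$ and adds some edge $f_1 \notin T_k$. The first step is to show that $f_1$ is intermediate, i.e., $f_1 \notin T_k \cup T_k'$. Indeed, any chord of $T_k' \setminus T_k$ restricted to a single block $C_j$ pairwise crosses all three chords of $T_k \setminus T_k'$ in $C_j$ (cf. the proof of Lemma~\ref{cl:T1flip}), so a single flip (which removes only $e_1$) cannot produce such a chord; meanwhile every border edge and every common chord of $T_k$ is already in $T_k$. Since $f_1 \notin T_k'$, it must also be deleted at some later step of $\S$, so its addition and subsequent deletion account for exactly two phases of $\S$, contributing together $1$ unit of distributed weight.

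The argument then splits according to the nature of $e_1$. If $e_1$ is a border edge of $T_k$, then $e_1 \in T_k \cap T_k'$ and must be re-added later to reach $T_k'$; neither its removal at step~$1$ nor its eventual re-addition is accounted for in Claims~\ref{cl:flip3chords}, \ref{cl:flip1common}, or~\ref{cl:flip2connect}, so the two phases of $e_1$'s modification contribute at least $1$ unit of extra weight. If $e_1 = v_7^j v_8^j$ is a common chord (with $1 \leq j \leq k-1$), then since $v_7^j$ has degree $1$ in $T_k$ (being incident only to $v_8^j$), removing $e_1$ isolates $v_7^j$, forcing $f_1$ to have $v_7^j$ as an endpoint. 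Because the ``moreover'' of Claim~\ref{cl:flip2connect} explicitly excludes endpoints of the form $v_7^j$, the $\tfrac{1}{4}$ weight that the endpoint $v_7^j$ deposits at each of the two phases of $f_1$ is wholly additional, giving extra weight $\geq \tfrac{1}{2}$. Finally, if $e_1 \in T_k \setminus T_k'$ lies in some block $C_i$, a face analysis of $T_k - e_1$ restricted to $C_i$ shows that $f_1$ must be an intermediate edge with both endpoints inside $C_i$, possibly via the shared vertices $v_1^i$ or $v_8^i$: any cross-block alternative would be forced to cross the common chords $v_7^{i-1}v_8^{i-1}$ or $v_7^i v_8^i$, or chords of $T_k$ lying in adjacent blocks, none of which can be resolved by removing the single edge $e_1$. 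Both endpoints of $f_1$ therefore deposit $\tfrac{1}{4}$ weight to $C_i$ at each phase, summing to $1$ unit of weight on $C_i$ over the two phases of $f_1$; Claim~\ref{cl:flip2connect} accounts for at most $\tfrac{1}{2}$ of this via one endpoint, leaving the remaining $\tfrac{1}{2}$ as additional weight on $C_i$.

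In every case, at least $\tfrac{1}{4}$ of additional weight is deposited on some $C_i$. Combined with the $5k - 1$ units of weight accounted for by Claims~\ref{cl:flip3chords}--\ref{cl:flip2connect}, the total weight distributed by $\S$ is strictly greater than $5k - 1$; since this total equals the (integer) number of flips in $\S$, the sequence must have length at least $5k$. The main obstacle is the geometric verification in the third case: one must rigorously enumerate the candidate non-crossing replacements $f_1$ from the merged face containing $e_1$ in $T_k$, and check that every candidate chord in this face has both endpoints in $C_i$ (which is where the local structure of $T_1$, combined with the placement of the common chords $v_7^{i-1}v_8^{i-1}$ and $v_7^i v_8^i$ in the glued convex hull, plays a decisive role).
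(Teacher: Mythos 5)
Your proof is correct, but it takes a genuinely different route from the paper's. The paper anchors its argument at the moment the \emph{first common chord} is removed (using the standing hypothesis of Lemma~\ref{lem:flipcounting} that $\S$ modifies every common chord): connectivity at the degree-one vertex $v_7^i$ produces an edge $e^*\notin T_k$ incident to $v_7^i$, and either the $v_7^i$-quarter of $e^*$ is uncounted (since Claim~\ref{cl:flip2connect} excludes $v_2/v_7$-type endpoints), or $e^*\in\{v_5^iv_7^i,v_2^{i+1}v_4^{i+1}\}$ and one then finds, inside $C_i$ at that moment, an intermediate edge whose second-endpoint quarter is uncounted. You instead anchor at the \emph{very first flip} of $\S$ and split on the removed edge: a common border edge (its removal/re-addition is counted nowhere), a common chord (forcing $f_1$ to hang at $v_7^j$, same excluded-endpoint mechanism), or a chord of $T_k\setminus T_k'$ (forcing $f_1$ to stay inside $C_i$, same second-endpoint mechanism). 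Your version is slightly more general in that it never uses the assumption that all common chords are modified, at the price of the extra geometric step you flag: one must check that in the third case $f_1$ cannot leave the block. This check does go through — at step one the delimiting chords $v_7^{i-1}v_8^{i-1}$ and $v_7^iv_8^i$ (or the extreme border edges $v_1^1v_2^1$, $v_7^kv_8^k$) are intact, and any reconnecting edge with an endpoint strictly in another copy would cross one of them — and one should also verify against the weight-assignment rules that both endpoints of such an $f_1$ route their quarters to $C_i$ (they do, since the only in-$C_i$ edge whose two endpoints are both shared with $C_{i+1}$ is the common chord itself, which cannot be $f_1$). With those two verifications spelled out, your accounting of what Claims~\ref{cl:flip3chords}, \ref{cl:flip1common} and~\ref{cl:flip2connect} already consumed is accurate, and the extra $\frac14$ is genuinely new in each case.
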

    \begin{proof}
    Let $T$ be the first tree of $\S$ where an edge $e_0$ of $T_k \cap T_k'$ has been removed. Let us denote by $i$ the index such that $e_0$ is in both $C_i$ and $C_{i+1}$.
    Then, there is an edge $e^* = u^*w^* \neq e_0$ in $T$ such that $u^* = v_7^i = v_2^{i+1}$.
    Since $e^* \neq e_0$ and $v_7^i$ is an endpoint of $e^*$, $e^*$ is not in $T_k$.
    Observe that $\S$ gives weight $\frac{1}{4}$ to either $C_i$ or $C_{i+1}$ from $v_7^i$ when adding $e^*$. This weight has not been counted in Claim~\ref{cl:flip2connect} (by assumption on $u_1$ and $u_2$). If $e^*$ is not in $T_k'$,  then this weight was not counted by Claim~\ref{cl:flip1common} either and we are done. 
    
    Otherwise, $e^*$ is either $v_2^{i+1}v_4^{i+1}$ or $v_5^iv_7^i$, say the latter by symmetry. 
    First note that the chords of $T_k \setminus T_k'$ with both endpoints in $C_i$ are not in $T$ (since $e^*$ crosses all these edges).
    By connectivity, since no common chord has been removed in $\S$ before $e_0$ and since $T$ does not contain any chord of $T_k$, at least three edges of $T \setminus T_k$ distinct from $e^*$ have both endpoints in $C_i$.
    In particular, at least one of them, denoted by $e'$, is not in $T_k'$.
    Since $e'$ is not in $T_k \cup T_k'$ and has both endpoints in $C_i$, the weight given by $\S$ to $C_i$ when removing $e'$ was not entirely counted by Claim~\ref{cl:flip2connect}. Indeed, this claim only considers the contribution of exactly one endpoint of each edge $e_1$ and $e_2$, not both.
    \end{proof}

All the previous claims ensure that the weight given by $\S$ is at least $\lceil 5k- \frac{3}{4}  \rceil = 5k$, which completes the proof of Lemma~\ref{lem:flipcounting}.

\subsection{Rotations}\label{sec:lowerrotation}

In this section, we also give a family $(T_k, T_k')_{k \in \NN^*}$ that satisfies the conclusion of Theorem~\ref{thm:lb_rotations}.

\lbrotations*

We consider the same inductive construction as before, but start with a slightly different pair $(T_1, T_1')$: the edge $v_1v_8$ is replaced by $v_3v_6$ in $T_1$, as illustrated in Figure~\ref{fig:lower2} (the graph $T_1'$ remaining the same).

\begin{figure}[hbtp]
        \begin{center}
        \tikzstyle{vertex}=[circle,draw, minimum size=7pt, scale=0.5, inner sep=1pt, fill = black]
        \tikzstyle{fleche}=[->,>=latex] 
        \tikzstyle{labell}=[text opacity=1, scale =0.85]
        \begin{tikzpicture}[scale=0.9]

            \node (a1) at (-110:2) [vertex] {};
            \node (a2) at (-160:2) [vertex] {};
            \node (a3) at (-70:2) [vertex] {};
            \node (a4) at (160:2) [vertex] {};
            \node (a5) at (-20:2) [vertex] {};
            \node (a6) at (110:2) [vertex] {};
            \node (a7) at (20:2) [vertex] {};
            \node (a8) at (70:2) [vertex] {};

            \node (b1) at (-110:2.4) [labell] {$v_5$};
            \node (b2) at (-160:2.4) [labell] {$v_2$};
            \node (b3) at (-70:2.4) [labell] {$v_6$};
            \node (b4) at (160:2.4) [labell] {$v_1$};
            \node (b5) at (-20:2.4) [labell] {$v_8$};
            \node (b6) at (110:2.4) [labell] {$v_3$};
            \node (b7) at (20:2.4) [labell] {$v_7$};
            \node (b8) at (70:2.4) [labell] {$v_4$};
            
            \draw (a1) to (a3);
            \draw (a2) to (a4);
            \draw (a6) to (a8);
            \draw (a5) to (a7);
            \draw (a5) to (a6);
            \draw (a3) to (a6);
            \draw (a3) to (a4);

            \draw (a1) to [bend left = 20](a3) [red];
            \draw (a2) to [bend right = 20](a4) [red];
            \draw (a7) to [bend right = 20](a5) [red];
            \draw (a8) to [bend left = 20](a6) [red];

            \draw (a1) to (a7) [red];
            \draw (a2) to (a8) [red];
            \draw (a1) to (a8) [red];

        \tikzset{xshift=7.5cm}
        

        \foreach \s in {1,...,5}{(\node (i\s) at ({3*cos(\s*(360/14) + (360/28))}, 2)[vertex]{};};
        \node (i6) at ({3*cos(6*(360/14) + (360/28))}, 1)[vertex]{};
        \node (i7) at ({3*cos(7*(360/14) + (360/28))}, -1)[vertex]{};
        \node (i13) at ({3*cos(13*(360/14) + (360/28))}, -1)[vertex]{};
        \node (i14) at ({3*cos(14*(360/14) + (360/28))}, 1)[vertex]{};
        \foreach \s in {8,...,12}{(\node (i\s) at ({3*cos(\s*(360/14) + (360/28))}, -2)[vertex]{};};

        \node (h1) at ({3*cos(1*(360/14) + (360/28))}, 2.4)[labell]{$v_6^{i+1}$};
        \node (h2) at ({3*cos(2*(360/14) + (360/28))}, 2.4)[labell]{$v_5^{i+1}$};
        \node (h3) at ({3*cos(3*(360/14) + (360/28))}, 2.4)[labell]{$v_7^i = v_2^{i+1}$};
        \node (h4) at ({3*cos(4*(360/14) + (360/28))}, 2.4)[labell]{$v_4^i$};
        \node (h5) at ({3*cos(5*(360/14) + (360/28))}, 2.4)[labell]{$v_3^i$};
        \node (h8) at ({3*cos(8*(360/14) + (360/28))}, -2.4)[labell]{$v_5^i$};
        \node (h9) at ({3*cos(9*(360/14) + (360/28))}, -2.4)[labell]{$v_6^i$};
        \node (h10) at ({3*cos(10*(360/14) + (360/28))}, -2.4)[labell]{$v_8^i = v_1^{i+1}$};
        \node (h11) at ({3*cos(11*(360/14) + (360/28))}, -2.4)[labell]{$v_3^{i+1}$};
        \node (h12) at ({3*cos(12*(360/14) + (360/28))}, -2.4)[labell]{$v_4^{i+1}$};
        \node (h6) at ({3*cos(6*(360/14) + (360/28)) - 0.2}, 1.4)[labell]{$v_1^i$};
        \node (h7) at ({3*cos(7*(360/14) + (360/28)) - 0.2}, -1.4)[labell]{$v_2^i$};
        \node (h13) at ({3*cos(13*(360/14) + (360/28)) + 0.2}, -1.4)[labell]{$v_7^{i+1}$};
        \node (h14) at ({3*cos(14*(360/14) + (360/28)) + 0.2}, 1.4)[labell]{$v_8^{i+1}$};

        \draw[red] (i3) to[bend right=5] (i10);
        \draw[red] (i3) to (i12);
        \draw[red] (i3) to (i8);
        \draw[red] (i2) to (i12);
        \draw[red] (i2) to (i13);
        \draw[red] (i8) to (i4);
        \draw[red] (i4) to (i7);
        \draw (i6) to (i7) ;
        \draw (i13) to (i14) ;
        \draw (i1) to (i2) ;
        \draw (i4) to (i5) ;
        \draw (i8) to (i9) ;
        \draw (i11) to (i12) ;    
        \draw (i3) to[bend left=5] (i10);
        \draw (i10) to (i1);
        \draw (i10) to (i5);
        \draw (i1) to (i11);
        \draw (i11) to (i14);
        \draw (i5) to (i9);
        \draw (i9) to (i6);
        \draw[red] (i6) to[bend left = 20] (i7) ;
        \draw[red] (i13) to[bend left = 20] (i14) ;
        \draw[red] (i1) to[bend left = 20] (i2) ;
        \draw[red] (i4) to[bend left = 20] (i5) ;
        \draw[red] (i8) to[bend left = 20] (i9) ;
        \draw[red] (i11) to[bend left = 20] (i12) ;

        \draw[red, dashed] (i7) to ({3*cos(7*(360/14) + (360/28)) - 0.3}, -0.4);
        \draw[dashed] (i6) to ({3*cos(6*(360/14) + (360/28)) - 0.3}, 0.4);
        \draw[red, dashed] (i13) to ({3*cos(13*(360/14) + (360/28)) + 0.3}, -0.4);
        \draw[dashed] (i14) to ({3*cos(14*(360/14) + (360/28)) + 0.3}, 0.4);
            
        \begin{scope}
            \clip (0,0) ellipse (3cm and 2cm);
        \end{scope}

        \end{tikzpicture}
        \end{center}
        \caption{On the left, the tree $T_1$ in black and the tree $T_1'$ in red. On the right, the subgraph induced by $C_i$ and $C_{i+1}$ in a pair $T_k$ (in black) and $T_k'$ (in red). }
        \label{fig:lower2}
    \end{figure}

For every $k \geq 1$, there is a rotation sequence from $T_k$ to $T_k'$ of length $7k$. 
Indeed, the following rotation sequence gives a transformation from $T_1$ to $T_1'$: $v_1v_6 \rightsquigarrow v_1v_5$, $v_3v_6 \rightsquigarrow v_3v_5$, $v_3v_5 \rightsquigarrow v_5v_8$, $v_5v_8 \rightsquigarrow v_5v_7$, $v_3v_8 \rightsquigarrow v_4v_5$, $v_1v_5 \rightsquigarrow v_1v_4$ and finally $v_1v_4 \rightsquigarrow v_2v_4$.
This rotation sequence can indeed be generalized into a rotation sequence between $T_k$ and $T'_k$ of length $7k$ by rotating in each copy of $T_1$ and $T_1'$ independently.
The rest of the proof consists in proving by induction on $k$ that such sequences are minimal. 

For the base case, one can check that seven rotations are needed to transform $T_1$ into $T'_1$ similarly to the previous sections, but the case analysis is quite tedious. We rather run an exhaustive computer search\footnote{The code can be found at \href{https://github.com/tpierron/reconf-nc-trees}{https://github.com/tpierron/reconf-nc-trees}} which checks that $7$ is indeed the length of a minimal rotation sequence. 

\begin{lemma}\label{cl:T1rotation}
    A minimal rotation sequence between $T_1$ and $T'_1$ has length at least $7$.
\end{lemma}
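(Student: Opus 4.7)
My plan is to adapt the crossing argument from Lemma~\ref{cl:T1flip} and combine it with the endpoint-sharing restriction inherent to rotations. The chords of $T_1$ are $v_3v_6$, $v_3v_8$ and $v_1v_6$, whose endpoints lie in $A=\{v_1,v_3,v_6,v_8\}$, while the chords of $T_1'$ are $v_5v_7$, $v_2v_4$ and $v_4v_5$, with endpoints in $B=\{v_2,v_4,v_5,v_7\}$. Since $A$ and $B$ occupy consecutive pairs around the convex hull, every $A$-$A$ chord strictly separates the $B$-vertices (and vice versa), so every chord of $T_1$ crosses every chord of $T_1'$. As in the flip case, no chord of $T_1'$ can enter the tree before all three chords of $T_1$ have been removed. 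Moreover $A\cap B=\emptyset$, so no single rotation can both remove a chord of $T_1$ and insert a chord of $T_1'$. Any valid sequence therefore contains at least three rotations removing $T_1$-chords and at least three others inserting $T_1'$-chords, which yields the immediate bound $r\geq 6$.

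The heart of the proof is to rule out $r=6$. Suppose for contradiction that such a sequence exists. Let $t$ be the first step inserting a $T_1'$-chord; by the crossing argument $t\geq 4$, and since there are exactly $3$ insertions of $T_1'$-chords and $3$ remaining steps, we must have $t=4$ and steps $4,5,6$ insert the three $T_1'$-chords one each. Dually, steps $1,2,3$ each remove one $T_1$-chord and introduce an auxiliary edge sharing an endpoint in $A$ with the removed chord. In steps $4,5,6$, the edges deleted cannot be common border edges (they would need to be re-added, and no additions remain outside the three $T_1'$-chords), nor $T_1$-chords (gone) nor $T_1'$-chords (not yet present). Hence the three edges deleted during the insertion phase are exactly the three auxiliary edges of the removal phase, and each of these auxiliary edges must carry one endpoint in $A$ (witness of its creation) and one endpoint in $B$ (witness of its deletion), together with a prescribed pairing to a $T_1$-chord and a $T_1'$-chord.

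The remainder is a finite enumeration. For each of the $3!=6$ bijections between $T_1$-chords and $T_1'$-chords, each auxiliary edge has only a handful of candidates (three per pair after excluding common border edges). For every such triple, one checks whether the three auxiliary edges, together with the common border edges $v_1v_2,v_3v_4,v_5v_6,v_7v_8$, form a non-crossing spanning tree of $C$, whether this tree is reachable from $T_1$ by the three prescribed removal rotations, and whether it is convertible to $T_1'$ by the three prescribed insertion rotations. I expect every case to fail one of these conditions, giving the contradiction $r\geq 7$. The main obstacle is not the difficulty of any individual case but the sheer number of subcases that must be verified by hand, each requiring a non-crossing check, a connectivity check, and two rotation-validity checks for the intermediate trees. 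This is precisely the source of the tedium mentioned by the authors, and it is cleanly dispatched by a direct breadth-first exploration of all rotation sequences of length at most $6$ from $T_1$, as their computer search does.
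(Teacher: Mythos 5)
Your reduction is sound as far as it goes: the three chords of $T_1$ have all endpoints in $A=\{v_1,v_3,v_6,v_8\}$, the three chords of $T_1'$ have all endpoints in $B=\{v_2,v_4,v_5,v_7\}$, they pairwise cross, and since a rotation must share an endpoint between the removed and the added edge while $A\cap B=\emptyset$, no rotation can simultaneously delete a $T_1$-chord and create a $T_1'$-chord; this correctly gives the bound $6$, and your structural analysis of a hypothetical $6$-step sequence (steps $1$--$3$ remove the three $T_1$-chords and create three auxiliary edges, each forced to have one endpoint in $A$ and one in $B$; steps $4$--$6$ delete exactly these auxiliary edges while creating the three $T_1'$-chords) is also correct. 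The genuine gap is the passage from $6$ to $7$, which is precisely the content of the lemma: ruling out every constrained $6$-step sequence is where you stop. ``I expect every case to fail'' is an expectation, not an argument, and you neither carry out the finite enumeration by hand nor perform (or report the result of) an actual exhaustive search; you instead defer to the authors' computation. Since this base case is exactly what pushes the rotation lower bound to $\frac{7}{3}k$ rather than $2k$, leaving it unverified means the lemma is not proved by your text.

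For comparison, the paper gives no hand proof either: it remarks that the case analysis is tedious and establishes the base case by an exhaustive computer search over rotation sequences starting from $T_1$. So your plan, if completed, would in fact be more informative than the published argument: the constraints you derive (a bijection between $T_1$- and $T_1'$-chords, auxiliary edges with one endpoint on each side, the fixed intermediate tree consisting of the four common border edges plus the three auxiliary edges, and the realizability of the prescribed removal and insertion orders through non-crossing spanning trees) cut the search down to a few dozen candidate configurations, which looks killable by hand. But as written, the decisive verification is missing, so the proposal does not stand on its own.
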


Assume now that $k > 1$, and, for every $\ell < k$, any rotation sequence between $T_\ell$ and $T_\ell'$ has length at least $7\ell$. 
Let $\S$ be a minimal rotation sequence from $T_k$ to $T_k'$. 

Following the steps of the previous parts, we may assume that $\S$ modifies every common chord of $T_k$ and $T_k'$, for otherwise we can mimic the proof of Lemma~\ref{lem:LB_flips_chordschange} and directly conclude by induction.

The rest of the proof is different from the proofs of the previous sections. While we only counted edges created or removed during the sequence in previous sections and proved that this number is large, we use a more involved argument here consisting in proving that the number of rotations is large.
We start with an easy claim about the number of rotations that involve the edges of $\Delta(T_k,T'_k)$. 
\begin{claim}
\label{cl:rotate3chords}
 The following subset $\S_0$ of $\S$ gives $6k$ pairwise disjoint rotations:
\begin{itemize}
    \item for each edge of $T_k\setminus T'_k$, $\S_0$ contains the first rotation that removes it and,
    \item for each edge of $T'_k\setminus T_k$, $\S_0$ contains the last rotation that creates it.
\end{itemize} 
\end{claim}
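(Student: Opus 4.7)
The plan is to show that $\S_0$ consists of exactly $6k$ distinct rotations of $\S$. Since $|T_k \setminus T'_k| = |T'_k \setminus T_k| = 3k$, $\S_0$ lists $6k$ rotations a priori (counted with possible multiplicity: one per edge of $\Delta(T_k, T'_k)$), and it remains to show no two listed rotations coincide. I would start by noting the easy half: two ``first removal of $e_1$'' and ``first removal of $e_2$'' rotations with $e_1 \neq e_2$ are necessarily distinct rotations of $\S$, since a rotation removes exactly one edge; symmetrically, two ``last creation of $e'_1$'' and ``last creation of $e'_2$'' rotations with $e'_1 \neq e'_2$ are distinct since a rotation adds exactly one edge. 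Hence the only way two elements of $\S_0$ could coincide is if the same rotation is both the first removal of some $e \in T_k \setminus T'_k$ and the last creation of some $e' \in T'_k \setminus T_k$, in which case it has the form $R = e \rightsquigarrow e'$ and, being a rotation, the edges $e$ and $e'$ must share a common endpoint $v$.

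The task then reduces to ruling out the existence of such a rotation $R$. Let $T$ be the tree just before $R$ in $\S$; then $e \in T$, $e' \notin T$, and $T' := (T \setminus \{e\}) \cup \{e'\}$ is a non-crossing spanning tree. Moreover, since $R$ is the first removal of $e$, the edge $e$ lies in every tree of $\S$ up to and including $T$; since $R$ is the last creation of $e'$, the edge $e'$ lies in every tree after $R$. Now, the pairs $(e,e')$ with $e \in T_k \setminus T'_k$, $e' \in T'_k \setminus T_k$, and sharing an endpoint form a short explicit list: within a single copy $C_i$, the candidates are $(v_1^iv_6^i, v_1^iv_7^i)$, $(v_1^iv_6^i, v_1^iv_8^i)$, $(v_3^iv_8^i, v_2^iv_8^i)$, $(v_3^iv_8^i, v_1^iv_8^i)$, together with their analogs across adjacent copies meeting at the identified points $v_7^i = v_2^{i+1}$ and $v_8^i = v_1^{i+1}$.

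For each such candidate pair, I would derive a contradiction by examining which edges must or cannot lie in $T$. The key observation is that in every such case, $e'$ crosses the remaining chords of $T_k \setminus T'_k$ incident to the same copy that do \emph{not} share an endpoint with $e'$; since $T' \ni e'$ is non-crossing, none of those chords may belong to $T$. But those chords lie in $T_k$ and must be restored by the end of the sequence only if they belong to $T'_k$---which they do not, since they are in $T_k \setminus T'_k$. Combining this with the connectivity requirements on $T$ (which must still be a spanning tree containing $e$ and reaching all points in the copy), and with the fact that $R$ is supposed to be the \emph{first} removal of $e$ (so any chord of $T_k \setminus T'_k$ absent from $T$ has already been first-removed strictly before $R$, yet has nowhere to have been re-added without crossing $e$), I expect to reach a contradiction in each case. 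The main obstacle will be making this local case analysis sufficiently uniform and leveraging the symmetry between the two ``halves'' of the pair $(T_1, T'_1)$ to avoid an unwieldy enumeration.
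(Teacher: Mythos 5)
Your reduction is the right one, and it is the same as the paper's: since each rotation removes exactly one edge and adds exactly one edge, the only way two members of $\S_0$ could coincide is if a single rotation $e\rightsquigarrow e'$ is simultaneously the first removal of some $e\in T_k\setminus T'_k$ and the last creation of some $e'\in T'_k\setminus T_k$, and since it is a rotation, $e$ and $e'$ would have to share an endpoint. The problem is what you do next. Your ``short explicit list'' of candidate pairs is not a list of actual edges of the construction: $v_1^iv_7^i$, $v_1^iv_8^i$ and $v_2^iv_8^i$ are not edges of $T'_k$ at all (the chords of $T'_1$ are $v_5v_7$, $v_2v_4$ and $v_4v_5$; recall also that in the rotation construction $v_1v_8$ was deliberately replaced by $v_3v_6$ in $T_1$). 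In fact the chords of $T_k\setminus T'_k$ have all their endpoints in $\{v_1^i,v_3^i,v_6^i,v_8^i\}$ while the chords of $T'_k\setminus T_k$ have all their endpoints in $\{v_2^i,v_4^i,v_5^i,v_7^i\}$, and these sets remain disjoint under the identifications $v_8^i=v_1^{i+1}$, $v_7^i=v_2^{i+1}$. So the correct list of endpoint-sharing pairs is empty, and the claim follows in one line --- which is exactly the paper's proof.

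As written, then, your proposal has a genuine gap: the decisive step (eliminating the alleged candidate pairs) is built on a misreading of which chords lie in $T_k\setminus T'_k$ and $T'_k\setminus T_k$, and the crossing/connectivity case analysis that is supposed to finish the argument is only announced (``I expect to reach a contradiction in each case''), not carried out. Had the candidate pairs genuinely existed, ruling them out would indeed require nontrivial work of the kind you sketch, and nothing in your outline guarantees it would succeed; the claim is true precisely because the construction was set up so that no such pair exists. To repair the proof, replace the list and the planned case analysis by the observation that the two families of chords have disjoint endpoint sets, so no rotation can remove a chord of $T_k\setminus T'_k$ while creating a chord of $T'_k\setminus T_k$.
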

\begin{proof}    
Note that the chords of $T_k \setminus T_k'$ have no common endpoint with the chords of $T_k' \setminus T_k$, hence every rotation that removes a chord of $T_k  \setminus T_k'$ cannot create a chord of $T_k' \setminus T_k$. Therefore $\S_0$ contains $6k$ rotations, which concludes.
\end{proof}
    
It remains to prove that there are $k$ rotations that have not yet been counted, \emph{i.e.} $k$ steps in $\S\setminus \S_0$. First, we prove the existence of rotations in $\S$ involving $C_i$ and $C_{i+1}$ which have special properties.

\begin{lemma}\label{cl:rotspecial} 
    For each $i\in[1,k-1]$, let $V_i :=( C_i \cup C_{i+1} ) \setminus \{ v_1^i, v_2^i, v_7^{i+1}, v_8^{i+1} \}$.
    The sequence $\S$ contains at least one of the following: 
    \begin{itemize}
    \item[(1)] a rotation $v_7^iv_8^i\rightsquigarrow v_5^iv_7^i$ (or $v_7^iv_8^i\rightsquigarrow v_2^{i+1}v_4^{i+1}$ by symmetry) not in $\S_0$,
    \item[(2)] a rotation $v_3^iv_8^i\rightsquigarrow v_7^iv_8^i$ (or $v_1^{i+1}v_6^{i+1}\rightsquigarrow v_7^iv_8^i$ by symmetry) not in $\S_0$,
    \item[(3)] a rotation that removes $v_7^iv_8^i$ to add an edge that is not in $T_k' \setminus T_k$,
    \item[(4)] a rotation that removes an edge that is not in $T_k \setminus T_k'$ to add $v_7^iv_8^i$,
    \item[(5)] two rotations, one that removes $v_8^iu$ with $u \in \{v_5^i, v_6^i\}$ (or $v_1^{i+1}u$ with $u \in \{v_3^{i+1}, v_4^{i+1}\}$ by symmetry) to add an edge that is not in $T_k' \setminus T_k$ with both endpoints in $V_i$, and the other that removes an edge that is not in $T_k \setminus T_k'$ with both endpoints in $V_i$ to add $v_2^{i+1}w$ with $w \in \{v_5^{i+1},v_6^{i+1}\}$ (or $v_7^iw$ with $w \in \{v_3^{i}, v_4^{i}\}$ by symmetry). 
    \end{itemize} 
\end{lemma}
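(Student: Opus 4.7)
Set $e^* := v_7^iv_8^i$, the unique common chord of $T_k$ (and of $T_k'$) shared between copies $C_i$ and $C_{i+1}$ (using the identifications $v_7^i = v_2^{i+1}$ and $v_8^i = v_1^{i+1}$). Since $\S$ modifies every common chord, $\S$ contains a first rotation $\rho_1 = (e^* \rightsquigarrow e_1)$ that removes $e^*$ and a last rotation $\rho_2 = (e_2 \rightsquigarrow e^*)$ that restores it. A direct inspection of $T_1$ and $T_1'$ shows that the edges of $T_k' \setminus T_k$ incident to $e^*$ are exactly $\{v_5^iv_7^i,\, v_2^{i+1}v_4^{i+1}\}$ (both at $v_7^i$), and the edges of $T_k \setminus T_k'$ incident to $e^*$ are exactly $\{v_3^iv_8^i,\, v_1^{i+1}v_6^{i+1}\}$ (both at $v_8^i$). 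As $e^* \in T_k \cap T_k'$, the rotation $\rho_1$ can belong to $\S_0$ only as the last creation of an edge $e_1 \in T_k' \setminus T_k$ (forcing $e_1$ to be one of the two listed); symmetrically $\rho_2 \in \S_0$ only if $e_2 \in \{v_3^iv_8^i,\, v_1^{i+1}v_6^{i+1}\}$ and $\rho_2$ is its first removal.

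The first four items follow directly from this observation. If $e_1 \notin T_k' \setminus T_k$ then $\rho_1 \notin \S_0$ and $\rho_1$ is an instance of case~(3); if $e_1 \in T_k' \setminus T_k$ but $\rho_1$ is not its last creation, then $\rho_1 \notin \S_0$ and is an instance of case~(1). The analogous analysis of $\rho_2$ yields case~(4) or case~(2). If any of these holds, we are done; otherwise both $\rho_1,\rho_2 \in \S_0$.

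In this residual situation, by symmetry we may assume $\rho_1 = (e^* \rightsquigarrow v_5^iv_7^i)$ is the last creation of $v_5^iv_7^i$ and $\rho_2 = (v_3^iv_8^i \rightsquigarrow e^*)$ is the first removal of $v_3^iv_8^i$; the three other sign combinations are handled by the obvious symmetries. To produce case~(5), the plan is a degree-counting argument at $v_7^i$ and $v_8^i$. The degree at $v_8^i$ decreases by $2$ from $T_k$ to $T_k'$, while $\rho_1$ (pivoted at $v_7^i$) contributes $-1$ by removing $e^*$ and $\rho_2$ (pivoted at $v_8^i$) contributes $0$, so $\S$ must contain at least one further rotation $\rho_A$ pivoted at some $y \neq v_8^i$ that removes an edge $v_8^iy$. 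A parallel argument at $v_7^i$ (whose degree grows by $2$) yields an extra rotation $\rho_B$ pivoted at some $z \neq v_7^i$ that adds an edge $v_7^iz$. Finally, using that $\rho_2$ is the first removal of $v_3^iv_8^i$ (so $v_3^iv_8^i$ remains in $T$ up to $\rho_2$) and that the first removal of $v_1^{i+1}v_6^{i+1}$ already lies in $\S_0$, the possibilities for $\rho_A$ and $\rho_B$ are classified, and one shows they must remove (respectively add) an intermediate edge whose companion endpoint is among $\{v_5^i, v_6^i\}$ or $\{v_3^{i+1}, v_4^{i+1}\}$ (respectively $\{v_5^{i+1}, v_6^{i+1}\}$ or $\{v_3^i, v_4^i\}$), with the added (resp. removed) edge lying inside $V_i$ and outside $T_k' \setminus T_k$ (resp. $T_k \setminus T_k'$); membership in $V_i$ follows from the fact that no edge at $v_7^i$ or $v_8^i$ can reach a vertex of $V_i^c$ without crossing the common chord of an adjacent pair of copies.

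The principal obstacle is this final classification step: among all rotations forced by the degree balance, we must exhibit ones matching the precise patterns of case~(5), which requires tracking which intermediate edges can appear at $v_7^i$ and $v_8^i$ during $\S$ and ruling out alternatives that would re-use edges already removed (or not yet added) in $T_k \setminus T_k'$ and $T_k' \setminus T_k$.
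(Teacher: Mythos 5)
Your reduction of items (1)--(4) to the first removal and last re-insertion of $v_7^iv_8^i$ matches the paper's opening moves, but the ``residual situation'' is set up incorrectly and, more importantly, the engine you propose for extracting item (5) does not work. First, the representative combination you fix ($\rho_1$ creating $v_5^iv_7^i$ together with $\rho_2$ removing $v_3^iv_8^i$) cannot occur: since $\rho_1\in\S_0$ is the last creation of $v_5^iv_7^i$, that chord is present in every subsequent tree and it crosses $v_3^iv_8^i$, so $\rho_2$ is forced to remove the chord $v_1^{i+1}v_6^{i+1}$ of the \emph{other} copy. This cross-copy pairing is not a symmetric image of your case, and it is precisely what the paper exploits (it guarantees that $v_5^iv_7^i$ persists after the removal of $v_7^iv_8^i$ while $v_1^{i+1}v_6^{i+1}$ persists before its re-insertion). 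You also never address possible intermediate removals and re-insertions of $v_7^iv_8^i$; the paper needs (and proves, using the negations of (1), (3), (4)) that this common chord is rotated exactly once before it can argue about ``the'' re-insertion.

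The degree balance at $v_7^i$ and $v_8^i$ is too weak to produce the two rotations of item (5). The net change of $-2$ at $v_8^i$ (resp.\ $+2$ at $v_7^i$) can be entirely absorbed by rotations already in $\S_0$ pivoted at the far endpoints of $v_3^iv_8^i$ and $v_1^{i+1}v_6^{i+1}$ (resp.\ of $v_5^iv_7^i$ and $v_2^{i+1}v_4^{i+1}$): for instance the first removal of $v_3^iv_8^i$ may pivot at $v_3^i$, contributing $-1$ at $v_8^i$ while involving no intermediate edge at all. So the count only yields \emph{some} rotation deleting an edge at $v_8^i$, not a rotation deleting an intermediate edge $v_8^iu$ with $u\in\{v_5^i,v_6^i\}$, let alone one whose created edge has both endpoints in $V_i$ and lies outside $T_k'\setminus T_k$. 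The paper's route, which your plan is missing, is a connectivity/non-crossing argument at two specific moments: just after the unique removal of $v_7^iv_8^i$, the validity of the rotation $v_7^iv_8^i\rightsquigarrow v_5^iv_7^i$ forces a path from $v_8^i$ to $v_5^i$ on the $C_i$-side whose first edge cannot cross the newly present $v_5^iv_7^i$, hence is an intermediate edge $v_8^iu$ with $u\in\{v_5^i,v_6^i\}$; symmetrically, just before re-inserting $v_7^iv_8^i$ from $v_1^{i+1}v_6^{i+1}$ the tree must contain an intermediate edge $v_2^{i+1}w$ with $w\in\{v_5^{i+1},v_6^{i+1}\}$. The two rotations of item (5) are then obtained by tracking how these two specific edges are later removed and earlier created, using the persistence of $v_5^iv_7^i$ and $v_1^{i+1}v_6^{i+1}$ together with the failure of (3) and (4). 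This ``existence at a specific time'' information is invisible to a global degree count, so the classification step you flag as the principal obstacle is not a technical leftover: without importing an argument of this kind, your plan cannot be completed.
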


\begin{proof}   
    Assume by contradiction that $\S$ does not contain such a rotation for an integer $i$ and let us denote by $e$ the edge $v_7^iv_8^i=v_1^{i+1}v_2^{i+1}$ (which is in $C_i\cap C_{i+1}$). 
    Recall that all the common chords of $T_k$ and $T_k'$ are rotated during the sequence. 
    Let $T$ be the first tree in $\S$ that does not contain $e$ and $e \rightsquigarrow e'$ be the roation applied to obtain $T$.
    Since (3) does not hold, $e'$ must be an edge of $T_k'$. 
    Thus, $e'$ is either $v_2^{i+1}v_4^{i+1}$ or $v_5^iv_7^i$, say $v_5^iv_7^i$ by symmetry. 
    Since (1) does not hold, the flip $e\rightsquigarrow e'$ is in $\S_0$, hence $v_5^iv_7^i$ is in every tree obtained after $T$ during $\S$.

    Let us prove that $e$ is rotated exactly once.
    Assume by contradiction that it is rotated a second time (after being added back) into an edge $f$ that is in $T_k' \setminus T_k$ since (3) does not hold. Since (1) does not hold and because of the existence of $e'$, $f=v_2^{i+1}v_4^{i+1}$, and every tree obtained afterwards contains $v_5^iv_7^i$ and $v_2^{i+1}v_4^{i+1}$.
    Since $e\in T'_k$, there is an edge $e''$ rotated into $e$ after creating $v_2^{i+1}v_4^{i+1}$, and let $T''$ be the tree obtained from $T_k$ just before this rotation takes place. Since $v_5^iv_7^i$ and $v_2^{i+1}v_4^{i+1}$ are in $T''$, $e''$ is neither $v_3^iv_8^i$ nor $v_1^{i+1}v_6^{i+1}$. So, $e'' \notin T_k \setminus T_k'$, which contradicts (4).
    So, from now on, we will assume that the edge $e$ is removed and added exactly once.
    
    Let $T'$ be the tree in $\S$ before adding back $e$ and $e^* \rightsquigarrow e$ be the rotation applied at that step.
    Since (4) does not hold, $e^* \in T_k\setminus T'_k$, hence $e^*$ is either $v_3^iv_8^i$ or $v_1^{i+1}v_6^{i+1}$. Since $T'$ appears after $T$ in $\S$, $v_5^iv_7^i$ is in $T'$ and then $T'$ cannot contain $v_3^iv_8^i$, and then $e^* = v_1^{i+1}v_6^{i+1}$.
    Since (2) does not hold, $e^*\rightsquigarrow e$ is in $\S_0$ and $e^* = v_1^{i+1}v_6^{i+1}$ belongs to all the trees before $T'$ in $\S$.

    Since $v_1^{i+1}v_6^{i+1}$ and $e$ are in every tree obtained before $T$ during $\S$, and $e \rightsquigarrow v_5^iv_7^i$ is performed to obtain $T$, there is a path in $T$ connecting $\{v_1^{i+1}, v_6^{i+1} \}$ and $\{v_5^i, v_7^i \}$.
    Since this path is in the tree obtained before $T$, this path does not cross $v_7^iv_8^i$, and is not included in $C_{i+1}$.
    So the path is included in $C_{i+1}$ and $T$ contains an edge $v_8^iu$ such that $u \in \{v_5^i, v_6^i\}$.
    This edge is not in $T_k$ nor $T_k'$, thus it must be rotated during $\S$ after obtaining $T$. 
    Since (4) does not hold, it is not rotated into $e$.
    And, since $v_5^iv_7^i$ is in all the trees  after $T$ during $\S$ and is the only edge of $T_k' \setminus T_k$ that can share an endpoint with $v_8^iu$, the edge $v_8^iu$ cannot be rotated into an edge of $T_k' \setminus T_k$.
    Since $e$ is removed exactly once and (4) does not hold, there either $v_6^{i+1}v_1^{i+1}$ or $e$ belong to all the trees obtained after $T$.
    So $v_8^iu$ is rotated into an edge $b$ that is not in $T_k' \setminus T_k$, with both endpoints in $V_i$.
    
    Likewise, $v_5^iv_7^i$ and $e$ are in every tree obtained after $T'$ during $\S$, and $v_5^iv_7^i$ and $v_6^{i+1}v_1^{i+1}$ are in $T'$.
    So $T'$ contains an edge $v_2^{i+1}w$ such that $w \in \{v_5^{i+1}, v_6^{i+1}\}$.
    This edge is not in $T_k$ nor in $T'_k$, thus it must have been added during $\S$. 
    Since (3) does not hold, it has not been added by rotating $e$.
    And, since $v_6^{i+1}v_1^{i+1}$ is in each tree obtained before $T'$ during $\S$ and is the only edge of $T_k \setminus T_k'$ that can share an endpoint with $v_2^{i+1}w$, $v_2^{i+1}w$ cannot have been added by rotating an edge of $T_k \setminus T_k'$.
    Thus, $v_2^{i+1}w$ has been added by rotating an edge that is not in $T_k \setminus T_k'$, with both endpoints in $V_i$, which contradicts (5).
\end{proof}

Note that for each $i\in[1,k-1]$, several of the previous cases may arise. For $p\in[1,5]$, denote by $n_p$ the number of times case $(p)$ arises. The following lemma shows that all the rotations given by Lemma~\ref{cl:rotspecial} are pairwise distinct. This already gives $k-1$ rotations in $\S\setminus \S_0$, and even $k$ under the right conditions.

\begin{lemma}\label{cl:rotk-1}
    There are at least $n_1+n_2+n_3+n_4+2n_5$ rotations in $\S\setminus \S_0$.
\end{lemma}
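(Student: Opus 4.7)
The plan is to associate, for each index $i\in[1,k-1]$ and each case $p$ of Lemma~\ref{cl:rotspecial} that arises at $i$, one specific rotation of $\S$ (two if $p=5$), and then argue that all these chosen rotations lie in $\S\setminus\S_0$ and are pairwise distinct. The total count is then $n_1+n_2+n_3+n_4+2n_5$. To avoid ambiguity when several cases hold at the same $i$, I would assign each index to a single canonical case (for instance the smallest $p$ for which the case arises), so that $n_1+\dots+n_5=k-1$ and the case occurrences partition $[1,k-1]$.

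First I would check membership in $\S\setminus\S_0$. For cases (1) and (2) this is stated explicitly. For case (3), the rotation removes the common chord $v_7^iv_8^i$ (which is not in $T_k\setminus T_k'$) and adds an edge not in $T_k'\setminus T_k$, so it is of neither type used to build $\S_0$. Case (4) is symmetric. For case (5), inspecting $T_k$ and $T_k'$ shows that the edges $v_8^iu$ with $u\in\{v_5^i,v_6^i\}$ (and the symmetric ones at $v_1^{i+1}$) are not in $T_k\setminus T_k'$, and the edges $v_2^{i+1}w$ with $w\in\{v_5^{i+1},v_6^{i+1}\}$ (and the symmetric ones at $v_7^i$) are not in $T_k'\setminus T_k$; hence neither case-(5) rotation belongs to $\S_0$.

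Next I would establish distinctness across indices. The key point is that each chosen rotation is ``tagged'' by a geometric feature unique to its index. Every rotation coming from cases (1)-(4) at index $i$ flips the common chord $v_7^iv_8^i$ (either as removed or as added edge); since the common chords for different $i$ are pairwise vertex-disjoint, no such rotation can be shared between two distinct indices. The rotations of case (5) at $i$ involve the interface vertex $v_8^i=v_1^{i+1}$ or $v_7^i=v_2^{i+1}$ as an endpoint of the key edge, and these vertices are specific to $i$; moreover, the key edge in a case-(5) rotation is not $v_7^jv_8^j$ for any $j$, so case (5) rotations cannot collide with cases (1)-(4) at a different index either.

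Finally I would settle distinctness within a single index $i$. Among cases (1)-(4), the four rotations are distinguished by the direction in which $v_7^iv_8^i$ is flipped and by the $T_k$ versus $T_k'$ versus ``intermediate'' status of the other edge: (1) removes $v_7^iv_8^i$ and adds an edge in $T_k'\setminus T_k$, (3) removes $v_7^iv_8^i$ and adds an intermediate edge, and symmetrically for (2) and (4). The two rotations of case (5) involve, respectively, the removed edge $v_8^iu$ and the added edge $v_2^{i+1}w$, and a direct check shows these two edges share no endpoint, so the two rotations are distinct from each other; they also differ from all cases (1)-(4) by the canonical assignment. The main obstacle is this last distinctness step, in particular the subtle possibility that a case-(5) rotation could happen to add or remove $v_7^iv_8^i$ and thereby overlap with case (3) or (4); the canonical single-case assignment together with the explicit structural description of the flipped edges in Lemma~\ref{cl:rotspecial} is what resolves this potential overlap.
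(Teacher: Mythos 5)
Your overall plan is the same as the paper's: verify that every rotation provided by Lemma~\ref{cl:rotspecial} lies outside $\S_0$, then argue pairwise distinctness. Your membership argument (items (1)--(2) are explicitly outside $\S_0$, items (3)--(5) neither remove an edge of $T_k\setminus T_k'$ nor add an edge of $T_k'\setminus T_k$) matches the paper, and your cross-index distinctness argument via the common chords $v_7^iv_8^i$ and the junction vertices is essentially the intended check. The problem is your ``canonical single-case assignment''. The quantities $n_p$ are defined in the paper as the number of indices at which case $(p)$ arises, and the text explicitly allows several cases to arise at the same index $i$; the whole point of the bound $n_1+n_2+n_3+n_4+2n_5$, as opposed to the weaker bound $k-1$, is that \emph{every} arising case contributes. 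The subsequent argument relies on exactly this: if item (5) arises anywhere, or if two items among (1)--(4) arise at the same index, the sum already reaches $k$ and the proof of Theorem~\ref{thm:lb_rotations} is finished, which is what justifies assuming afterwards that (5) never happens and that only one item among (1)--(4) happens per index. By reassigning each index to a single canonical case you change the meaning of the $n_p$ and prove a strictly weaker inequality that no longer supports those reductions.

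Concretely, the step you dodge is the one the statement actually requires: when, say, both case (4) and case (5) arise at the same index $i$, you must show that the three counted rotations are pairwise distinct --- in particular that the case-(5) rotation removing $v_8^iu$ does not coincide with the case-(4) rotation adding $v_7^iv_8^i$, and symmetrically that the case-(5) rotation creating $v_2^{i+1}w$ does not coincide with the case-(3) rotation removing $v_7^iv_8^i$. You identify this potential overlap yourself, but you resolve it only by the redefinition, i.e., by declining to count one of the two cases, whereas the paper's proof asserts (and its later use needs) that \emph{all} rotations provided by Lemma~\ref{cl:rotspecial}, over all indices and all arising items, are pairwise distinct. So, as a proof of Lemma~\ref{cl:rotk-1} with the $n_p$ as defined in the paper, your argument has a genuine gap at the within-index distinctness/multiplicity step; what you prove is a weaker statement that would not carry the rest of Section~\ref{sec:lowerrotation}.
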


\begin{proof}
We first observe that Lemma~\ref{cl:rotspecial} provides rotations not in $\S_0$. This is clear for items (1) and (2): they provide a rotation that does not lie in $\S_0$. Items (3) and (4) provide a rotation that does not involve edges in $\Delta(T_k,T'_k)$, thus not in $\S_0$. Finally, item (5) provides two rotations that also do not involve edges in $\Delta(T_k,T'_k)$, hence again not in $\S_0$.

Moreover, one can easily check that the rotations obtained applying Lemma~\ref{cl:rotspecial} to every $i\in[1,k-1]$ are pairwise distinct. This concludes, since items (1) to (4) each provide one rotation and item (5) provides two of them. 
\end{proof}

The previous lemmas ensure that a minimal rotation sequence $\S$ modifying all common chords contains at least $7k - 1$ rotations and if (5) happens at least once, then $\S$ has length at least $7k$. So from now on, we can assume that (5) never happens, and moreover for every $i\in[1,k-1]$, only one item among (1)-(4) happens. Denote by $r_i$ the corresponding rotation. Note that $r_i$ necessarily impacts the common edge $v_7^iv_8^i$, hence we say that $r_i$ (and by extension $v_7^iv_8^i$) has type $(p)$ when $r_i$ was provided by item $(p)$. It now remains to find a single additional rotation to conclude.

\begin{lemma}\label{lem:rotlastfarcopies}
    For every $i\in[1,k-1]$, each tree obtained during $\S$ contains at least one edge among $\{v_7^iv_8^i,v_3^iv_8^i,v_1^{i+1}v_6^{i+1}, v_5^iv_7^i,v_2^{i+1}v_4^{i+1}\}$. 
\end{lemma}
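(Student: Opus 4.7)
I would proceed by induction on the position within the sequence $\S$. The base case is the initial tree $T_k$, which by direct inspection contains the three edges $e_0 = v_7^iv_8^i$, $e_1 = v_3^iv_8^i$, and $e_2 = v_1^{i+1}v_6^{i+1}$, so the claim holds. For the inductive step, consider a rotation $T \rightsquigarrow T'$ inside $\S$. A rotation modifies only a single edge, so if $T$ contains at least two of the five listed edges, then $T'$ contains at least one. Hence the only delicate case is when $T$ contains exactly one of these five edges, call it $e$, and the rotation removes $e$ and inserts some $f$; the goal becomes to show that $f$ is also among the five.

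The first observation is that the five edges are exactly the chords of $T_k \cup T_k'$ incident to $v_7^i$ or $v_8^i$: checking the neighborhoods of $v_7, v_8$ in $T_1$ and of $v_1, v_2$ in $T_1'$ shows this directly. Since $e \rightsquigarrow f$ forces $f$ to share an endpoint with $e$, two situations arise. If $f$ is also incident to $v_7^i$ or $v_8^i$, then either $f \in T_k \cup T_k'$, in which case $f$ is forced to be among the five by the observation, or $f$ is an intermediate chord at the interface. Otherwise the shared endpoint is the ``other'' endpoint of $e$ (for example $v_3^i$ when $e = e_1$), and then $v_7^i$ or $v_8^i$ loses its only tree-edge from the five; its connectivity in $T'$ must be sustained by an intermediate chord already present in $T$. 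To rule out these remaining scenarios, I would exploit the structural assumption already in force: only one of items (1)--(4) from Lemma~\ref{cl:rotspecial} occurs for the index $i$. Types (1), (2), and (4) exchange one edge of the five directly for another (the new or old tree contains $e_0$), so they preserve the invariant. Type (3) is the only offending case, where $e_0$ is replaced by $f$ with $f \notin T_k' \setminus T_k$. If $f \in \{e_1, e_2\} \subset T_k \setminus T_k'$, we are done; otherwise $f$ is intermediate, and the intermediate chord must later be destroyed before reaching $T_k'$, producing a rotation not already accounted for in $\S_0$ or in the $r_j$'s. This extra rotation would push the length of $\S$ beyond $7k-1$, giving the needed contradiction with minimality.

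The main obstacle is the bookkeeping in this final step: precisely matching intermediate chords at the interface with the rotations that create and destroy them, and arguing that at least one such rotation is not already captured by $\S_0$ or by the unique $r_j$ of each interface. This requires combining the non-crossing constraints around $v_7^i$ and $v_8^i$ with an accounting argument parallel in spirit to the ones in Lemmas~\ref{cl:rotspecial} and~\ref{cl:rotk-1}, adapted to the situation where the tree momentarily contains only an intermediate chord instead of one of the five canonical ones.
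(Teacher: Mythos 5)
Your induction set-up and the observation that the five listed edges are exactly the chords of $T_k\cup T_k'$ incident to $v_7^i$ or $v_8^i$ are fine, but the step that is supposed to close the argument has a genuine gap. First, the proposed ending is logically flawed: producing one more rotation than the $7k-1$ already counted in $\S_0$ and the $r_j$'s does \emph{not} contradict the minimality of $\S$, since a sequence of length $7k$ exists and the whole point of this part of the paper is precisely to exhibit one additional rotation; at best your argument would show ``either the lemma holds or $|\S|\geq 7k$'', which is not the statement of the lemma (which is later used as a tool inside Lemma~\ref{lem:rotlastnodouble}), and even that weaker disjunction still requires the bookkeeping you defer, namely that the destruction of the intermediate edge is distinct from every rotation of $\S_0$ and from every $r_j$. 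Second, your case analysis only discusses rotations touching $e_0=v_7^iv_8^i$ (types (1)--(4)), but the dangerous configurations also include a rotation removing one of the \emph{other} four edges while it is the unique representative present: for instance $v_5^iv_7^i\rightsquigarrow v_4^iv_5^i$ is a legal rotation, may even belong to $\S_0$ (as the last creation of $v_4^iv_5^i$), and matches none of items (1)--(5), so the hypothesis that only one of (1)--(4) occurs gives you nothing there. (Also, the connectivity of $v_7^i$ or $v_8^i$ need not be ``sustained by an intermediate chord'': a border edge such as $v_4^iv_7^i$ suffices, so that sub-argument does not go through as stated.)

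The paper avoids all of this by not doing a step-by-step induction at all: it only tracks the common chord $v_7^iv_8^i$ and exploits the definition of $\S_0$ (first removal of each edge of $T_k\setminus T_k'$, last creation of each edge of $T_k'\setminus T_k$) to get persistence over whole time intervals. If $v_7^iv_8^i$ is deleted by a rotation of $\S_0$, that rotation creates $v_5^iv_7^i$ or $v_2^{i+1}v_4^{i+1}$ for the \emph{last} time, so that edge is present in every subsequent tree. If it is deleted by $r_i$ (necessarily of type (1) or (3)), then, since no rotation of type (2) or (4) exists, the chord must be recreated by a rotation of $\S_0$, which removes $v_3^iv_8^i$ or $v_1^{i+1}v_6^{i+1}$ for the \emph{first} time; hence that edge is present in every tree before the recreation, covering the gap, and later deletions of $v_7^iv_8^i$ fall back into the first case. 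This interval argument is exactly the ingredient your local induction is missing, and it settles the troublesome scenarios (such as the lone $v_5^iv_7^i$ being removed) without any extra counting.
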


\begin{proof}
Observe that the chord $v_7^iv_8^i$ can only be deleted by a rotation in $\S_0$, or by $r_i$ (if the chord has type (1) or (3)). In the first case, the rotation creates $e'\in\{v_5^iv_7^i,v_2^{i+1}v_4^{i+1}\}$, and by definition of $\S_0$, the $e'$ is not deleted anymore afterwards. 

Assume that at some point, the edge $v_7^iv_8^i$ is deleted because of $r_i$. Then, $v_7^iv_8^i$ has to be created later, and it must be using a rotation from $\S_0$ (because $v_7^iv_8^i$ does not have type (2) nor (4)), therefore flipping $e''\in\{v_3^iv_8^i,v_1^{i+1}v_6^{i+1}\}$. By definition of $\S_0$, all the trees obtained before recreating $v_7^iv_8^i$ contain $e''$.  Now afterwards, all the rotations involving $v_7^iv_8^i$ are in $\S_0$, hence we can conclude using the previous case.
\end{proof}

\begin{lemma}\label{lem:rotlastnodouble}
    $\S\setminus \S_0$ contains a rotation that we have not already counted in Lemma~\ref{cl:rotk-1}. 
\end{lemma}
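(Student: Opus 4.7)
The plan is to exhibit one rotation in $\S\setminus\S_0$ distinct from the $k-1$ rotations $r_1,\dots,r_{k-1}$ produced by Lemma~\ref{cl:rotk-1} (recall that in the current regime $n_5 = 0$, so $n_1+n_2+n_3+n_4 = k-1$). Combined with $|\S_0| = 6k$, this yields $|\S| \geq 7k$ and completes the induction.

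My approach would be to zoom in on the first interface and examine the common chord $e := v_7^1v_8^1$ alongside the five edges
\[
E_1 = \{e,\ v_3^1v_8^1,\ v_1^2v_6^2,\ v_5^1v_7^1,\ v_2^2v_4^2\}
\]
from Lemma~\ref{lem:rotlastfarcopies}. Since the four edges of $E_1 \setminus\{e\}$ are precisely the chords of $\Delta(T_k,T_k')$ incident to $\{v_7^1,v_8^1\}$, they contribute exactly four rotations to $\S_0$, and $r_1$ accounts for one more rotation touching $\{v_7^1,v_8^1\}$. I would then split on the type $t_1 \in \{1,2,3,4\}$ of $r_1$: in each case, adapting the connectivity argument from the proof of Lemma~\ref{cl:rotspecial} to the present setting forces the existence, right after (or right before) $r_1$, of an auxiliary edge of the form $v_8^1u$ with $u\in\{v_5^1,v_6^1\}$ (or symmetrically $v_7^1u$ with $u\in\{v_3^1,v_4^1\}$) that lies outside $T_k\cup T_k'$. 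Such an edge must be both created and later removed in $\S$, and a careful inspection using Lemma~\ref{lem:rotlastfarcopies} shows that at least one of the two rotations handling it is neither in $\S_0$ nor equal to any $r_j$.

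The main obstacle is the distinctness verification. The extracted rotation involves an edge outside $\Delta(T_k,T_k')$, so it is not in $\S_0$; its non-interface endpoint $u$ lies in $C_1\setminus\{v_7^1,v_8^1\}$, which is disjoint from $\{v_7^j,v_8^j\}$ (the only vertices of $C_j\cup C_{j+1}$ that belong to $C_1$) for every $j\neq 1$, so the rotation cannot be any $r_j$ with $j\neq 1$; and it differs from $r_1$ by inspection of the edges involved in the case analysis. The delicate part I expect to be the main work is checking that, among the two rotations handling the auxiliary edge, at least one genuinely escapes $\S_0$ in every sub-case---which requires arguing that its companion edge is itself outside $\Delta(T_k,T_k')$, or, if not, that its rotation is not the designated first-removal or last-creation that places it in $\S_0$. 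Symmetrizing with the last interface $i = k-1$ gives a second opportunity to extract such a rotation, which should cover any residual edge case.
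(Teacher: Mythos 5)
Your proposal takes a genuinely different route from the paper, but it has a gap at exactly the step you flag as ``the main work,'' and I do not believe that step can be closed by the local analysis you sketch. In the cases where $r_1$ has type (2), (3) or (4), the connectivity argument from Lemma~\ref{cl:rotspecial} does not automatically hand you an auxiliary edge $v_8^1u$ ($u\in\{v_5^1,v_6^1\}$) one of whose two handling rotations is uncounted. Concretely: if $r_1$ has type (3) or (4), the auxiliary edge incident to the interface can be the very edge created or removed by $r_1$ itself (e.g.\ $r_1\colon v_7^1v_8^1\rightsquigarrow v_7^1v_5^2$, or $v_8^1v_6^1\rightsquigarrow v_7^1v_8^1$), and its other handling rotation can lie in $\S_0$ (its removal can be the last creation of $v_2^2v_4^2$ or $v_5^1v_7^1$, its creation can be the first removal of $v_6^1v_3^1$ or $v_1^2v_6^2$); similarly in type (1) the creation of $v_8^1u$ can be the first removal of a chord of $T_k\setminus T_k'$ at $u$ and its removal the last creation of $v_5^1v_7^1$. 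So every rotation touching your designated auxiliary edge may already be counted, and the case analysis stalls. The deeper obstruction is that the one extra rotation guaranteed by the lemma need not be localizable at the first (or last) interface at all: it can consist of a rotation between two edges lying entirely inside some interior copy $C_p$ and incident to neither $v_7^{p-1}v_8^{p-1}$ nor $v_7^pv_8^p$, so a proof pinned to the neighbourhood of $v_7^1v_8^1$ (with the vague ``symmetrize with $i=k-1$'' fallback, which moreover gives nothing new when $k=2$) cannot succeed in general.

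This is also visible in what your argument does not use: the paper's proof is, at bottom, a reduction to the single-copy bound. It assumes all extra rotations are already counted, uses a pigeonhole over the $k$ copies and the $k-1$ rotations $r_i$ to find a copy $C_p$ with which no $r_i$ interferes, then proves (Claims~\ref{cl:commonCp}, \ref{cl:34Cp} and the claim on chords of $T_k\setminus T_k'$ in $C_p$) that the sequence induces a transformation of the copy $C_p$ using only the six rotations of $\S_0$ attached to it, contradicting the computer-verified Lemma~\ref{cl:T1rotation}. Your proposal never invokes Lemma~\ref{cl:T1rotation}, yet the ``$+1$'' you are after is precisely the gap between $6$ and $7$ rotations for a single copy; any correct proof has to exploit that fact (or re-derive it), which your interface-local counting does not.
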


\begin{proof}

Assume by contradiction that every rotation from $\S\setminus\S_0$ has already been counted in Lemma~\ref{cl:rotk-1}.

For $i\in[1,k-1]$, let $e$ be the edge affected by $r_i$ different from the common chord $v_7^iv_8^i$. We say that $e$ \emph{interferes} with $C_i$ when the endpoint of $e$ not in $\{v_7^i,v_8^i\}$ is on the left of $v_7^iv_8^i$, and that it interferes with $C_{i+1}$ otherwise. Since there are $k-1$ common chords, there is an index $p$ such that no $r_i$ interferes with $C_p$.

\begin{claim}\label{cl:commonCp}
 There is no common chord that is rotated into $v_5^pv_7^p$ or into $v_2^pv_4^p$ and there is no common chord that has been added by rotating $v_3^pv_8^p$ or $v_1^pv_6^p$.   
\end{claim}
\begin{proof}
     Suppose that the common chord $e=v_7^pv_8^p$ is rotated into $v_5^pv_7^p$. 
    Let $T$ be the tree obtained before rotating $e$ into $v_5^pv_7^p$. 
    Since no rotations of type (1) interferes with $C_p$, the rotation $v_7^pv_8^p \rightsquigarrow v_5^pv_7^p$ is in $\S_0$ and $v_5^pv_7^p$ is in every tree obtained after $T$ during $\S$.
    By connectivity, there is an edge $xy \neq v_5^pv_7^p$ in $T$ such that $x \in \{ v_7^p, v_8^p\}$ and $y \in \bigcup_{j \leq i}C_j \setminus \{ v_7^p, v_8^p \}$.
    Since $v_5^pv_7^p$ can be added to $T \setminus \{ e\}$, $xy$ is not in $T_k \cup T_k'$.
    We distinguish two cases whether $x$ is $v_8^p$ or $v_7^p$. \smallskip
    
     \noindent\textbf{Case 1:} $x = v_8^p$. \\ 
    Then, $y \in \{ v_5^p, v_6^p \}$.
    So $xy$ is removed during $\S$ after obtaining $T$.
    Since $v_5^pv_7^p$ is not removed after obtaining $T$, there is a rotation that removes $xy$. This rotation is not in $\S_0$ since it cannot create an edge of $T_k' \setminus T_k$, and it was not counted by Lemma~\ref{cl:rotk-1} since otherwise it would interfere with $C_p$.  \smallskip
    
    \noindent\textbf{Case 2:} $x = v_7^p$. \\ 
    By definition of $p$, $xy$ has not been added by rotating a common chord with both endpoints in $C_p$. Moreover, by Lemma~\ref{lem:rotlastfarcopies}, $xy$ has not been added by rotating a common chord with an endpoint in $C_j$ with $j\leq p-2$.
    Therefore, we may assume that $xy$ has been added by a rotation of $\S_0$ (otherwise, we found an additional rotation). Since $xy\notin T'_k$, this rotation deleted a chord of $T_k \setminus T_k'$. In particular, since $x = v_7^p$ is not an endpoint of a chord of $T_k \setminus T_k'$, $y$ is an endpoint of a chord of $T_k \setminus T_k'$.

    Similarly, since $xy\notin T'_k$,  the edge $xy$ is removed during $\S$ after obtaining $T$. Applying the same argument, we get that $xy$ must be removed by a rotation from $\S_0$ that creates an edge $e'$ from $T'_k\setminus T_k$. Moreover, $e'$ is not $v_5^pv_7^p$ (since $v_5^pv_7^p$ is not removed after obtaining $T$) nor $v_1^{p+1}v_6^{p+1}$ (by connectivity).
    
    Thus, this rotation rotates around $y$, hence we also get that $y$ is an endpoint of a chord of $T_k' \setminus T_k$. This is a contradiction since chords of $T_k\setminus T'_k$ and of $T'_k\setminus T_k$ do not share any endpoint. \smallskip

    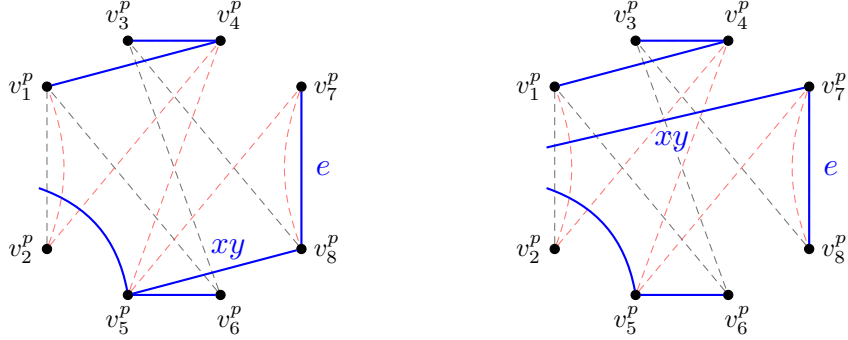
\begin{figure}[hbtp]
        \begin{center}
        \tikzstyle{vertex}=[circle,draw, minimum size=7pt, scale=0.5, inner sep=1pt, fill = black]
        \tikzstyle{fleche}=[->,>=latex] 
        \tikzstyle{labell}=[text opacity=1, scale =1]
        \begin{tikzpicture}[scale=0.9]

            \node (a1) at (-110:2) [vertex] {};
            \node (a2) at ({2*cos(160)}, -1.2) [vertex] {};
            \node (a3) at (-70:2) [vertex] {};
            \node (a4) at ({2*cos(160)}, 1.2) [vertex] {};
            \node (a5) at ({2*cos(20)}, -1.2) [vertex] {};
            \node (a6) at (110:2) [vertex] {};
            \node (a7) at ({2*cos(20)}, 1.2) [vertex] {};
            \node (a8) at (70:2) [vertex] {};

            \node (b1) at (-110:2.4) [labell] {$v_5^p$};
            \node (b2) at ({2.4*cos(160)}, -1.2) [labell] {$v_2^p$};
            \node (b3) at (-70:2.4) [labell] {$v_6^p$};
            \node (b4) at ({2.4*cos(160)}, 1.2) [labell] {$v_1^p$};
            \node (b5) at ({2.4*cos(20)}, -1.2) [labell] {$v_8^p$};
            \node (b6) at (110:2.4) [labell] {$v_3^p$};
            \node (b7) at ({2.4*cos(20)}, 1.2) [labell] {$v_7^p$};
            \node (b8) at (70:2.4) [labell] {$v_4^p$};
            
            \draw[opacity=0.55, densely dashed] (a1) to (a3);
            \draw[opacity=0.55, densely dashed] (a2) to (a4);
            \draw[opacity=0.55, densely dashed] (a6) to (a8);
            \draw[opacity=0.55, densely dashed] (a5) to (a7);
            \draw[opacity=0.55, densely dashed] (a5) to (a6);
            \draw[opacity=0.55, densely dashed] (a3) to (a6);
            \draw[opacity=0.55, densely dashed] (a3) to (a4);

            \draw[opacity=0.55, densely dashed] (a2) to [bend right = 20](a4) [red];
            \draw[opacity=0.55, densely dashed] (a7) to [bend right = 20](a5) [red];

            \draw[opacity=0.55, densely dashed] (a1) to (a7) [red];
            \draw[opacity=0.55, densely dashed] (a2) to (a8) [red];
            \draw[opacity=0.55, densely dashed] (a1) to (a8) [red];

            \draw[thick, blue] (a1) to (a5);
            \draw[thick, blue] (a1) to (a3);
            \draw[thick, blue] (a6) to (a8);
            \draw[thick, blue] (a1) to[bend right] (-2,-0.3);
            \draw[thick, blue] (a5) to (a7);
            \draw[thick, blue] (a8) to (a4);

           \node (b8) at (0.8,-1.2) [labell, blue, scale=1.2] {$xy$};
            \node (b8) at (2.2, 0) [labell, blue, scale=1.2] {$e$};
            
        \tikzset{xshift=7.5cm}

            \node (a1) at (-110:2) [vertex] {};
            \node (a2) at ({2*cos(160)}, -1.2) [vertex] {};
            \node (a3) at (-70:2) [vertex] {};
            \node (a4) at ({2*cos(160)}, 1.2) [vertex] {};
            \node (a5) at ({2*cos(20)}, -1.2) [vertex] {};
            \node (a6) at (110:2) [vertex] {};
            \node (a7) at ({2*cos(20)}, 1.2) [vertex] {};
            \node (a8) at (70:2) [vertex] {};

            \node (b1) at (-110:2.4) [labell] {$v_5^p$};
            \node (b2) at ({2.4*cos(160)}, -1.2) [labell] {$v_2^p$};
            \node (b3) at (-70:2.4) [labell] {$v_6^p$};
            \node (b4) at ({2.4*cos(160)}, 1.2) [labell] {$v_1^p$};
            \node (b5) at ({2.4*cos(20)}, -1.2) [labell] {$v_8^p$};
            \node (b6) at (110:2.4) [labell] {$v_3^p$};
            \node (b7) at ({2.4*cos(20)}, 1.2) [labell] {$v_7^p$};
            \node (b8) at (70:2.4) [labell] {$v_4^p$};
            
            \draw[opacity=0.55, densely dashed] (a1) to (a3);
            \draw[opacity=0.55, densely dashed] (a2) to (a4);
            \draw[opacity=0.55, densely dashed] (a6) to (a8);
            \draw[opacity=0.55, densely dashed] (a5) to (a7);
            \draw[opacity=0.55, densely dashed] (a5) to (a6);
            \draw[opacity=0.55, densely dashed] (a6) to (a3);
            \draw[opacity=0.55, densely dashed] (a3) to (a4);

            \draw[opacity=0.55, densely dashed] (a2) to [bend right = 20](a4) [red];
            \draw[opacity=0.55, densely dashed] (a7) to [bend right = 20](a5) [red];

            \draw[opacity=0.55, densely dashed] (a1) to (a7) [red];
            \draw[opacity=0.55, densely dashed] (a2) to (a8) [red];
            \draw[opacity=0.55, densely dashed] (a1) to (a8) [red];

            \draw[thick, blue] (a7) to (-2,0.3);
            \draw[thick, blue] (a1) to (a3);
            \draw[thick, blue] (a6) to (a8);
            \draw[thick, blue] (a1) to[bend right] (-2,-0.3);
            \draw[thick, blue] (a5) to (a7);
            \draw[thick, blue] (a8) to (a4);

            \node (b8) at (-0.15,0.4) [labell, blue, scale=1.2] {$xy$};
            \node (b8) at (2.2, 0) [labell, blue, scale=1.2] {$e$};

        \end{tikzpicture}
        \end{center}
        \caption{The tree $T$ (in blue) on the set $C_p$ obtained before rotating $e$ into $v_5^pv_7^p$ during $\S$ in Claim~\ref{cl:commonCp}. By connectivity, $T$ contains an edge $xy$ connecting $\bigcup_{j>p}C_j$ with the rest of the graph. On the left, $x = v_8^p$ and on the right, $x = v_7^p$. }
        \label{fig:rotintocp}
    \end{figure}
    
    Hence, there is no common chord that is rotated into $v_5^pv_7^p$, and into $v_2^pv_4^p$ by symmetry. Moreover, up to exchanging $T_k$ and $T_k'$, this also proves that there is no common chord that has been added by rotating $v_3^pv_8^p$ or $v_1^pv_6^p$.
\end{proof}

\begin{claim}\label{cl:34Cp}
The rotations of $\S\setminus\S_0$ involving the common chords in $C_p$ have type (3) or (4).
\end{claim}
\begin{proof}
    By symmetry, we consider the case $p<k$ and the common chord $e=v_7^pv_8^p$. Assume by contradiction that $r_p$ has type (1) or (2).
    
    Then, since the only rotations involving $e$ are either $r_p$ or in $\S_0$, $e$ is rotated into a chord of $T_k' \setminus T_k$ and added by rotating a chord of $T_k \setminus T_k'$.
    By Claim~\ref{cl:commonCp}, $e$ is rotated into $v_2^{p+1}v_4^{p+1}$ and added by rotating $v_1^{p+1}v_6^{p+1}$. Moreover, at least one of these two rotations must lie in $\S_0$, say $e\rightsquigarrow v_2^{p+1}v_4^{p+1}$ (the other case being similar). In particular, all the trees obtained during $\S$ contain $v_2^{p+1}v_4^{p+1}$ afterwards. However, this prevents to create again the chord $v_1^{p+1}v_6^{p+1}$, and thus to recreate $e$, a contradiction.
\end{proof}

\begin{claim}
The chords of $T_k \setminus T_k'$ with both endpoints in $C_p$ are rotated into edges with both endpoints in $C_p$.
\end{claim}

\begin{proof}
Assume by contradiction that $\S$ rotates a chord $uv$ of $T_k \setminus T_k'$ into a chord $vw$, where $u,v \in C_p$ and $w \notin C_p$.
Let $T$ be the tree obtained after performing $uv \rightsquigarrow vw$ during $\S$.
By symmetry, say $w$ is in $\bigcup_{j>p}C_j$. We now distinguish two cases depending on the type of $r_p$ ((3) or (4) by Claim~\ref{cl:34Cp}). \smallskip

    \noindent\textbf{Case 1:} $r_p$ has type (3). \\ 
    By Claim~\ref{cl:commonCp}, $v_7^pv_8^p$ is added back by a rotation $r$ in $\S_0$ which removes $v_1^{p+1}v_6^{p+1}$ for the first time in $\S$. 
    In particular, $vw$ cannot be $v_1^{p+1}v_6^{p+1}$, nor can cross $v_1^{p+1}v_6^{p+1}$, hence $w \in \{ v_5^{p+1}, v_6^{p+1} \}$.
    Since $v$ is in $C_p$ and an endpoint of a chord in $T_k \setminus T_k'$, $vw$ either is $v_8^pv_5^p$ or crosses $v_7^pv_8^p$. 
    Thus, $vw$ is not a chord of $T_k'$ and must be removed before performing $r$ during $\S$. 
    However, $vw$ cannot be rotated into an chord of $T_k'$ before performing $r$ during $\S$ (since $v$ is not an endpoint of a chord in $T'_k\setminus T_k$, and the only chords of $T'_k\setminus T_k$ containing $w$ cross $v_1^{p+1}v_6^{p+1}$). 
    Thus, $vw$ is rotated into a chord not in $T_k'$, and we found an additional rotation.\smallskip

    \noindent\textbf{Case 2:} $r_p$ has type (4). \\ 
    By Claim~\ref{cl:commonCp}, $v_7^pv_8^p$ is removed by a rotation $r$ in $\S_0$ which adds $v_2^{p+1}v_4^{p+1}$ for the last time in $\S$. 
    So $vw$ cannot be rotated into $v_2^{p+1}v_4^{p+1}$ and does not cross $v_2^{p+1}v_4^{p+1}$, hence $w \in \{ v_3^{p+1}, v_4^{p+1} \}$. In particular, neither $v$ nor $w$ is an endpoint of a chord of $T_k'\setminus T_k$ except maybe $v_2^{p+1}v_4^{p+1}$. Therefore, $vw$ cannot be rotated into a chord of $T_k' \setminus T_k$ nor into $v_1^pv_2^p$ by choice of $p$.
    
    This shows that $vw$ must be rotated into $v_7^pv_8^p$, hence $v = v_8^p$ and $u = v_3^p$ since $w \notin C_p$ and $uv \in T_k \setminus T_k'$. Since $vw\notin T_k\setminus T'_k$, the rotation $vw\rightsquigarrow v_7^pv_8^p$ is precisely $r_p$. 

    Observe that the rotations $r$, $uv\rightsquigarrow vw$ and $r_p$ must then occur in that order in $\S$. Let $T$ be the tree obtained after performing $r$. By construction, $uv\rightsquigarrow vw$ is a rotation from $\S_0$, hence all the trees obtained until $T$ contain $uv$. In particular, $T$ contains a path connecting $\{u,v\}$ and $\{v_2^{p+1},v_4^{p+1}\}$. Since $T$ is obtained using $r$, this path must be included in $C_{p+1}$ by connectivity, and cannot cross nor contain $v_7^pv_8^p$. The first edge of this path must thus be $v_1^{p+1}z$ for some $z\in\{v_3^{p+1},v_4^{p+1}\}$.

    This edge is not in $T'_k$, hence it has to be removed by a rotation $r'\in\S$. Since every tree obtained after $T$ contain $v_2^{p+1}v_4^{p+1}$, the edge $v_1^{p+1}z$ cannot be rotated into a chord of $T'_k\setminus T_k$ and $r'\notin \S_0$. Moreover, we have $r'\neq r_p$, hence $r'$ is a rotation from $\S\setminus \S_0$ that was not counted by Lemma~\ref{cl:rotk-1}, which concludes.
\end{proof}

Consider the three edges $e_1, e_2, e_3$ obtained by performing the rotations of $\S_0$ which remove the chords of $T_k \setminus T_k'$ with both endpoints in $C_p$ for the first time.
Since these edges have been obtained by rotating chords of $T_k \setminus T_k'$, they are not chords of $T_k' \setminus T_k$.
By Claim~\ref{cl:commonCp}, $e_1,e_2,e_3$ are not common chords either.
And finally, these edges are not common border edges (otherwise, we applied a rotation before that rotated the common border edge into a chord that is not in $T'_k \setminus T_k$).
Thus, these edges have to be removed during $\S$. 
By construction of $C_p$, $e_1,e_2,e_3$ are not rotated into common chords.
Hence, the rotations that remove these edges are in $\S_0$, thus add a chord of $T_k' \setminus T_k$ for the last time.
Since the common chords with endpoints in $C_p$ are either rotated into $v_2^{p+1}v_4^{p+1}$ or $v_5^{p-1}v_7^{p-1}$, or added by rotating $v_1^{p+1}v_6^{p+1}$ or $v_8^{p-1}v_3^{p-1}$, the edges $e_1,e_2,e_3$ cannot be rotated into chords of $T_k' \setminus T_k$ with both endpoints in $C_{p-1}$ or with both endpoints in $C_{p+1}$.
So the edges $e_1, e_2, e_3$ are rotated into distinct chords of $T_k' \setminus T_k$ with both endpoints in $C_p$.
This gives a rotation sequence from $T_k \cap C_p$ to $T_k' \cap C_p$ using $6$ rotations, which contradicts Lemma~\ref{cl:T1rotation}.
\end{proof}

The previous lemmas ensures that a minimal rotation sequence $\S$ contains at least $7k$ rotations, which completes the proof of Theorem~\ref{thm:lb_rotations}.

\paragraph{Acknowledgments.} The first and third authors would like to thank Valentin Gledel for interesting discussions on the problems on an earlier stage of this project.

 \bibliographystyle{plain}
\bibliography{bibfile.bib}
\end{document}